\documentclass[aps,pra,twocolumn,superscriptaddress]{revtex4-1}

\usepackage[utf8]{inputenc}  
\usepackage[T1]{fontenc}    
\usepackage{hyperref}       
\usepackage{url}          
\usepackage{booktabs}        
\usepackage{amsfonts}        
\usepackage{nicefrac}        
\usepackage{microtype}       
\usepackage{bm}
\usepackage{braket}
\usepackage{graphicx}
\usepackage{amsthm}
\usepackage{amssymb}
\usepackage{amsmath}
\usepackage{bbm}
\usepackage{lipsum,mwe}
\usepackage{makecell}
\usepackage{array}

\usepackage[ruled,linesnumbered]{algorithm2e}

\usepackage{appendix}

\newtheoremstyle{mytheoremstyle}  
  {\topsep}  
  {\topsep}  
  {\normalfont}  
  {}  
  {\itshape}  
  {.}  
  {5pt plus 1pt minus 1pt}  
  {{\thmname{#1}\thmnumber{ #2}\thmnote{ (#3)}}}

\theoremstyle{mytheoremstyle}

\usepackage{adjustbox}
\usepackage[normalem]{ulem}
\newtheorem{theorem}{Theorem}
\newtheorem{lemma}{Lemma}
\newtheorem{definition}{Definition}
\newtheorem{assu}{Assumption}

\newtheorem{construct}{Construction Rule}
\newtheorem{property}{Property}

\newtheorem*{theorem-non}{Theorem}
\usepackage{mathtools}
\usepackage{bbm}
\usepackage[explicit]{titlesec}
\usepackage{nccmath}
\usepackage{xcolor}

\usepackage[ruled,linesnumbered]{algorithm2e}

\usepackage{tkz-tab}
\usepackage{subcaption}
\usepackage{wrapfig}
\usepackage{enumitem}
\captionsetup{justification= Justified,singlelinecheck=false}

\usepackage{caption}
\usepackage[scaled=.92]{helvet}   
\captionsetup[figure]{
  font=sf,  
  labelfont=bf,  
  size=scriptsize,  
}

\DeclareMathOperator{\Tr}{Tr}

\DeclareMathOperator{\KL}{KL}
\DeclareMathOperator{\supp}{supp}

\DeclareMathOperator{\diag}{diag}

\DeclareMathOperator{\Fnorm}{F}

\DeclareMathOperator{\haar}{Haar}
\DeclareMathOperator{\swap}{SWAP}

\DeclareMathOperator{\Psd}{Psd}

\allowdisplaybreaks[4]

\begin{document}
	
	\title{Transition Role of Entangled Data in Quantum Machine Learning}
	
	\author{Xinbiao Wang}
	\affiliation{
		Institute of Artificial Intelligence, School of Computer Science, Wuhan University, Hubei 430072, China
	}
	\affiliation{
		National Engineering Research Center for Multimedia Software, Wuhan University, Hubei 430072, China
	}
	\affiliation{JD Explore Academy, Beijing 101111, China}
	
	\author{Yuxuan Du}
	\email{duyuxuan123@gmail.com}
	\affiliation{School of Computer Science and Engineering, Nanyang Technological University, Singapore 639798, Singapore}
	\affiliation{JD Explore Academy, Beijing 101111, China}

	\author{Zhuozhuo Tu}
	\affiliation{School of Computer Science, Faculty of Engineering, University of Sydney, NSW 2008, Australia}
	\author{Yong Luo}
	\email{yluo180@gmail.com}
	\affiliation{
		Institute of Artificial Intelligence, School of Computer Science, Wuhan University, Hubei 430072, China
	}
	\affiliation{
		National Engineering Research Center for Multimedia Software, Wuhan University, Hubei 430072, China
	}
	
	\author{Xiao Yuan}
	\affiliation{Center on Frontiers of Computing Studies, Peking University, Beijing 100871, China}
	\affiliation{School of Computer Science, Peking University, Beijing 100871, China}
	\author{Dacheng Tao}
	\email{dacheng.tao@ntu.edu.sg}
	\affiliation{School of Computer Science and Engineering, Nanyang Technological University, Singapore 639798, Singapore}

	\maketitle

	\noindent\textbf{ABSTRACT} 
	
	\noindent\textbf{Entanglement serves as the resource to empower quantum computing. Recent progress has highlighted its positive impact on learning quantum dynamics, wherein the integration of entanglement into quantum operations or measurements of quantum machine learning (QML) models leads to substantial reductions in training data size, surpassing a specified prediction error threshold.  However, an analytical understanding of how the entanglement degree in data affects model performance remains elusive. In this study, we address this knowledge gap by establishing a quantum no-free-lunch (NFL) theorem for learning quantum dynamics using entangled data. Contrary to previous findings, we prove that the impact of entangled data on prediction error exhibits a dual effect, depending on the number of permitted measurements. With a sufficient number of measurements, increasing the entanglement of training data consistently reduces the prediction error or decreases the required size of the training data to achieve the same prediction error. Conversely, when few measurements are allowed, employing highly entangled data could lead to an increased prediction error. The achieved results provide critical guidance for designing advanced QML protocols, especially for those tailored for execution on early-stage quantum computers with limited access to quantum resources.}
	
	\medskip
	\noindent\textbf{INTRODUCTION} 
	
	Quantum entanglement, an extraordinary characteristic of the quantum realm, drives the superiority of quantum computers beyond classical computers \cite{feynman2018simulating}. Over the past decade, diverse quantum algorithms leveraging entanglement have been designed to advance cryptography \cite{shor1999polynomial,lanyon2007experimental} and optimization \cite{deutsch1992rapid,grover1996fast,harrow2009quantum,lloyd2014quantum,du2020quantum_dp}, delivering runtime speedups over classical approaches. Motivated by the exceptional abilities of quantum computers and the astonishing success in machine learning, a nascent  frontier known as quantum machine learning (QML) has emerged \cite{schuld2015introduction, biamonte2017quantum,ciliberto2018quantum,dunjko2018machine,li2022recent,tian2023recent,cerezo2022challenges}, seeking to outperform classical models in specific learning tasks \cite{peruzzo2014variational,  moll2018quantum,havlivcek2019supervised,abbas2021power,huang2021power,liu2021rigorous,wang2021towards,du2021exploring,du2022power,du2023problem}. Substantial progress has been made in this field, exemplified by the introduction of QML protocols that offer provable advantages in terms of query or sample complexity for learning quantum dynamics \cite{huang2021information,buadescu2021improved,aharonov2022quantum,chen2022exponential,huang2022quantum, fanizza2022learning}, as a fundamental problem toward understanding the laws of nature. Most of these protocols share a common strategy to gain advantages: the incorporation of entanglement into quantum operations and measurements, leading to reduced complexity. Nevertheless, an overlooked aspect in prior works is the impact of incorporating entanglement in quantum input states, or entangled data, on the advancement of QML in learning quantum dynamics. Due to the paramount role of data in learning \cite{polyzotis2021can,jakubik2024data,jarrahi2023principles,whang2023data,zha2023data,zha2023data_survey} as well as entanglement in quantum computing, addressing this question will significantly enhance our comprehension of the capabilities and limitations of QML models. 
	
	\begin{figure*}[htbp]
	\centering
	\includegraphics[width=176mm]{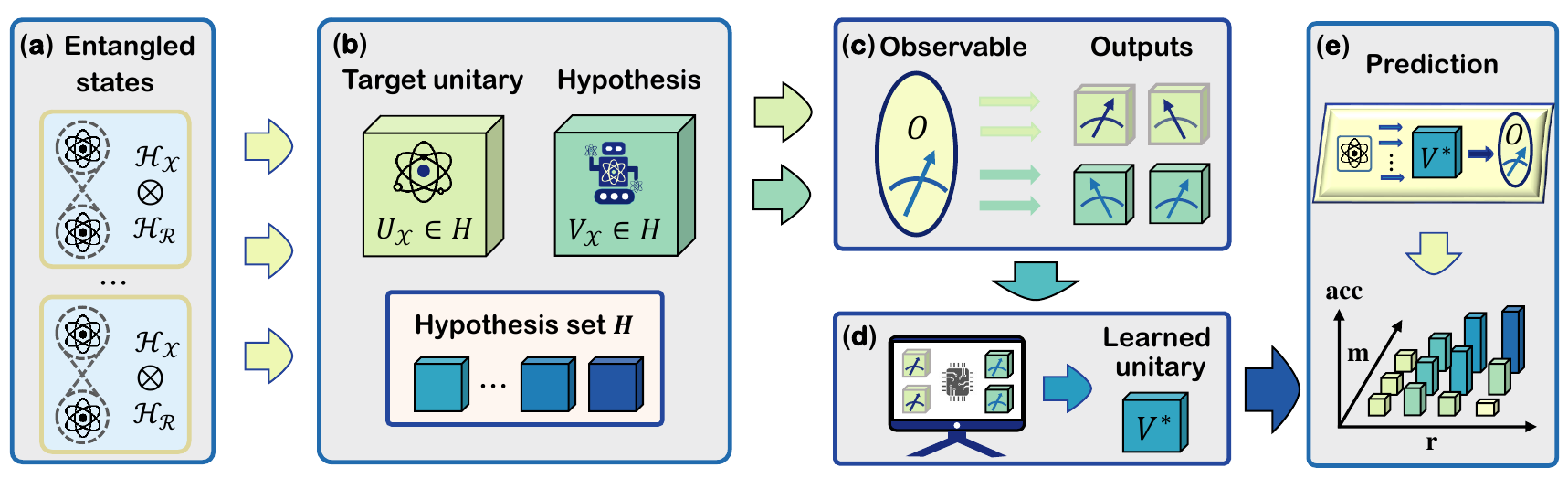}
  \caption{\fontsize{7pt}{8.4pt}{\textbf{Illustration of quantum NFL setting with the entangled data.} The goal of the quantum learner is to learn a unitary  $\bm{V}_{\mathcal{X}}$  that can accurately predict the output of the target unitary  $\bm{U}_{\mathcal{X}}$ under a fixed observable $\bm{O}$, where the subscript $\mathcal{X}$ refers to the quantum system in which the operator $\bm{O}$ act on. The learning process is as follows. (a) A total number of $N$ entangled bipartite quantum states living in Hilbert space $\mathcal{H}_{\mathcal{X}} \otimes \mathcal{H}_{\mathcal{R}}$ ($\mathcal{R}$ denotes the reference system) are taken as inputs, dubbed \textit{entangled data}. (b) Quantum learner proceeds incoherent learning. The entangled data \textit{separately} interacts with the target unitary $\bm{U}_{\mathcal{X}}$ (agnostic) and the candidate hypothesis $\bm{V}_{\mathcal{X}}$ extracted from the same Hypothesis set $H$. (c) The quantum learner is restricted to leverage the finite measured outcomes of the observable $\bm{O}$ on the output states of $\bm{U}_{\mathcal{X}}$ and $\bm{V}_{\mathcal{X}}$ to conduct learning. (d) A classical computer is exploited to infer $\bm{V}^*$ that best estimates $\bm{U}_{\mathcal{X}}$ according to the measurement outcomes. For example, in the case of variational quantum algorithms, the classical computer serves as an optimizer to update the tunable parameters of the ansatz $\bm{V}_{\mathcal{X}}$. (e) The learned unitary $\bm{V}^*$ is used to predict the output of unseen quantum states in Hilbert space $\mathcal{H}_{\mathcal{X}}$ under the evolution of the target unitary $\bm{U}_{\mathcal{X}}$ and the measurement of $\bm{O}$. A large Schmidt rank $r$ can enhance the prediction accuracy when combined with a large number of measurements $m$, but may lead to a decrease in accuracy when $m$ is small.} }
	\label{fig:scheme}
\end{figure*}
 
	A fundamental concept in machine learning that characterizes the capabilities of learning models in relation to datasets is the No-Free-Lunch (NFL) theorem \cite{wolpert1997no, ho2002simple,wolf2018mathematical,adam2019no}. The NFL theorem yields a key insight: regardless of the optimization strategy employed, the ultimate performance of models is contingent upon the size and types of training data. This observation has spurred recent breakthroughs in large language models, as extensive and meticulously curated training data consistently yield superior results \cite{brown2020language,ouyang2022training,bai2022training,touvron2023llama,zhao2023survey}. In this regard, establishing the quantum NFL theorem enables us to elucidate the specific impact of entangled data on the efficacy of QML models in learning quantum dynamics. Concretely, the achieved  theorem can shed light on whether the utilization of entangled data empowers QML models to achieve comparable or even superior performance compared to low-entangled or unentangled data, while simultaneously reducing the sample complexity required. Although initial attempts \cite{poland2020no,sharma2022reformulation,zhao2023learning} have been made to establish quantum NFL theorems, they have relied on infinite query complexity, thus failing to address our concerns adequately (See Supplementary Note~1 and Supplementary Note~2 for details). Building upon prior findings on the role of entanglement and the classical NFL theorem, a reasonable speculation is that high-entangled data contributes to the improved performance of QML models associated with the reduced sample complexity, albeit at the cost of using extensive quantum resources to prepare such data that may be unaffordable in the early stages of quantum computing \cite{preskill2018quantum}.

	In this study, we \textit{negate} the above speculation and exhibit the \textit{transition role} of entangled data when QML models incoherently learn quantum dynamics, as shown in Fig.~1. In the incoherent learning scenario, the quantum learner is restricted to utilizing datasets with varying degrees of entanglement to operate on an unknown unitary and inferring its dynamics using the finite measurement outcomes collected under the projective measurement, differing from Ref.~\cite{sharma2022reformulation} in learning problems and training data. The entangled data refers to quantum states that are entangled with a reference system, with the degree of entanglement quantitatively characterized by the Schmidt rank $r$. We rigorously show that within the context of NFL, the entangled data has a \textit{dual effect} on the prediction error according to the number of measurements $m$ allowed. Particularly, with sufficiently large $m$, increasing $r$ can consistently reduce the required size of training data for achieving the same prediction error. On the other hand, when $m$ is small, the train data with large $r$ not only requires a significant volume of quantum resources for states preparation, but also amplifies the prediction error. As a byproduct, we prove that the lower bound of the query complexity for achieving a sufficiently small prediction error matches the optimal lower bound for quantum state tomography with nonadaptive measurements.  To cover a more generic learning setting, we consider the problem of dynamic learning under arbitrary observable by using $\ell$-outcome positive-operator valued measure (POVM) to collect the measurement output. This setting covers the shadow-based learning models \cite{huang2021information, huang2020predicting,elben2023randomized}. We show that the transition role still holds for arbitrary POVM and increasing the possible outcomes $\ell$ could significantly reduce the query complexity. Numerical simulations are conducted to support our theoretical findings. In contrast to the previous understanding that entanglement mostly confers benefits to QML in terms of sample complexity, the transition role of entanglement identified in this work deepens our comprehension of the relation between quantum information and QML, which facilitates the design of QML models with provable advantages.

	\medskip	
	\noindent\textbf{RESULTS} 
	
	\noindent We first recap the task of learning quantum dynamics. Let $\bm{U}\in \mathbb{S}\mathbb{U}(2^n)$ be the target unitary and $\bm{O}\in \mathbb{C}^{2^n \times 2^n}$ be the observable which is a Hermitian matrix acting on an $n$-qubit quantum system. Here we specify the observable as the projective measurement $\bm{O}=\ket{\bm{o}}\bra{\bm{o}}$ since any observable reads out the classical information from the quantum system via their eigenvectors. The goal of the quantum dynamics learning is to predict the functions of the form
	\begin{equation}\label{eq:learning_model}
		\mathrm{f}_{\bm{U}}(\bm{\psi}) = \Tr(\bm{O}\bm{U}\ket{\bm{\psi}}\bra{\bm{\psi}}\bm{U}^{\dagger}),
	\end{equation}
	where $\ket{\bm{\psi}}$ is an $n$-qubit quantum state living in a $2^n$-dimensional Hilbert space $\mathcal{H}_{\mathcal{X}}$. This task can be done by employing the training data $\mathcal{S}$ to construct a unitary $\bm{V}_{\mathcal{S}}$, i.e., the learned hypothesis has the form of $\mathrm{h}_{{\mathcal{S}}}(\bm{\psi}) = \Tr(\bm{O}\bm{V}_{\mathcal{S}}\ket{\bm{\psi}}\bra{\bm{\psi}}\bm{V}_{\mathcal{S}}^{\dagger})$, which is expected to accurately approximate $\mathrm{f}_{\bm{U}}(\bm{\psi})$ for the unseen data. While the learned unitary acts on an $n$-qubit system $\mathcal{H}_{\mathcal{X}}$, the input state  could be entangled with a reference system $\mathcal{H}_{\mathcal{R}}$, i.e.,  $\ket{\bm{\psi}}\in \mathcal{H}_{\mathcal{X}} \otimes \mathcal{H}_{\mathcal{R}}$. We suppose that all input states have the same Schmidt rank $r\in\{1, \cdots, 2^n\}$. Then the response of the state $\ket{\bm{\psi}_j}$ is given by the measurement output $\bm{o}_j=\sum_{k=1}^m \bm{o}_{jk}/m$, where $m$ is the number of measurements and $\bm{o}_{jk}$ is the output of the $k$-th measurement of the observable $\bm{O}$ on the output quantum state $(\bm{U}\otimes \mathbb{I}_{\mathcal{R}}) \ket{\bm{\psi}_j}$. In this manner, the training data  with $N$ examples takes the form $\mathcal{S}=\{(\ket{\bm{\psi}_j}, {\bm{o}}_j): \ket{\bm{\psi}_j}\in \mathcal{H}_{\mathcal{X}} \otimes \mathcal{H}_{\mathcal{R}}, \mathbb{E}[\bm{o}_{j}]=u_j \}_{j=1}^N$ with $u_j=\Tr((\bm{U}^{\dagger}\bm{O}\bm{U} \otimes \mathbb{I}_{\mathcal{R}})\ket{\bm{\psi}_j}\bra{\bm{\psi}_j})$ being the expectation value of the observable $\bm{O}$ on the state $(\bm{U}\otimes \mathbb{I}_{\mathcal{R}}) \ket{\bm{\psi}_j}$ and $N$ being the size of the training data. Notably, in quantum dynamics learning, sample complexity refers to the size of training data $N$, or equivalently, the number of quantum states in the training data; query complexity refers to the total number of queries of the explored quantum system, i.e., the production of sample complexity and the number of measurements $Nm$.
	
	The risk function is a crucial measure in statistical learning theory to quantify how well the hypothesis function $\mathrm{h}_{{\mathcal{S}}}$ performs in predicting $\mathrm{f}_{\bm{U}}$, defined as 
	\begin{equation}\label{eq:risk_function}
		\mathrm{R}_{\bm{U}}(\bm{V}_{{\mathcal{S}}}) = \int \mathrm{d} {\bm{\psi}} \left( \mathrm{f}_{\bm{U}}(\bm{\psi})-\mathrm{h}_{{\mathcal{S}}}(\bm{\psi})  \right) ^2,
	\end{equation}
	where the integral is over the uniform Haar measure $\mathrm{d} {\bm{\psi}}$ on the state space. Intuitively, $\mathrm{R}_{\bm{U}}(\bm{V}_{{\mathcal{S}}})$ amounts to the average square error distance between the true output $\mathrm{f}(\bm{\psi})$ and the hypothesis output $\mathrm{h}_{\mathcal{S}}(\bm{\psi})$. Moreover, we follow the treatments in Ref.~\cite{sharma2022reformulation} choosing the Haar unitary as the target unitary. Additionally, we construct a sampling rule of the training input states which approximates the uniform distribution of all entangled states with Schmidt rank $r$ (Refer to Supplementary Note~2).

	Under the above setting, we prove the following quantum NFL theorem in learning quantum dynamics, where the formal statement and proof are deferred to  Supplementary Note~3. 
	
	\begin{theorem}
		[Quantum NFL theorem in learning quantum dynamics, informal] Following the settings in Eqn.~(\ref{eq:learning_model}),  suppose that the training error of the learned hypothesis on the training data $\mathcal{S}$ is less than $\varepsilon=\mathcal{O}(1/2^n)$. Then the lower bound of the averaged prediction error in Eqn.~(\ref{eq:risk_function}) yields 
		\begin{equation} 			\mathbb{E}_{\bm{U},\mathcal{S}} \mathrm{R}_{\bm{U}}(\bm{V}_{\mathcal{S}}) \ge \Omega\left(\frac{\tilde{\varepsilon}^2}{4^n}\left(1-\frac{N\cdot\min\{m/(2^nrc_1), rn \}}{2^nc_2}\right)\right), \nonumber 
		\end{equation}
		where $c_1=128/\tilde{\varepsilon}^2$, $c_2= \min\{(1-2\tilde{\varepsilon})^2, (64\tilde{\varepsilon}^2-1)^2\}$,  $\tilde{\varepsilon}=\Theta(2^n\varepsilon)$, and the expectation is taken over all target unitary $\bm{U}$, entangled states $\ket{\bm{\psi}_j}$ and measurement outputs $\bm{o_j}$. 	\label{thm:informal_finite_measurement}
	\end{theorem}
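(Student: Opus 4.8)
The plan is to collapse the prediction risk onto a single scalar overlap and then treat the learner as a statistical estimator of a Haar-random direction, whose accuracy is capped by the information the finite measurements can supply.

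First I would evaluate the risk in closed form. Writing $f_U(\psi)=\bra{\psi}U^{\dagger}OU\ket{\psi}$ and $h_{\mathcal{S}}(\psi)=\bra{\psi}V_{\mathcal{S}}^{\dagger}OV_{\mathcal{S}}\ket{\psi}$ and setting $M=U^{\dagger}OU-V_{\mathcal{S}}^{\dagger}OV_{\mathcal{S}}$, the integral in Eqn.~(\ref{eq:risk_function}) is $\int \mathrm{d}\psi\, \bra{\psi}M\ket{\psi}^2$, which the second moment of the Haar measure on $\mathcal{H}_{\mathcal{X}}$ evaluates to $(\Tr(M)^2+\Tr(M^2))/(2^n(2^n+1))$. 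Since $O=\ket{o}\bra{o}$ is a rank-one projector, conjugation invariance gives $\Tr(M)=0$ and $\Tr(M^2)=2(1-|\bra{o}UV_{\mathcal{S}}^{\dagger}\ket{o}|^2)$, so
\[
R_U(V_{\mathcal{S}})=\frac{2\left(1-|\braket{v|w}|^2\right)}{2^n(2^n+1)},\qquad \ket{v}:=U^{\dagger}\ket{o},\ \ket{w}:=V_{\mathcal{S}}^{\dagger}\ket{o}.
\]
It therefore suffices to upper bound $\mathbb{E}_{U,\mathcal{S}}|\braket{v|w}|^2$, where the Haar-randomness of $U$ makes $\ket{v}$ a Haar-random unit vector in $\mathcal{H}_{\mathcal{X}}$.

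Next I would recast the problem as estimation. The data reveal $\ket{v}$ only through $u_j=\bra{v}\rho_j\ket{v}$ with $\rho_j=\Tr_R(\ket{\psi_j}\bra{\psi_j})$ of rank $r$, and each recorded $\bm{o}_j$ is the mean of $m$ i.i.d.\ $\Ber(u_j)$ outcomes; the hypothesis $\ket{w}=V_{\mathcal{S}}^{\dagger}\ket{o}$ is a function of $\{(\rho_j,\bm{o}_j)\}_{j=1}^N$ alone. The crux is to bound how much information each example carries about $\ket{v}$, and here two competing caps appear. A rank-$r$ probe constrains $\ket{v}$ only inside an $r$-dimensional subspace, so with ideal statistics one example contributes at most $\sim rn$ to the effective coverage, recovering the entanglement-assisted scaling. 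With finite shots, however, $u_j$ concentrates near $\Theta(1/2^n)$ with a signal of order $1/r$ spread across the $r$ Schmidt directions; bounding the per-shot Binomial $\KL$ then shows that $m$ measurements yield at most $\sim m/(rc_1)$ units of information, the $1/r$ encoding the dilution of the signal by entanglement. The admissible information per example is thus $\min\{m/(rc_1),rn\}$, and summing over the $N$ examples gives the total $N\min\{m/(rc_1),rn\}$ appearing in the bound.

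Finally I would convert this information budget into the risk bound through a Fano-type, two-point (Le Cam) testing argument: I would construct near-indistinguishable target directions whose induced predictions differ by $\Theta(\tilde{\varepsilon})$, so that the training-accuracy threshold $\varepsilon=\mathcal{O}(1/2^n)$ forces $\ket{w}$ to commit to one alternative while the bounded information leaves a confusion probability $\gtrsim 1-N\min\{m/(rc_1),rn\}/(2^nc_2)$. This yields $\mathbb{E}_{U,\mathcal{S}}|\braket{v|w}|^2\le 1-\Omega(\tilde{\varepsilon}^2(1-N\min\{m/(rc_1),rn\}/(2^nc_2)))$, and substituting into the closed-form risk above gives the claim, with $c_2=\min\{(1-2\tilde{\varepsilon})^2,(64\tilde{\varepsilon}^2-1)^2\}$ emerging from the two regimes of the separation/testing bound in $\tilde{\varepsilon}$. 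I expect the main obstacle to be the finite-measurement step: rigorously tracking the Binomial statistics with the $r$-dependent signal to establish the $1/r$ degradation and the clean crossover (the minimum), since this is precisely the mechanism behind the claimed dual effect of entanglement.
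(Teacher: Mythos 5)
Your plan follows essentially the same route as the paper: the closed-form Haar evaluation of the risk reducing everything to $1-|\braket{v|w}|^2$ is exactly the paper's Lemma~\ref{lem:rewrite_risk}; the two competing information caps per example --- a finite-shot cap of order $m\tilde{\varepsilon}^2/(r(2^n+1))$ obtained from the KL divergence of the (Gaussian-approximated) binomial outcome distributions with the $1/r$ dilution of the signal, and an $rn$ cap from a Holevo-type bound on what a rank-$r$ probe state can reveal --- are precisely the two branches of the paper's Lemma~\ref{lem:sum_MI_upper_bound}; and the conversion to a risk bound via Fano is Lemmas~\ref{lem:est_test} and \ref{lem:Fano}. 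Your identification of the $1/r$ degradation as the mechanism behind the dual effect is also the paper's diagnosis.

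One step as you have written it would fail, and it is worth naming because it hides the hardest remaining construction. You describe the final conversion as a ``two-point (Le Cam)'' test, but a two-hypothesis argument cannot produce the factor $2^n c_2$ in the denominator: with only two alternatives the Fano denominator is $\log 2$ and the bound is vacuous once $I(X;\hat X)$ exceeds a constant. The $2^n$ comes from a \emph{multi-way} Fano argument over a packing of target directions whose cardinality is $\exp\left(\Omega(2^n c_2)\right)$, and moreover the packing must be \emph{local} --- pairwise separations both bounded below by $2\varepsilon$ and above by $\Theta(\gamma\varepsilon)$ --- because the upper separation is what makes the per-example KL term scale as $\tilde{\varepsilon}^2/r$ rather than as an $O(1)$ quantity. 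Establishing that such a doubly-constrained packing of this exponential size exists is a nontrivial step; the paper does it probabilistically via concentration of projector overlaps for Haar-random rotations of a fixed pure state (Lemmas~\ref{lem:proj_concentration} and \ref{lem:packing_states_exist}), which is where the constant $c_2=\min\{(1-2\tilde{\varepsilon})^2,(64\tilde{\varepsilon}^2-1)^2\}$ actually originates. Your sketch attributes $c_2$ to ``two regimes of the separation/testing bound,'' which is directionally right, but without the exponential local packing the denominator $2^n c_2$ --- and hence the entire $N$-versus-$2^n$ trade-off that makes this a no-free-lunch statement --- does not appear.
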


    \smallskip

    	\begin{figure*}[htbp]
	\centering
	\includegraphics[width=176mm]{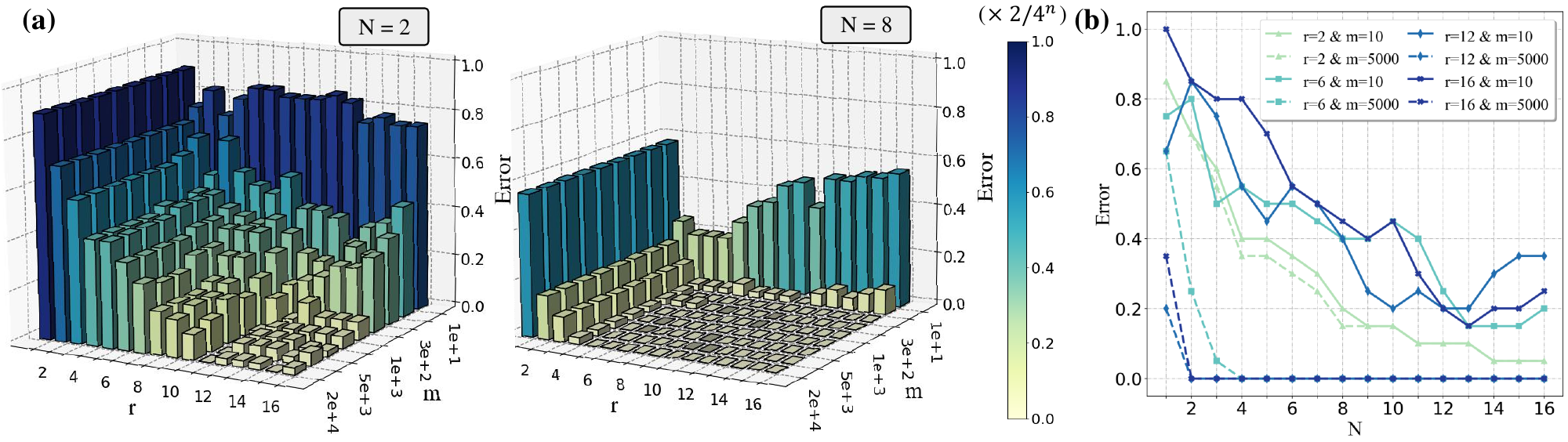}
	\caption{\fontsize{7pt}{8.4pt}{\textbf{Simulation results of quantum NFL theorem when incoherently learning quantum dynamics.}  (a) The averaged prediction error with a varied number of measurements $m$ and Schmidt rank $r$ when $N=2$ and $N=8$. The z-axis refers to the averaged prediction error defined in Eqn.~(\ref{eq:learning_model}).    (b) The averaged prediction error with the varied sizes of training data. The label `$r = a ~\&~ m = b$` refers that the Schmidt rank is $a$ and the number of measurements is $b$.  The label `($\times 2/4^n$)' refers that the plotted prediction error is normalized by a multiplier factor $2/4^n$.}}
\end{figure*}

	\noindent The achieved results indicate the transition role of the entangled data in determining the prediction error. Particularly, when a sufficient number of measurements $m$ is allowed such that the Schmidt rank $r$ obeys $r<\sqrt{m/(c_12^n n)}$, the minimum term in the achieved lower bound refers to $Nrn$ and hence increasing $r$ can constantly decrease the prediction error. Accordingly, in the two extreme cases of $r=1$ and $r=2^n$,  achieving zero averaged risk requires $N=2^nc_2/n$ and $N=1$ training input states, where the latter achieves an exponential reduction in the number of training data compared with the former. This observation implies that the entangled data empower QML with provable quantum advantage, which accords with the achieved results of Ref.~\cite{sharma2022reformulation} in the ideal coherent learning protocol with infinite measurements.
	
	By contrast, in the scenario with $r \ge \sqrt{m/(c_12^nn)}$, increasing $r$ could enlarge the prediction error. This result indicates that the entangled data can be harmful to achieving quantum advantages, which \textit{contrasts with previous results} where the entanglement (e.g., entangled operations or measurements) is believed to contribute to the quantum advantage \cite{jozsa2003role,yoganathan2019one,khatri2019quantum, sharma2022reformulation}. This counterintuitive phenomenon stems from the fact that when incoherently learning quantum dynamics, information obtained from each measurement decreases with the increased $r$ and hence a small $m$ is incapable of extracting all information of the target unitary carried by the entangled state.

	Another implication of Theorem \ref{thm:informal_finite_measurement} is that although the number of measurements $m$ contributes to a small prediction error, it is \textit{not decisive to the ultimate performance} of the prediction error. Specifically, when $m\ge r^2c_1 2^n n$, further increasing $m$ could not help decrease the prediction error which is determined by the entanglement and the size of the training data, i.e., $r$ and $N$.  Meanwhile, at least $r^2c_12^n n$ measurements are required to fully utilize the power of entangled data. These results suggest that the value of $m$ should be adaptive to $r$ to pursue a low prediction error.  
	
	We next comprehend the scenario in which the lower bound of averaged risk in Theorem~\ref{thm:informal_finite_measurement} reaches zero and correlate with the results in quantum state learning and quantum dynamics learning \cite{huang2021information, huang2022quantum, chen2022exponential, buadescu2021improved, yuen2023improved,anshu2024survey}. In particular, the main focus of those studies is proving the minimum query complexity of the target unitary to warrant zero risk. The results in Theorem~\ref{thm:informal_finite_measurement} indicate that the minimum query complexity is $Nm = {\Omega}(4^nrc_1c_2)$, implying the proportional relation between the entanglement degree $r$ and the query complexity. Notably, \textit{this lower bound is tighter than that achieved in} Ref.~\cite{huang2021information} in the same setting. The achieved results in terms of query complexity are also non-trivial, as previous works show that query complexity can benefit from using entanglement in quantum data \cite{piani2009all,bae2019more} and quantum measurements \cite{huang2021information,huang2022quantum}. The advance of our results stems from the fact that Ref.~\cite{huang2021information} simply employs Holevo's theorem to give an upper bound on the extracted information in a single measurement, while our bound integrates more refined analysis such as the consideration of Schmidt rank $r$, the direct use of a connection between the mutual information of the target unitary $\bm{U}$ and the measurement outputs $\bm{o}_j$, and the KL-divergence of related distributions (Refer to Supplementary Note~3 for more details). Moreover, the adopted projective measurement $\bm{O}$ in Eqn.~(\ref{eq:learning_model}) hints that the learning task explored in our study amounts to learning a pure state $\bm{U}^{\dagger}\bm{O}\bm{U}$. From the perspective of state learning, \textit{the derived lower bound in Theorem \ref{thm:informal_finite_measurement} is optimal for the nonadaptive measurement with a constant number of outcomes} \cite{lowe2022lower}. 
    Taken together, while the entangled data hold the promise of gaining advantages in terms of the sample complexity for achieving the same level of prediction error, they may be inferior to the training data without entanglement in terms of query complexity.

	The transition role of entanglement explained above leads to the following construction rule of quantum learning models. First,  when a large number of measurements is allowed, the entangled data is encouraged to be used for improving the prediction performance. To this end, initial research efforts \cite{wu2004preparation,basharov2006decay,lemr2008preparation,lin2016preparation,klco2020minimally,schatzki2021entangled}, which develop effective methods for preparing and storing entangled states, may contribute to QML. Second, when the total number of measurements is limited, it is advised to refrain from using entangled data for learning quantum dynamics.

	\noindent\textit{Remark.}  (i) The training error scaling $\varepsilon=\mathcal{O}(1/2^n)$ in Theorem~\ref{thm:informal_finite_measurement} and the factor of the achieved lower bound $\tilde{\varepsilon}^2/4^n$ comes from the consideration of average performance over Haar unitaries where the expectation value of observable $\bm{O}$ scales as $\Tr(\bm{O})/2^n$ (Refer to Supplementary Note~2). (ii) The results of the transition role for entangled data achieved in Theorem~\ref{thm:informal_finite_measurement} can be generalized to the mixed states because the mixed state can be produced by taking the partial trace of a pure entangled state.

    \smallskip

    In a more generic learning setting, the observable used in the target function defined in Eqn.~(\ref{eq:learning_model}) and the measurement used for collecting the response of training data $\bm{o}$ could be arbitrary and varied. In particular, we consider that the observable $\bm{O}$ defined in Eqn.~(\ref{eq:learning_model}) could be arbitrary Hermitian operator satisfying $\|\bm{O}\|_{1}\le \infty$. The response $\bm{a}_j$ for given input states $\ket{\bm{\psi}_j}$ could be obtained from measuring the output states on system $\mathcal{X}$ with $\ell$-outcome POVM. The training dataset in this case refers to $\mathcal{S}_{\ell}=\{(\ket{\bm{\psi}_j},\bm{a}_j):\ket{\bm{\psi}_j}\in \mathcal{H}_{\mathcal{X}\mathcal{R}}, \bm{a}_{j}=(\bm{a}_{j1},\cdots, \bm{a}_{jm}), \bm{a}_{jk}\in \{z_1, \cdots, z_{\ell}\}\}_{j=1}^{N}$, where $\ket{\bm{\psi}_j}$ refers to the entangled states with Schmidt rank $r$, $\bm{a}_j$ is the $m$-measurement outputs with $\ell$-outcome POVM, and $\{z_i\}_{i=1}^{\ell}$ is the $\ell$ possible outcomes of the employed POVM. In this case, denoting the learned unitary as $\bm{V}_{\mathcal{S}_{\ell}}$, we get the following quantum NFL theorem in learning quantum dynamics for generic measurement, where the formal statement and proof are deferred to Supplementary Note~4.
    \begin{theorem}
		[Quantum NFL theorem in learning quantum dynamics for generic measurement, informal] Following the settings in Eqn.~(\ref{eq:learning_model}) with arbitrary $\bm{O}$ satisfying $\|\bm{O}\|_1 \le \infty$,  suppose the learned hypothesis is learned from training data $\mathcal{S}_{\ell}$. Then the lower bound of the averaged prediction error in Eqn.~(\ref{eq:risk_function}) yields \begin{align}\label{eq:error_bound_POVM}
        \mathbb{E}_{\bm{U},\mathcal{S}_{\ell}} \mathrm{R}_{\bm{U}}(\bm{V}_{\mathcal{S}_{\ell}}) \ge {\varepsilon^2} \left(1- \frac{N\cdot \min\{4m/r, 6m\ell/2^n r, rn\}}{\log(|\mathcal{X}_{2\varepsilon}(\bm{O})|)}  \right) \nonumber
    \end{align}
    where $|\mathcal{X}_{2\varepsilon}(\bm{O})|$ refers to the model complexity and only depends on $\varepsilon$ and the employed observable $\bm{O}$. For projective measurement $\bm{O}=\ket{\bm{o}}\bra{\bm{o}}$, $\log(|\mathcal{X}_{2\varepsilon}(\bm{O})|)=2^nc_2$ is given in the denominator of the achieve lower bound in Theorem~\ref{thm:informal_finite_measurement}.
    \label{thm:informal_finite_measurement_POVM}
	\end{theorem}

    The achieved results in Theorem~\ref{thm:informal_finite_measurement_POVM} deliver three implications. First, the transition role of entangled data still holds for arbitrary observable and POVM. In particular, no matter how large the number of possible outcomes of POVM $\ell$ is, increasing the Schmidt rank will decrease the prediction error as long as the number of measurements $m$ satisfies $\min\{4m/r, 6m\ell/2^n r\} \le rn$, and increase the prediction error otherwise. Second, when the observable is projective measurement and the number of possible outcomes $\ell$ is of constant order, the achieved result in Theorem~\ref{thm:informal_finite_measurement_POVM} reduces to the results achieved in Theorem~\ref{thm:informal_finite_measurement} for the case of employing projective measurement up to a constant factor. Third, increasing the number of possible outcomes of POVM $\ell$ can exponentially reduce the number of measurements required to achieve the same level of prediction error. Particularly, considering two extreme cases of the possible outcomes of POVM $\ell$ being constant scaling $\Theta(1)$ and exponential scaling $\Theta(2^n)$, achieving the same level of prediction error requires the query complexity scaling with the order of $2^nr\log(|\mathcal{X}_{2\varepsilon}(\bm{O})|)$ and $r\log(|\mathcal{X}_{2\varepsilon}(\bm{O})|)$, where the latter case achieves an exponential reduction in terms of the query complexity.

	\smallskip
	\noindent\textbf{Numerical results.}
	We conduct numerical simulations to exhibit the transition role of entangled data, the effect of the number of measurements, and the training data size in determining the prediction error.  The omitted construction details and results are deferred to Supplementary Note~5.
	
	We focus on the task of learning an $n$-qubit unitary under a fixed projective measurement $\bm{O}=(\ket{\bm{0}}\bra{\bm{0}})^{\otimes n}$. The number of qubits is $n=4$. The target unitary $\bm{U}_X$ is chosen uniformly from a discrete set $\{\bm{U}_i\}_{i=1}^M$, where $M={2^n}$ refers to the set size and the operators $\bm{U}_j^{\dagger}\bm{O}\bm{U}_j$ with $\bm{U}_j$ in this set are orthogonal such that the operators $\bm{U}_j^{\dagger}\bm{O}\bm{U}_j$ are well distinguished. The entangled states in $\mathcal{S}$ is uniformly sampled from the set $\{\sum_{j=1}^r \sqrt{c_j} \bm{U}_{j}\ket{\bm{0}}\otimes \ket{\bm{\xi}_j} ~|~ (\sqrt{c_1}, \cdots, \sqrt{c_r})^{\top} \in \mathbb{S}\mathbb{U}(r),~ \ket{\bm{\xi}_j} \in \mathbb{S}\mathbb{U}(2^n) \}$. The size of training data is $N\in\{1,2,\cdots,16\}$ and the Schmidt rank takes $r=\{2^0,\cdots,2^4\}$.  The number of measurements takes $m\in \{10,100,300,\cdots,5000, 20000\}$. We record the averaged prediction error by learning four different $4$-qubit unitaries for $10$ training data.
	
	The simulation results are displayed in Fig.~2. Particularly, Fig.~2(a) shows that for both the cases of $N=2$ and $N=8$, the prediction error constantly decreases with respect to an increased number of measurements $m$ and increased Schmidt rank $r$ when the number of measurements is large enough, namely $m>1000$. On the other hand, for a small number of measurements with $m\le100$ in the case of $N=8$, as the Schmidt rank is continually increased, the averaged prediction error initially decreases and then increases after the Schmidt rank surpasses a critical point which is $r=3$ for $m=10$ and $r=4$ for $m=100$. This phenomenon accords with the theoretical results in Theorem~\ref{thm:informal_finite_measurement} in the sense that the entangled data play a transition role in determining the prediction error for a limited number of measurements. This observation is also verified in Fig.~2(b) for the varied sizes of training data, where for the small measurement times $m=100$, increasing the Schmidt rank could be not helpful for decreasing the prediction error. By contrast, a large training data size consistently contributes to a small prediction error, which echoes with Theorem 1.

	\medskip	
	\noindent\textbf{DISCUSSION}
	
	\noindent In this study, we exploited the effect of the Schmidt rank of entangled data on the performance of learning quantum dynamics with a fixed observable. Within the framework of the quantum NFL theorem, our theoretical findings reveal the transition role of entanglement in determining the ultimate model performance. Specifically, increasing the Schmidt rank below a threshold controlled by the number of measurements can enhance model performance, whereas surpassing this threshold can lead to a deterioration in model performance. Our analysis suggests that a large number of measurements is the precondition to use entangled data to gain potential quantum advantages. In addition, our results demystify the negative role of  entangled data in the measure of query complexity. Last, as with the classical NFL theorem, we prove that increasing the size of the training data always contributes to a better performance in QML.

	Our results motivate several important issues and questions needed to be further investigated. The first research direction is exploring whether the transition role of entangled data exists for other QML tasks such as learning quantum unitaries or learning quantum channels with the response being measurement output \cite{huang2021information, huang2022quantum, caro2023out, jerbi2023power,caro2022generalization,bisio2010optimal,khatri2019quantum,jones2022robust,heya2018variational,cirstoiu2020variational,gibbs2024dynamical,huang2023learning,caro2022learning}. These questions can be considered in both the coherent and incoherent learning protocols, which are determined by whether the target and model system can coherently interact and whether quantum information can be shared between them. Obtaining such results would have important implications for using QML models to solve practical tasks with provable advantages.
	
	A another research direction is inquiring  whether there exists a similar transition role when exploiting entanglement in quantum dynamics and measurements through the use of an ancillary quantum system. The answer for the case of entangled measurement has been given under many specific learning tasks \cite{bubeck2020entanglement, huang2021information, chen2022exponential,huang2022quantum} where the learning protocols with entangled measurements are shown to achieve an exponential advantage over those without in terms the access times to the target unitary. This quantum advantage arises from the powerful information-extraction capabilities of entangled measurements. In this regard, it is intriguing to investigate the effect of quantum entanglement on model performance when entanglement is introduced in both the training states and measurements, as entangled measurements offer a potential solution to the negative impact of entangled data resulting from insufficient information extraction via weak projective measurements. A positive result could further enhance the quantum advantage gained through entanglement exploitation.	
	
	\medskip 
	\noindent\textbf{METHODS}\\
Here we first outline the proof strategy that establishes the lower bound of the averaged prediction error in Theorem~\ref{thm:informal_finite_measurement}. Then we present an intuitive explanation of the transition role of entangled data according to the achieved numerical results.
	
	\smallskip
	
	\noindent \textbf{Proof sketch.} The backbone of the proof refers to Fano's method, which is widely used to derive the lower bound of prediction error in classical learning theory \cite{duchi2016lecture}. This method involves the following three parts.\\ Part (I): The space of the target dynamics $\mathcal{U}=\{\bm{U}\in \mathbb{S}\mathbb{U}(d)\}$ is discretized into a $2\varepsilon$-packing $\mathcal{M}_{2\varepsilon}=\{\bm{U}_{x'}\}_{x'=1}^{|\mathcal{M}_{2\varepsilon}|}$ such that the dynamics within $\mathcal{M}_{2\varepsilon}$ are sufficiently distinguishable under a distance metric related to the target function in Eqn.~(\ref{eq:learning_model}).\\
	Part (II): The dynamics learning problem in Eqn.~(\ref{eq:learning_model}) is translated to the hypothesis testing problem related to the $2\varepsilon$-packing $\mathcal{M}_{2\varepsilon}$. Such a hypothesis testing problem amounts to a communication protocol between two parties, namely Alice and Bob. Particularly, Alice chooses an element $X$ of $\{1, \cdots, |\mathcal{M}_{2\varepsilon}|\}$ uniformly at random and employs the corresponding unitary $\bm{U}_X$ to construct the training data $\mathcal{S}=\{\ket{\bm{\psi}_j}, \bm{o}_j^{(X)}\}_{j=1}^N$ with $\ket{\bm{\psi}_j}$ being the randomly sampled entangled input state and $\bm{o}_j$ being the associated measurement output of the state $(\bm{U}_X\otimes \mathbb{I})\ket{\bm{\psi}_j}$ under the projective measurement $\bm{O}$. Bob's goal is to retrieve the information of $X$ from the discrete set $\{1, \cdots, |\mathcal{M}_{2\varepsilon}|\}$ based on $\mathcal{S}$. The inferred index by Bob is denoted by $\hat{X}$. This leads to the hypothesis testing problem with the null hypothesis $\hat{X}\ne X$. In this regard, we demonstrate that the averaged prediction error is greater than a quantity related to the error probability $\mathbb{P}(\hat{X}\ne X)$ of the hypothesis testing problem, namely $\mathbb{E}_{\bm{U},\mathcal{S}}\mathrm{R}_{\bm{U}}(\bm{V}_{\mathcal{S}}) \ge \mathbb{E}_{X,\mathcal{S}} \varepsilon^2 \mathbb{P}(\hat{X}\ne X)$. This provides the theoretical guarantee of reducing the learning problem to the hypothesis testing problem. \\
Part (III): Fano's inequality is utilized to establish an upper bound on the error probability $\mathbb{P}(\hat{X}\ne X)$ of the hypothesis testing problem, i.e.,
	\begin{equation}
		\mathbb{P}(\hat{X}\ne X) \ge 1-\frac{I(X;\hat{X})+\log 2}{\log(|\mathcal{M}_{2\varepsilon}|)},
	\end{equation}
	which is dependent on two factors: the cardinality of the $2\varepsilon$-packing $\mathcal{M}_{2\varepsilon}$, and the mutual information  $I(X,\hat{X})$ between the target index $X$ and the estimated index $\hat{X}$.

	To summarize, Fano's method reduces the challenging problem of lower bounding the prediction error to separately lower bounding the packing cardinality and upper bounding the mutual information, which we could develop techniques for tackling. In particular, we obtain the lower bound of the $2\varepsilon$-packing cardinality $|\mathcal{M}_{2\varepsilon}|$ by employing the probabilistic argument to show the existence of a large but well-separated collection of quantum dynamics (i.e., the $2\varepsilon$-packing $\mathcal{M}_{2\varepsilon}$) under a metric dependent on the observable $\bm{O}$. 
	
	We establish an upper bound for the mutual information $I(X;\hat{X})$ by considering two cases: one with a small number of measurements and another with a sufficiently large number of measurements. For the former case, the mutual information $I(X;\hat{X})$ is upper bounded by a quantity involving the KL-divergence between the probability distributions of the measurement output $\bm{o}_j^{(x)}$ related to various index $x$, which has the order of $\mathcal{O}(Nmd\varepsilon^2/r)$ in the average case. On the other hand, while the mutual information $I(X;\hat{X})$ cannot grow infinitely with the number of measurements, we derive another upper bound of $I(X;\hat{X})$ with the mutual information $I(X;\{(\bm{U}_X\otimes \mathbb{I})\ket{\bm{\psi}_j}\}_{j=1}^N)$ for the case of a sufficiently large number of measurements which could extract the maximal amount of information from each output state. In this regard, the mutual information is upper bounded by an $m$-independent quantity with the order of $\mathcal{O}(rn)$. This leads to the final upper bound of the mutual information  $\min\{\mathcal{O}(Nmd\varepsilon^2/r), \mathcal{O}(rn) \}$ where the Schmidt rank $r$ plays the opposite role in the two scenarios with the various number of measurements allowed, resulting in the transition role of entangled data. 
 
    Taken all together, we can obtain the lower bound of the averaged prediction error in Theorem~\ref{thm:informal_finite_measurement}.

\smallskip	
	\noindent \textit{Remark.}
	While many similar pipelines involving the utilization of Fano's inequality have been used for obtaining the lower bound of sample complexity in quantum state learning tasks, we give a detailed explanation about how our results differ from previous studies in Supplementary Note~1E.
		
	\smallskip
	
	\noindent \textbf{Intuitive explanations based on numerical results.}
	We now give an intuitive explanation about the transition role of the entangled data based on the numerical simulations. 
	
	Before elucidating, we first detail the construction rule of the target unitaries set and the entangled input states. Particularly, to construct the set consisting of well-distinguished target unitaries under the distance metric related to the observable $\bm{O}=(\ket{\bm{0}}\bra{\bm{0}})^{\otimes n}$, one way is to choose the unitaries $\bm{U}_j$ in $\mathbb{S}\mathbb{U}(2^n)$ at uniformly random such that the target operators $\bm{U}_j^{\dagger}\bm{O}\bm{U}_j=\ket{\bm{e}_j}\bra{\bm{e}_j}$ are mutually orthogonal. In this regard, learning the target unitary under the observable $\bm{O}$ is equivalent to identifying the unknown index $k^*$ corresponding to the target operator $\bm{U}_{k^*}^{\dagger}\bm{O}\bm{U}_{k^*}=\ket{\bm{e}_{k^*}}\bra{\bm{e}_{k^*}}$.  The entangled input states have the form of $\ket{\bm{\psi}_j}=\sum_{k=1}^r \sqrt{c_{jk}} \ket{\bm{\xi}_{jk}}_{\mathcal{X}}\ket{\bm{\zeta}_{jk}}_{\mathcal{R}}$ where the Schmidt coefficients $\{c_{jk}\}$ satisfy $\sum_{k=1}^r c_{jk}=1$. As the target unitary $\bm{U}_{k^*}$ acts on the quantum system $\mathcal{X}$, the identification of the corresponding index $k^*$ solely depends on the partial trace of the entangled states, i.e., $\sigma_j := \Tr_{\mathcal{R}}(\ket{\bm{\psi}_j}\bra{\bm{\psi}_j})=\sum_{k=1}^r c_{jk} \ket{\bm{\xi}_{jk}}\bra{\bm{\xi}_{jk}}_{\mathcal{X}}$. To this end, we consider that the states $\ket{\bm{\xi}_{jk}}$ in the reduced states are sampled from the computational basis $\{\ket{\bm{e}_i}\}_{i=1}^{2^n}$ and the coefficient vector $\bm{c}_j=(\sqrt{c_{j1}},\cdots, \sqrt{c_{jr}})$ is sampled from the Haar distribution in the $r$-dimensional Hilbert space $\mathcal{H}_r$. In this manner, we construct $\mathcal{S}=\{\ket{\bm{\psi}_j}, \bm{o}_j\}_{j=1}^N$ and use it to learn the unknown index $k^*$ by solving the following minimization problem
	\begin{equation}\label{eq:method_minimization}
		\hat{k}=\arg\min_{k\in [2^n]} \sum_{j=1}^N \left(\bm{o}_j^{(k)} - \bm{o}_j \right)^2,
	\end{equation}
	where $[2^n]$ refers to the set $\{1, \cdots, 2^n\}$ and $\bm{o}_j^{(k)}$ is the collected measurement outputs by applying the observable $\bm{U}_k^{\dagger}\bm{O}\bm{U}_k$ with $\bm{U}_k \in \{\bm{U}_{k'}\}_{k'\in [2^n]}$ to input states $\ket{\bm{\psi}_j}$.
	
	The successful identification of the target index $k^*$ relies on the satisfaction of two key conditions: 
	\begin{enumerate}
		\item The states set  $\{\ket{\bm{\xi}_{ji}}\bra{\bm{\xi}_{ji}}\}_{j,i=1}^{N,r}$ contains the target operator $\bm{U}_{k^*}^{\dagger}\bm{O}\bm{U}_{k^*} = \ket{\bm{e}_{k^*}}\bra{\bm{e}_{k^*}}$.
		\item The measurement outputs $\{\bm{o}_j\}_{j=1}^N$ closely approximate the corresponding Schmidt coefficient $c_{k^*}$ of the operator $\bm{U}_{k^*}^{\dagger}\bm{O}\bm{U}_{k^*} =\ket{\bm{e}_{k^*}}\bra{\bm{e}_{k^*}} \in \{{\ket{\bm{\xi}_{ji}}\bra{\bm{\xi}_{ji}}}\}_{j,i=1}^{N,r}$. 
	\end{enumerate}
	The first condition ensures that the measurement outputs $\{\bm{o}_j\}_{j=1}^N$ are non-zero, enabling the identification of the target index $k^*$. Moreover, if the target operator $\bm{U}_{k^*}^{\dagger}\bm{O}\bm{U}_{k^*}$ is not included in $\{\ket{\bm{\xi}_{ji}}\bra{\bm{\xi}_{ji}}\}_{j,i=1}^{N,r}$, measuring the output states with observable $\bm{O}$ will always yield zero for all $k\in[2^n]$ such that any $k\in [2^n]$ is the solution of Eqn.~(\ref{eq:method_minimization}).  
	
The second condition ensures that the non-zero measurement outcomes can closely approximate the ground truth to facilitate the identification of the target index $k^*$. In this regard, the states set $\{{\ket{\bm{\xi}_{ji}}\bra{\bm{\xi}_{ji}}}\}_{j,i=1}^{N,r}$, which associates with highly entangled input states (a large Schmidt rank $r$), is more likely to contain the unknown target operator $\bm{U}_{k^*}^{\dagger}\bm{O}\bm{U}_{k^*}$. Consequently, they exhibit a higher identifiability with a greater probability of producing non-zero measurement outcomes. On the other hand, the average magnitude of the Schmidt coefficients $\{c_{ji}\}_{i=1}^r$ of a highly entangled state $\ket{\bm{\psi}_j}$ decreases with the Schmidt rank $r$. This leads to a small probability of the target operator $\bm{U}_{k^*}^{\dagger}\bm{O}\bm{U}_{k^*}$ being measured according to the Born rule. As a result, achieving a close approximation necessitates a larger number of measurements. For instance, the entangled states with $r=2^n$ could always yield non-zero measurement outputs with a large number of measurements. Conversely, if the unentangled state $\ket{\bm{\psi}_j}\bra{\bm{\psi}_j}$ with $r=1$ is exactly identical to the target operator $\bm{U}_{k^*}^{\dagger}\bm{O}\bm{U}_{k^*}$, one measurement is sufficient to identify the target index $k^*$. These observations provide an intuitive indication about the transition role of entangled data from the lens of quantum information that the entangled data could contain more information than unentangled data, but at the same time increase the difficulty of information extraction using projective measurement.

	\medskip
	
	\noindent \textbf{DATA AVAILABILITY}\\
	The entangled data generated in this study are available at the Github repository \url{https://github.com/wangxinbiao08/Transition-role-of-entangled-data-in-QML}.
	
	\medskip
	
	\noindent \textbf{CODE AVAILABILITY}\\
	The code used in this study are available at the Github repository \url{https://github.com/wangxinbiao08/Transition-role-of-entangled-data-in-QML}.
	
	    \medskip

\medskip

\noindent \textbf{ACKNOWLEDGEMENTS}\\
Y.L. acknowledges support from National Natural Science Foundation of China (Grant No. U23A20318 and 62276195). X.Y. acknowledges support from the National Natural Science Foundation of China (Grant No.~12175003 and No.~12361161602),  NSAF (Grant No.~U2330201).

\medskip

\noindent \textbf{AUTHOR CONTRIBUTIONS}\\
The project was conceived by Y.D., and D.T. Theoretical results were proved by X.W., Y.D., and Z.T. Numerical simulations and analysis were performed by X.W., Y.D., Y.L., and X.Y. All authors contributed to the write-up.

\medskip

\noindent \textbf{COMPETING INTERESTS}\\
The authors declare no competing interests.

 \clearpage
 \newpage
 \appendix
 \onecolumngrid 

\begin{center}
\large{Supplementary Note for \\ ``Transition Role of Entangled Data in Quantum Machine Learning''}
 \end{center}

\tableofcontents

\medskip

\noindent  \textbf{Roadmap:} Supplementary Note~\ref{app_sec:prelimilaries} provides preliminaries for the necessary mathematical background and introduces the related work. The problem setup of quantum dynamics learning in the framework of the quantum no-free-lunch (NFL) theorem is presented in Supplementary Note~\ref{app_sec:NFL_Preliminaries}. We elucidate the results and the proof of Theorem~\ref{thm:formal_finite_measurement} in Supplementary Note~\ref{app_sec:NFL_SM}. Finally, Supplementary Note~\ref{app_sec:Numerical_NFL} exhibits the numerical details omitted in the main text and more numerical results. 

\section{Preliminaries}\label{app_sec:prelimilaries}
In this section, we will first present some essential mathematical foundations for deriving the main results of this work. It encompasses several key aspects, such as the introduction of pertinent notations, random variables, information theory, and Haar integration, which are separately elaborated upon in Supplementary Note~\ref{app_subsec:notation} to Supplementary Note~\ref{app_subsec:Haar_integration}. Moreover, to contextualize our work within the existing literature, we conduct a comprehensive review of relevant studies in Supplementary Note~\ref{app_subsec:related_work}.

\subsection{Notation} \label{app_subsec:notation}
We unify the notations throughout the whole work. The number of qubits and training data size are denoted by $n$ and $N$, respectively. Let $\mathcal{H}_d$ denote a $d$-dimensional Hilbert space. The Hilbert space of an $n$-qubit  system is denoted by $\mathcal{H}_{2^n}$. For simplicity, $d$ refers to the dimension of the $n$-qubit quantum system, i.e., $d=2^n$. The $d$-dimensional unitary group and special unitary group are denoted as $\mathbb{U}(d)$ and $\mathbb{S}\mathbb{U}(d)$, respectively.  The notations $[N]$ refers to the set $\{1, 2, \cdots, N\}$. We denote $\|\cdot\|_1$ as the trace norm and $\|\cdot\|_{\Fnorm}$ as the Frobenius norm. The cardinality of a set is denoted as $|\cdot|$. We use the standard bra-ket notation for pure quantum states. The identity operator on the $d$-dimensional Hilbert space is denoted by $\mathbb{I}_d$. We denote $\ket{\bm{e}_k}$ as the computational basis with the $k$-th entry being one and the other entries being zero. For clearness, we define sample complexity as the size of training data, or equivalently, the number of quantum states in the training data, and define query complexity as the total number of queries of the explored quantum system.

\subsection{Random variables} \label{app_subsec:Random_variables}
We denote random variables using the capital letter, i.e., $X$, including matrix-valued random variables. We use the lowercase letters in Roman (e.g., $\rm{p},q$) and the capital letters (e.g., $\mathbb{P}$) with appropriate subscripts to denote the probability density function (PDF) and the corresponding cumulative distribution function (CDF) which obey $\mathrm{d}\mathbb{P}_{X}(x)= \mathrm{p}_{X}(x) \mathrm{d} x$. For instance, suppose $X$ is a random variable taking values in $\mathcal{X}$ according to some distribution $\mathrm{p}_{X}: \mathcal{X} \to [0,1]$, where $\mathcal{A}$ is the set of Borel-measurable subsets \cite{holevo2011probabilistic} of $\mathcal{X}$. Let $\mathrm{g}: \mathcal{X} \to \mathbb{R}$ be any function of $x$. We denote $\mathbb{E}_{X}\mathrm{g}(X)$ and  $\mathbb{E}_{X\sim \mathrm{p}_{X}}\mathrm{g}(X)$ interchangeably as the expectation of $\rm{g}(\cdot)$ with respect to the distribution $\mathrm{p}_{X}$, i.e., 
\[ \int_{\mathcal{X}}\mathrm{g}(X) \mathrm{d} \mathbb{P}_{X}(x) ~ \text{or}~  \int_{\mathcal{X}}\mathrm{g}(X) \mathrm{p}_{X}(x) \mathrm{d} x,\] where the latter notation is used when there may be some ambiguity about the distribution is. When no confusion occurs, we drop  all subscripts and write $\mathbb{E}\mathrm{g}(X)$. Next suppose we have random variables $(X,Y)$ jointly distributed on $\mathcal{X} \times \mathcal{Y}$. We use $\mathrm{p}_{Y|x}(y)$ and $\mathrm{p}(y|X=x)$ interchangeably to denote the conditional probability that $Y=y$ given $X=x$.

\subsection{Information theory} \label{app_subsec:Information_theory}
In this subsection, we review the basic definitions in information theory, including (Shannon) entropy, KL-divergence, mutual information, and their conditional versions. Refer to Refs.~\cite{duchi2016lecture,wilde2013quantum,watrous2018theory} for a more deep understanding. Throughout the whole paper, $\log$ denotes the logarithm with base $2$. We first consider the scenario of discrete random variables taking values in the same space. 

\medskip

\noindent \textbf{Entropy}. We begin with a central concept in information theory---the (Shannon) entropy. Let $\mathbb{P}$ be a distribution on a finite set $\mathcal{X}$. Denote the PDF as $\mathrm{p}$ associated with $\mathbb{P}$. Let $X$ be a random variable distributed according to $\mathbb{P}$. The \textit{entropy} of $X$ (or of $\mathrm{p}$) is defined as
\begin{equation}\label{eq:entropy}
	\mathrm{H}(X):=-\sum_{x\in \mathcal{X}} \mathrm{p}(x)\log \mathrm{p}(x).
\end{equation}
This quantity is always positive when $\mathrm{p}(x) < 1$ for all $x$, and vanishes if and only if $X$ is a deterministic variable, i.e., $\mathrm{p}(x)=1$ with taking $0\log0 = 0$. The Shannon entropy measures the uncertainty about the random variable $X$. 

Let us consider another discrete random variable, denoted by $Y$, which takes values in the set $\mathcal{Y}$. The joint distribution of $X$ and $Y$ can be given by $\mathrm{p}_{X,Y}: \mathcal{X} \times \mathcal{Y} \to [0,1]$. The joint entropy of these random variables is	
\begin{equation}
	\mathrm{H}(X,Y) = -\sum_{x\in \mathcal{X}}\sum_{y\in \mathcal{Y}} \mathrm{p}_{X,Y}(x,y) \log(\mathrm{p}_{X,Y}(x,y))
\end{equation}
and the conditional entropy of $X$ given $Y$ refers to
\begin{equation}\label{eq:conditional_entropy}
	\mathrm{H}(X|Y) = \mathrm{H}(X,Y)-\mathrm{H}(Y),
\end{equation}
which can be intuitively interpreted as the total information of $X$ and $Y$, and the amount of information left in the random variable $X$ after observing the random variable $Y$, respectively. 

We next review some important properties of Shannon entropy.

\begin{property}[Subadditivity of entropy, \cite{duchi2016lecture}]
	\label{property:subadditivity_entropy}
	Let $X_1, \cdots, X_t$ be any sequence of random variables, we have
	\begin{equation}
		\mathrm{H}(X_1, \cdots, X_t) \le \mathrm{H}(X_1) + \cdots + \mathrm{H}(X_t).
	\end{equation}
\end{property} 

\begin{property}[Maximum Value, Property 10.1.5 in \cite{wilde2013quantum}]
	\label{property:max_value_entropy}
	The maximum value of the entropy $\mathrm{H}(X)$ for a random variable $X$ taking values in an alphabet $\mathcal{X}$ is $\log|\mathcal{X}|$:
	\begin{equation}\label{eq:max_value_entropy}
		\mathrm{H}(X) \le \log(|\mathcal{X}|).
	\end{equation}
\end{property}

\begin{property}
	\label{property:conditional_entropy}
	Let $X,Y$ be any two random variables, we have
	\begin{equation}
		\mathrm{H}(Y) \le \mathrm{H}(X, Y).
	\end{equation}
\end{property}
This property can be obtained by observing the definition of conditional entropy that $\mathrm{H}(Y)=\mathrm{H}(X,Y)-\mathrm{H}(X|Y)$ and the non-negativity of entropy.

\medskip

\noindent \textbf{KL-divergence and $\chi^2$-divergence}. KL-divergence is used to measure the distance between probability distributions. Consider two discrete distributions (PDFs) $\mathrm{p},\mathrm{q}: \mathcal{X} \to [0,1]$ with $\mathcal{X}$ being the value space of the associated random variables, the KL-divergence between $\mathrm{p}$ and $\mathrm{q}$ is given by
\[
\mathrm{D}_{\KL}(\mathrm{p}~\| \mathrm{q}) =\left\{
\begin{aligned}
	& \sum_{x \in \mathcal{X}} \mathrm{p}(x) \log \frac{\mathrm{p}(x)}{\mathrm{q}(x)},  & ~ \supp(\mathrm{p}) \subseteq \supp(\mathrm{q}), \\
	& \infty,  & ~ \mbox{otherwise.}
\end{aligned}
\right.
\]
The KL-divergence satisfies $\mathrm{D}_{\KL}(\mathrm{p}\|\mathrm{q}) \ge 0$, where the equality holds if and only if $\mathrm{p}(x)=\mathrm{q}(x)$ for all $x\in \mathcal{X}$. Particularly, employing the convexity of $\log$ and Jensen's inequality yields
\begin{eqnarray}
	\mathrm{D}_{\KL}(\mathrm{p}\|\mathrm{q}) = -\mathbb{E} \left[ \log\frac{\mathrm{q}(X)}{\mathrm{p}(X)}  \right] \ge -\log \mathbb{E} \left[ \frac{\mathrm{q}(X)}{\mathrm{p}(X)}  \right] = -\log \left(\sum_{x\in \mathcal{X}} \mathrm{p}(x) \frac{\mathrm{q}(x)}{\mathrm{p}(x)}  \right) = -\log(1) = 0.
\end{eqnarray}

Notably, the simplification of KL-divergence could be difficult for some specific distributions due to the logarithm operation of PDFs. In this regard, an alternative way to compare distributions defined in the same space is the $\chi^2$-divergence.
\begin{definition}[$\chi^2$-divergence]\label{def:chi_div}
	The $\chi^2$-divergence between two discrete distributions $\mathrm{p}, \mathrm{q}: \mathcal{X} \to [0,1]$ defined on the same sample space $\mathcal{X}$ refers to
	\begin{equation}
		\mathrm{D}_{\chi^2}(\mathrm{p}\|\mathrm{q})=\sum_{x \in \mathcal{X}} \mathrm{q}(x) \left( \frac{\mathrm{p}(x)}{\mathrm{q}(x)} \right)^2-1.
	\end{equation}
\end{definition}
Remarkably, the KL-divergence can be upper bounded by $\chi^2$-divergence up to a multiplicative factor, which is encapsulated in the following lemma.
\begin{lemma}[Lemma 2.9 in Ref.~\cite{lowe2022lower}]
	Let $\mathrm{p}, \mathrm{q}: \mathcal{X} \to [0,1]$ be two discrete distributions with $\mathcal{X}$ being the value space of the associated random variables. It holds that
	\begin{equation}
		\mathrm{D}_{KL}(\mathrm{p}\|\mathrm{q}) \le  \frac{1}{ln(2)} \cdot \mathrm{D}_{\chi^2}(\mathrm{p}\|\mathrm{q}).
	\end{equation}
\end{lemma}

\medskip

\noindent \textbf{Mutual information}. The \textit{mutual information} $\mathrm{I}(X;Y)$ between $X$ and $Y$ is defined as the KL-divergence between their joint probability distribution $\mathrm{p}_{X,Y}$ and their product of (marginal) probability distributions $\mathrm{p}_X \mathrm{p}_Y$. Mathematically,
\begin{equation}\label{eq:MI_KL}
	\mathrm{I}(X;Y):= \mathrm{D}_{\KL}(\mathrm{p}_{X,Y} \| \mathrm{p}_X \mathrm{p}_Y) = \sum_{x,y} \mathrm{p}_{X,Y}(x,y) \log \frac{\mathrm{p}_{X,Y}(x,y)}{\mathrm{p}_{X}(x) \mathrm{p}_{Y}(y)}.
\end{equation} 
According to the non-negativity property of KL-divergence, it is easy to see that $\mathrm{I}(X;Y) \ge 0$ and the equality holds if and only if random variables $X$ and $Y$ are independent, i.e., $\mathrm{p}_{X,Y}(x,y)=\mathrm{p}_{X}(x) \mathrm{p}_{Y}(y)$. The mutual information between two random variables can be defined in several mathematically equivalent ways. Another definition used in this work expresses the mutual information in terms of the entropy of random variables. It may be shown that the above equation is equal to
\begin{equation}\label{eq:MI_entropy}
	\mathrm{I}(X; Y) = \mathrm{H}(X) - \mathrm{H}(X|Y) = \mathrm{H}(Y) - \mathrm{H}(Y|X) = \mathrm{H}(X) + \mathrm{H}(Y) - \mathrm{H}(X,Y).
\end{equation}
We would give a brief derivation of the equivalence between the definition of mutual information given in the first equality of Eqn.~(\ref{eq:MI_entropy}) and that in Eqn.~(\ref{eq:MI_KL}), while the second equality and the third equality in Eqn.~(\ref{eq:MI_entropy}) employs the definition of conditional entropy in Eqn.~(\ref{eq:conditional_entropy}). Particularly, using Bayes' rule, we have $\mathrm{p}_{X,Y}(x,y)=\mathrm{p}_Y(y)\mathrm{p}_{X|Y}(x|y)$, so 
\begin{eqnarray}
	\mathrm{I}(X;Y) = && \sum_{x,y} \mathrm{p}_Y(y) \mathrm{p}_{X|Y}(x|y) \log\frac{\mathrm{p}_{X|Y}(x|y)}{\mathrm{p}_X(x)} 
	\nonumber \\
	= && -\sum_{x,y} \mathrm{p}_Y(y) \mathrm{p}_{X|Y}(x|y) \log{\mathrm{p}_X(x)} + \sum_{y} \mathrm{p}_Y(y)  \sum_{x} \mathrm{p}_{X|Y}(x|y) \log \mathrm{p}_{X|Y}(x|y)
	\nonumber \\
	= && \mathrm{H}(X) - \mathrm{H}(X|Y).
\end{eqnarray}
To this end, the mutual information can be thought of as the amount of entropy removed (on average) in $X$ by observing $Y$.

Analogous to the definition of conditional entropy, the \textit{conditional mutual information} between $X$ and $Y$ given a third random variable $Z$ is defined as
\begin{equation}
	\mathrm{I}(X;Y|Z) := \sum_{z} \mathrm{I}(X;Y|Z=z) \mathrm{p}(z) =\mathrm{H}(X|Z) - \mathrm{H}(X|Y,Z) = \mathrm{H}(Y|Z) - \mathrm{H}(Y|X,Z),
\end{equation}
which refers to the mutual information between $X$ and $Y$ when $Z$ is observed (on average). 

We now present three fundamental properties about the mutual information. These facts will be leveraged to derive stronger lower bounds on the prediction error than those obtained by using Holevo's theorem when the measurements are subject to certain restrictions.

\begin{property}
	[Chain rule for mutual information]\label{property:chain_rule_MI}
	Let $X, Y_1, Y_2, \cdots, Y_t$ be any $t+1$ random variables, the mutual information between $X$ and $Y_1, \cdots, Y_t$ obeys
	\begin{equation}\label{eq:chain_rule_MI}
		\mathrm{I}(X;Y_1, \cdots, Y_t) = \sum_{j=1}^t \mathrm{I}(X; Y_j|Y_{j-1}, \cdots, Y_1).
	\end{equation}
\end{property}

\begin{property}
	[Subadditivity of mutual information]\label{property:subadd_MI}
	If $Y_1, \cdots, Y_t$ are independent given $X$, the mutual information between $X$ and $Y_1, \cdots, Y_t$ obeys
	\begin{equation}\label{eq:subadd_MI}
		\mathrm{I}(X;Y_1, \cdots, Y_t)  \le \sum_{j=1}^t \mathrm{I}(X;Y_j).
	\end{equation}
\end{property}

\begin{property}
	[Data processing inequality]\label{property:data_pro_ineqn}
	Let the random variables $X,Y,Z$ form a Markov chain $X \to Y \to Z$ such that if given $Y$, the random variables $X$ and $Z$ are independent. Then we have 
	\begin{equation}\label{eq:data_pro_ineqn}
		\mathrm{I}(X;Z) \le \mathrm{I}(X;Y).
	\end{equation}
\end{property}

\subsection{Haar integration} \label{app_subsec:Haar_integration}
In this subsection, we present a set of lemmas that enable the analytical computation of integrals of polynomial functions over the unitary group and the quantum state with respect to the unique normalized Haar measure. For a more comprehensive discussion on this subject matter, we refer the readers to Refs.~\cite{collins2006integration,puchala2011symbolic,cerezo2021cost}.
\begin{lemma}\label{lem:Tr(WW)}
	Let $\{ W_y\}_{y\in Y} \subset \mathbb{U}(d)$ form a unitary $t$-design with $t\ge1$, and let $A, B: \mathcal{H}_d\to\mathcal{H}_d$ be arbitrary linear operators. Then
	\begin{equation}
		\frac{1}{|Y|}\sum_{y\in Y}\Tr\left(W_{y}AW_{y}^{\dagger}B\right)=\int_{W \sim \haar}\mathrm{d}W\Tr\left(WAW^{\dagger}B\right)=\frac{\Tr(A)\Tr(B)}{d}.
	\end{equation}
\end{lemma}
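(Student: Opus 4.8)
The plan is to prove the two equalities separately, since they rest on entirely different facts. For the first equality I would observe that, viewed as a function of the matrix entries of $W$, the integrand $\Tr(WAW^{\dagger}B)$ is a polynomial that is homogeneous of degree one in the entries $\{W_{ij}\}$ and of degree one in the conjugate entries $\{\overline{W_{ij}}\}$, i.e. a balanced polynomial of total degree $(1,1)$. By hypothesis $\{W_y\}_{y\in Y}$ forms a unitary $t$-design with $t\ge 1$, so by the defining property of a design its empirical average reproduces the Haar average of every balanced polynomial of degree at most $(1,1)$. Applying this to the monomials $W_{ij}\overline{W_{lk}}$ and summing against the fixed coefficients $A_{jk}B_{li}$ gives $\frac{1}{|Y|}\sum_{y}\Tr(W_yAW_y^{\dagger}B)=\int_{W\sim\haar}\mathrm{d}W\,\Tr(WAW^{\dagger}B)$. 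Thus the first equality is immediate from the definition of a design, and the substance of the lemma lies in evaluating the Haar integral.

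For the second equality I would use left-invariance of the Haar measure together with Schur's lemma. Define the linear (twirling) map $\Phi(A):=\int_{W\sim\haar}\mathrm{d}W\,WAW^{\dagger}$. For any fixed $V\in\mathbb{U}(d)$, the change of variables $W\mapsto VW$ leaves $\mathrm{d}W$ invariant, so $V\Phi(A)V^{\dagger}=\int\mathrm{d}W\,(VW)A(VW)^{\dagger}=\Phi(A)$; hence $\Phi(A)$ commutes with every unitary, and since the unitaries linearly span $\mathbb{C}^{d\times d}$, with every operator on $\mathcal{H}_d$. Schur's lemma then forces $\Phi(A)=c(A)\,\mathbb{I}_d$ for a scalar $c(A)$ depending linearly on $A$. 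Taking traces and using $\Tr(WAW^{\dagger})=\Tr(A)$ fixes $c(A)\,d=\Tr(\Phi(A))=\Tr(A)$, so $\Phi(A)=\frac{\Tr(A)}{d}\mathbb{I}_d$. Finally $\int\mathrm{d}W\,\Tr(WAW^{\dagger}B)=\Tr(\Phi(A)B)=\frac{\Tr(A)}{d}\Tr(B)$, which is the claimed value.

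As an alternative, more pedestrian route to the second equality, I would expand in components, writing $\Tr(WAW^{\dagger}B)=\sum_{i,j,k,l}W_{ij}A_{jk}\overline{W_{lk}}B_{li}$ and invoking the first-moment Weingarten identity $\int\mathrm{d}W\,W_{ij}\overline{W_{lk}}=\frac{1}{d}\delta_{il}\delta_{jk}$; the Kronecker deltas collapse the sum to $\frac{1}{d}\sum_{i,j}A_{jj}B_{ii}=\frac{1}{d}\Tr(A)\Tr(B)$. The main, and in fact only genuine, obstacle is establishing this first-moment integral, which itself rests on exactly the same Schur-lemma argument; everything else is index bookkeeping. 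I would therefore favor the Schur-theoretic derivation, as it delivers both the explicit value of the twirl $\Phi$ and the integral in a single step without componentwise index chasing.
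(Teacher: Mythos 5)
Your proof is correct. The paper itself does not prove this lemma at all --- it simply cites standard references (Collins--\'Sniady, Pucha{\l}a--Miszczak, Cerezo et al.) for Lemmas 1--4 and only supplies a proof for Lemma 5 --- so there is no in-paper argument to compare against. Your two steps (the defining property of a unitary $1$-design applied to the balanced degree-$(1,1)$ polynomial $\Tr(WAW^{\dagger}B)$ for the first equality, and left-invariance of Haar measure plus Schur's lemma to show the twirl $\Phi(A)=\frac{\Tr(A)}{d}\mathbb{I}_d$ for the second) constitute the standard, complete derivation, and the Weingarten-based alternative you sketch is equally valid.
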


\begin{lemma}\label{lem:Tr(WW)Tr(WW)}
	Let $\{ W_y\}_{y\in Y} \subset \mathbb{U}(d)$ form a unitary $t$-design with $t\ge 2$, and let $A, B, C, D: \mathcal{H}_d\to\mathcal{H}_d$ be arbitrary linear operators. Then
	\begin{align}
		\frac{1}{|Y|}\sum_{y\in Y}\Tr\left(W_{y}AW_{y}^{\dagger}B\right)\Tr\left(W_{y}CW_{y}^{\dagger}D\right)=&\int_{W \sim \haar}\mathrm{d}W \Tr(WAW^{\dagger}B)\Tr(WCW^{\dagger}D) \nonumber \\
		=&\frac{1}{d^2-1}\Big(\Tr(A)\Tr(B)\Tr(C)\Tr(D) + \Tr(AC)\Tr(BD) \Big) \nonumber \\
		& - \frac{1}{d(d^2-1)}\Big(\Tr(AC)\Tr(B)\Tr(D) + \Tr(A)\Tr(C)\Tr(BD) \Big).
	\end{align}
\end{lemma}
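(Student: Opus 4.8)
The plan is to reduce both claimed equalities to a single second-moment (two-fold twirl) computation over $U(d)$. The first equality, between the $t$-design average and the Haar integral, is essentially definitional: the integrand $\Tr(W A W^{\dagger} B)\Tr(W C W^{\dagger} D)$ is a polynomial that is homogeneous of degree $2$ in the entries of $W$ and of degree $2$ in the entries of $\overline{W}$, and by the defining property of a unitary $t$-design with $t\ge 2$ such balanced degree-$(2,2)$ polynomials have identical averages under the finite design and under the Haar measure. So I would state this reduction first, after which it suffices to evaluate the Haar integral alone.

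For the Haar integral, the key reformulation is to collapse the product of two traces into a single trace on the doubled space $\mathcal{H}_d\otimes\mathcal{H}_d$. Using $\Tr(M)\Tr(N)=\Tr(M\otimes N)$ and grouping the unitaries, one writes
\begin{equation}
\Tr(W A W^{\dagger} B)\Tr(W C W^{\dagger} D)=\Tr\!\left[(W\otimes W)(A\otimes C)(W\otimes W)^{\dagger}(B\otimes D)\right].
\end{equation}
Integrating over $W$ then isolates the two-fold twirl $\Phi_2(X):=\int_{W\sim\haar}\mathrm{d}W\,(W\otimes W)\,X\,(W\otimes W)^{\dagger}$ applied to $X=A\otimes C$, so that the target quantity equals $\Tr\!\left[\Phi_2(A\otimes C)(B\otimes D)\right]$.

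The heart of the argument is the identification of $\Phi_2$. By Schur--Weyl duality for the diagonal representation $U\mapsto U\otimes U$ of $U(d)$, the commutant is two-dimensional and spanned by the identity $\mathbb{I}_{d^2}$ and the SWAP operator $\mathbb{S}$ (acting as $\mathbb{S}\ket{u}\ket{v}=\ket{v}\ket{u}$). Since $\Phi_2$ projects onto this commutant, I can write $\Phi_2(X)=\alpha\,\mathbb{I}_{d^2}+\beta\,\mathbb{S}$ with scalars $\alpha,\beta$ depending linearly on $X$. To pin them down I use that $\Phi_2$ preserves both invariant functionals $X\mapsto\Tr(X)$ and $X\mapsto\Tr(\mathbb{S}X)$, because $\mathbb{I}_{d^2}$ and $\mathbb{S}$ commute with every $W\otimes W$. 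Combined with $\Tr(\mathbb{I}_{d^2})=d^2$, $\Tr(\mathbb{S})=d$, and $\Tr(\mathbb{S}^2)=d^2$, this yields the linear system
\begin{align}
d^2\alpha+d\beta&=\Tr(X), \\
d\alpha+d^2\beta&=\Tr(\mathbb{S}X),
\end{align}
whose solution is $\alpha=\bigl(d\Tr(X)-\Tr(\mathbb{S}X)\bigr)/\bigl(d(d^2-1)\bigr)$ and $\beta=\bigl(d\Tr(\mathbb{S}X)-\Tr(X)\bigr)/\bigl(d(d^2-1)\bigr)$.

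Finally I would substitute $X=A\otimes C$ and contract against $B\otimes D$ using the two partial-trace identities $\Tr(A\otimes C)=\Tr(A)\Tr(C)$ and $\Tr\!\left[\mathbb{S}(A\otimes C)\right]=\Tr(AC)$ (and likewise $\Tr\!\left[\mathbb{I}_{d^2}(B\otimes D)\right]=\Tr(B)\Tr(D)$, $\Tr\!\left[\mathbb{S}(B\otimes D)\right]=\Tr(BD)$). Expanding $\Tr\!\left[(\alpha\mathbb{I}_{d^2}+\beta\mathbb{S})(B\otimes D)\right]=\alpha\Tr(B)\Tr(D)+\beta\Tr(BD)$ with $\Tr(X)=\Tr(A)\Tr(C)$ and $\Tr(\mathbb{S}X)=\Tr(AC)$ then reproduces the four stated terms after collecting the $1/(d^2-1)$ and $1/(d(d^2-1))$ prefactors. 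I expect the only genuinely substantive step to be the commutant identification, namely establishing that $\Span\{\mathbb{I}_{d^2},\mathbb{S}\}$ is exactly the fixed-point space of $\Phi_2$; once the two-dimensional structure and the two invariant functionals are in hand, the remainder is elementary linear algebra together with the swap/partial-trace identities.
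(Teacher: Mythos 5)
Your proposal is correct. Note, however, that the paper does not actually prove this lemma: it is quoted as a known second-moment Haar-integration identity with a citation to the literature (Collins--\'Sniady, Pucha{\l}a--Miszczak, Cerezo et al.), so there is no in-paper argument to compare against. Your derivation is the standard self-contained route: the reduction from the $t$-design average to the Haar integral for balanced degree-$(2,2)$ polynomials is definitional; the collapse $\Tr(M)\Tr(N)=\Tr(M\otimes N)$ correctly turns the integrand into $\Tr[\Phi_2(A\otimes C)(B\otimes D)]$ for the two-fold twirl $\Phi_2$; the identification of the image of $\Phi_2$ with $\Span\{\mathbb{I}_{d^2},\swap\}$ is exactly Schur--Weyl duality for the diagonal action (valid for $d\ge 2$, which is implicit since the formula degenerates at $d=1$); and your linear system from the two preserved functionals $\Tr(\cdot)$ and $\Tr(\swap\,\cdot)$ yields the correct coefficients $\alpha,\beta$. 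Substituting $\Tr(A\otimes C)=\Tr(A)\Tr(C)$ and $\Tr[\swap(A\otimes C)]=\Tr(AC)$ reproduces the four terms with the stated prefactors $1/(d^2-1)$ and $1/(d(d^2-1))$; I checked the algebra and it matches. The one step you flag as substantive --- that the fixed-point space of $\Phi_2$ is exactly the commutant of $\{W\otimes W\}$ --- is indeed the only place where a nontrivial input (Schur--Weyl, or equivalently the $t=2$ Weingarten calculus) is needed, and it is precisely what the paper's cited references establish.
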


\begin{lemma}\label{lem:Tr(psipsi)}
	Let the quantum state $\ket{\bm{\phi}} \in \mathcal{H}_d$ follows the Haar distribution, and let $A: \mathcal{H}_d\to\mathcal{H}_d$ be arbitrary linear operators. Then
	\begin{equation}
		\int_{\bm{\phi} \sim \haar}\ket{\bm{\phi}}\bra{\bm{\phi}} \mathrm{d}\bm{\phi} = \frac{\mathbb{I}_d}{d}, ~  \mbox{and hence}  ~ \int_{\bm{\phi} \sim\haar}\mathrm{d}{\bm{\phi}} \Tr\left(\ket{\bm{\phi}}\bra{\bm{\phi}}A \right)
		= \frac{\Tr(A)}{d}.
	\end{equation}
\end{lemma}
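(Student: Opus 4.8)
The plan is to reduce the state integral to a unitary integral and then invoke Lemma~\ref{lem:Tr(WW)}, which is already available. First I would use the fact that a Haar-distributed pure state $\ket{\phi} \in \mathcal{H}_d$ can be written as $\ket{\phi} = W\ket{\bm{e}_1}$, where $W \sim \haar$ is a Haar-random element of $U(d)$ and $\ket{\bm{e}_1}$ is a fixed reference vector. This identification holds because the Haar measure on $U(d)$ is left-invariant and the unitary group acts transitively on the unit sphere, so the image of any fixed unit vector under a Haar-random unitary is distributed exactly as a Haar-random pure state. Consequently,
\begin{equation}
	M := \int_{\phi \sim \haar} \ket{\phi}\bra{\phi}\, \mathrm{d}\phi = \int_{W \sim \haar} W \ket{\bm{e}_1}\bra{\bm{e}_1} W^{\dagger}\, \mathrm{d}W.
\end{equation}

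Next I would compute $M$ entrywise in the computational basis. Writing the matrix element as a trace, $\bra{\bm{e}_j} M \ket{\bm{e}_k} = \int_{W} \Tr\bigl(W \ket{\bm{e}_1}\bra{\bm{e}_1} W^{\dagger} \ket{\bm{e}_k}\bra{\bm{e}_j}\bigr)\, \mathrm{d}W$. Applying Lemma~\ref{lem:Tr(WW)} with $A = \ket{\bm{e}_1}\bra{\bm{e}_1}$ and $B = \ket{\bm{e}_k}\bra{\bm{e}_j}$ yields $\Tr(A)\Tr(B)/d = \delta_{jk}/d$, since $\Tr(A) = 1$ and $\Tr(B) = \braket{\bm{e}_j | \bm{e}_k} = \delta_{jk}$. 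Hence every diagonal entry of $M$ equals $1/d$ and every off-diagonal entry vanishes, so $M = \mathbb{I}_d/d$, which establishes the first identity. The second identity then follows by exchanging the integral with the trace using linearity: $\int_{\phi} \Tr(\ket{\phi}\bra{\phi} A)\, \mathrm{d}\phi = \Tr\bigl(M A\bigr) = \Tr(\mathbb{I}_d A / d) = \Tr(A)/d$.

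The step I expect to require the most care is the opening reduction, namely justifying that $\ket{\phi}\bra{\phi}$ with $\ket{\phi}$ Haar-distributed has the same law as $W\ket{\bm{e}_1}\bra{\bm{e}_1}W^{\dagger}$ with $W$ Haar-distributed; this rests on the left-invariance of the Haar measure together with transitivity of the $U(d)$-action on the unit sphere. An alternative route that sidesteps even this mild subtlety is a pure symmetry argument: since $V M V^{\dagger} = M$ for every $V \in U(d)$ by left-invariance of the measure, Schur's lemma forces $M \propto \mathbb{I}_d$, and taking the trace (which equals $1$) pins the constant to $1/d$. Either approach is routine, so I anticipate no genuine obstacle beyond bookkeeping.
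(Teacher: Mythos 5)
Your proof is correct. The paper itself does not prove this lemma; it cites prior literature (Collins--\'Sniady and related works) for Lemmas~\ref{lem:Tr(WW)}--\ref{lem:Tr(psipsi)Tr(psipsi)} and only supplies a proof for Lemma~\ref{lem:Tr(WA)Tr(WB)_Orth}. Your reduction of the state integral to a unitary integral via $\ket{\phi}=W\ket{\bm{e}_1}$ followed by an application of Lemma~\ref{lem:Tr(WW)} is the standard derivation and is exactly in the spirit of how the paper itself handles Lemma~\ref{lem:Tr(WA)Tr(WB)_Orth} (treating Haar states as columns of a Haar unitary); the Schur's-lemma symmetry argument you mention is an equally valid shortcut. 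No gaps.
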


\begin{lemma}\label{lem:Tr(psipsi)Tr(psipsi)}
	Let the quantum state $\ket{\bm{\phi}} \in \mathcal{H}_d$ follows the Haar distribution, and let $A, B: \mathcal{H}_d\to\mathcal{H}_d$ be arbitrary linear operators. Then
	\begin{equation}
		\int_{\bm{\phi} \sim \haar}\ket{\bm{\phi}}\bra{\bm{\phi}}^{\otimes 2} \mathrm{d} \phi = \frac{\mathbb{I}_d^{\otimes 2}+\swap}{d(d+1)}, ~  \mbox{and}  ~ 
		\int_{\bm{\phi} \sim \haar} \mathrm{d}{\bm{\phi}} \Tr\left(\ket{\bm{\phi}}\bra{\bm{\phi}}A \right)\Tr\left(\ket{\bm{\phi}}\bra{\bm{\phi}}B \right)
		= \frac{\Tr(A)\Tr(B)+\Tr(AB)}{d(d+1)},
	\end{equation}
	where the notation $\swap$ refers to the swap operator on the space $\mathcal{H}_d^{\otimes 2}$.	
\end{lemma}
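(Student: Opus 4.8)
The plan is to establish the operator identity first by a short symmetry argument and then read off the scalar trace formula as an immediate consequence; as a cross-check (and a fully self-contained alternative) I would reduce the state average to the unitary average already evaluated in Lemma~\ref{lem:Tr(WW)Tr(WW)}, using that a Haar-random pure state can be written as $\ket{\phi}=W\ket{\bm{e}_1}$ with $W$ Haar-distributed on $U(d)$, so that $\ket{\phi}\bra{\phi}^{\otimes 2}=(W\ket{\bm{e}_1}\bra{\bm{e}_1}W^{\dagger})^{\otimes 2}$ turns both quantities into polynomial unitary integrals.

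For the operator identity, set $M:=\int_{\phi\sim\haar}\ket{\phi}\bra{\phi}^{\otimes 2}\,\mathrm{d}\phi$. Haar invariance under $\ket{\phi}\mapsto U\ket{\phi}$ gives $(U\otimes U)M(U\otimes U)^{\dagger}=M$ for every $U\in U(d)$, so $M$ lies in the commutant of $\{U^{\otimes 2}\}$, which for two copies is spanned by $\mathbb{I}_d^{\otimes 2}$ and $\swap$; hence $M=\alpha\,\mathbb{I}_d^{\otimes 2}+\beta\,\swap$. I then fix the two scalars by two linear conditions. Normalization gives $\Tr(M)=\int 1\,\mathrm{d}\phi=1$, i.e. $\alpha d^2+\beta d=1$, while the permutation symmetry $\swap\,\ket{\phi}\bra{\phi}^{\otimes 2}=\ket{\phi}\bra{\phi}^{\otimes 2}$ yields $\swap M=M$, so $\Tr(\swap M)=1$, i.e. $\alpha d+\beta d^2=1$. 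Solving these gives $\alpha=\beta=1/(d(d+1))$, which is exactly the claimed symmetric-subspace form.

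The scalar formula follows by pairing $M$ against $A\otimes B$. Writing the product of traces as a single trace on the two-copy space, $\Tr(\ket{\phi}\bra{\phi}A)\,\Tr(\ket{\phi}\bra{\phi}B)=\Tr\big(\ket{\phi}\bra{\phi}^{\otimes 2}(A\otimes B)\big)$, and integrating gives $\Tr(M(A\otimes B))=\frac{1}{d(d+1)}\Tr\big((\mathbb{I}_d^{\otimes 2}+\swap)(A\otimes B)\big)$. The identity term evaluates to $\Tr(A)\Tr(B)$, and the swap trick $\Tr(\swap(A\otimes B))=\Tr(AB)$ handles the second term, producing $(\Tr(A)\Tr(B)+\Tr(AB))/(d(d+1))$ as claimed.

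I expect no serious obstacle, since this is a standard second-moment computation; the only points needing care are justifying $\swap M=M$ from the symmetry of $\ket{\phi}\bra{\phi}^{\otimes 2}$ rather than assuming it, and invoking that the commutant of $U^{\otimes 2}$ is exactly two-dimensional. If one prefers to avoid this representation-theoretic input entirely, applying Lemma~\ref{lem:Tr(WW)Tr(WW)} with the substitution $A\mapsto\ket{\bm{e}_1}\bra{\bm{e}_1}$, $C\mapsto\ket{\bm{e}_1}\bra{\bm{e}_1}$, $B\mapsto A$, $D\mapsto B$ reproduces the scalar formula directly, using $\Tr(\ket{\bm{e}_1}\bra{\bm{e}_1})=\Tr(\ket{\bm{e}_1}\bra{\bm{e}_1}\ket{\bm{e}_1}\bra{\bm{e}_1})=1$ and the simplification $\tfrac{1}{d^2-1}-\tfrac{1}{d(d^2-1)}=\tfrac{1}{d(d+1)}$.
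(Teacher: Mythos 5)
Your proof is correct. Note that the paper does not actually prove this lemma: it is stated as a standard Haar-integration fact with citations to the literature (the paper only supplies a proof for Lemma~\ref{lem:Tr(WA)Tr(WB)_Orth}, which it derives from Lemma~\ref{lem:Tr(WW)Tr(WW)}). Both of your routes are sound. The symmetry argument is the standard one: Haar invariance places $M$ in the commutant of $\{U^{\otimes 2}\}$, which is spanned by $\mathbb{I}_d^{\otimes 2}$ and $\swap$ by Schur--Weyl duality, and your two trace conditions $\alpha d^2+\beta d=1$ and $\alpha d+\beta d^2=1$ (the latter justified correctly by $\swap\ket{\phi}^{\otimes 2}=\ket{\phi}^{\otimes 2}$) pin down $\alpha=\beta=1/(d(d+1))$; the scalar formula then follows from $\Tr(\swap(A\otimes B))=\Tr(AB)$. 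Your fallback reduction to Lemma~\ref{lem:Tr(WW)Tr(WW)} via $\ket{\phi}=W\ket{\bm{e}_1}$ also checks out arithmetically ($\tfrac{1}{d^2-1}-\tfrac{1}{d(d^2-1)}=\tfrac{1}{d(d+1)}$) and has the merit of being self-contained given the lemmas the paper already states; it mirrors how the paper itself derives Lemma~\ref{lem:Tr(WA)Tr(WB)_Orth} from Lemma~\ref{lem:Tr(WW)Tr(WW)}.
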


\begin{lemma}\label{lem:Tr(WA)Tr(WB)_Orth}
	Let $A: \mathcal{H}_d \to \mathcal{H}_d$ be arbitrary traceless linear operators, i.e., $\Tr(A)=0$. Let $\ket{\bm{\mu}}$ and $\ket{\bm{\nu}}$ be two orthogonal Haar random states. Then
	\begin{equation}\label{eq:Tr(WA)Tr(WB)_Orth}
		\int_{\bm{\nu} \sim \haar}\int_{\bm{\mu} \sim \haar}\Tr(A\ket{\bm{\mu}}\bra{\bm{\mu}})\Tr(A\ket{\bm{\nu}}\bra{\bm{\nu}}) \mathrm{d} \bm{\mu} \mathrm{d} \bm{\nu} = -\frac{1}{d(d^2-1)} \Tr\big( A^2 \big).
	\end{equation}
\end{lemma}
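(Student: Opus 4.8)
The plan is to reduce the integral over orthogonal pairs of states to a single Haar integral over the unitary group and then invoke Lemma~\ref{lem:Tr(WW)Tr(WW)} directly. First I would encode the orthogonality constraint by realizing the pair $(\ket{\mu},\ket{\nu})$ as two distinct columns of a single Haar-random unitary $W\in U(d)$: set $\ket{\mu}=W\ket{\bm{e}_1}$ and $\ket{\nu}=W\ket{\bm{e}_2}$. By the left-invariance of the Haar measure, the resulting pair is distributed uniformly over all ordered orthonormal $2$-frames, which is exactly the joint distribution intended by ``two orthogonal Haar random states.'' This reduction is the crux of the argument; the one subtle point to verify is that sampling two columns of a Haar unitary genuinely reproduces the uniform (invariant) measure on the complex Stiefel manifold of orthonormal pairs, but this is immediate from the invariance of $\mathrm{d}W$ under $W\mapsto VW$ for any fixed $V$.

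With this substitution in hand, I would rewrite each trace as a conjugation of a rank-one projector, namely $\Tr(A\ket{\mu}\bra{\mu})=\Tr\big(W\ket{\bm{e}_1}\bra{\bm{e}_1}W^{\dagger}A\big)$ and $\Tr(A\ket{\nu}\bra{\nu})=\Tr\big(W\ket{\bm{e}_2}\bra{\bm{e}_2}W^{\dagger}A\big)$. The integrand then has precisely the bilinear form $\Tr(WAW^{\dagger}B)\Tr(WCW^{\dagger}D)$ evaluated by Lemma~\ref{lem:Tr(WW)Tr(WW)}, under the operator assignments $A\mapsto\ket{\bm{e}_1}\bra{\bm{e}_1}$, $B\mapsto A$, $C\mapsto\ket{\bm{e}_2}\bra{\bm{e}_2}$, and $D\mapsto A$.

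The final step is to evaluate the four trace monomials produced by Lemma~\ref{lem:Tr(WW)Tr(WW)} using the two available simplifications: tracelessness $\Tr(A)=0$, and orthogonality of the computational basis, which gives $\Tr\big(\ket{\bm{e}_1}\bra{\bm{e}_1}\ket{\bm{e}_2}\bra{\bm{e}_2}\big)=|\braket{\bm{e}_1|\bm{e}_2}|^2=0$. Every term carrying a factor of $\Tr(A)$ drops out, and the remaining term proportional to $\Tr\big(\ket{\bm{e}_1}\bra{\bm{e}_1}\ket{\bm{e}_2}\bra{\bm{e}_2}\big)$ likewise vanishes, so that only the contribution $-\tfrac{1}{d(d^2-1)}\Tr\big(\ket{\bm{e}_1}\bra{\bm{e}_1}\big)\Tr\big(\ket{\bm{e}_2}\bra{\bm{e}_2}\big)\Tr(A^2)$ survives. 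Since $\Tr\big(\ket{\bm{e}_1}\bra{\bm{e}_1}\big)=\Tr\big(\ket{\bm{e}_2}\bra{\bm{e}_2}\big)=1$, this equals $-\tfrac{1}{d(d^2-1)}\Tr(A^2)$, which is the asserted identity.

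I expect essentially all the work to lie in the first paragraph's measure-theoretic justification of the column-sampling reduction; once the orthogonal pair is identified with two columns of a Haar unitary, the remainder is a purely mechanical application of the Weingarten-type formula in Lemma~\ref{lem:Tr(WW)Tr(WW)} together with the vanishing of $\Tr(A)$ and of the cross-overlap.
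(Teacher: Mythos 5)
Your proposal is correct and follows essentially the same route as the paper: the paper likewise realizes the orthogonal pair as two columns of a Haar-random unitary (writing $\ket{\mu}=W\ket{0}^{\otimes n}$, $\ket{\nu}=W\ket{1}^{\otimes n}$), applies Lemma~\ref{lem:Tr(WW)Tr(WW)}, and kills all but one term using $\Tr(A)=0$ and the vanishing overlap of the two basis projectors. Your bookkeeping of the surviving term $-\tfrac{1}{d(d^2-1)}\Tr(A^2)$ matches the paper's computation.
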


While Lemma~\ref{lem:Tr(WW)} to Lemma~\ref{lem:Tr(psipsi)Tr(psipsi)} have been established in many previous literature \cite{collins2006integration,puchala2011symbolic,cerezo2021cost}, Lemma~\ref{lem:Tr(WA)Tr(WB)_Orth} is particularly tailored to derive the main results of this work in a specific setting. To this end, here we provide a brief derivation of Lemma 5 by employing the results of Lemma 2.

\begin{proof}[Proof of Lemma~\ref{lem:Tr(WA)Tr(WB)_Orth}]
	We first observe that the orthogonal Haar random states $\ket{\bm{\mu}}$ and $\ket{\bm{\nu}}$ can be treated as the arbitrary two columns of a Haar random unitary $W$ in $\mathbb{S}\mathbb{U}(d)$. With this regard, $\ket{\bm{\mu}}$ and $\ket{\bm{\nu}}$ can be expressed in the form of $\ket{\bm{\mu}}=W\ket{0}^{\otimes n}$ and $\ket{\bm{\nu}}=W\ket{1}^{\otimes n}$, respectively. Therefore, we have
	\begin{align}
		& \int_{\bm{\nu} \sim \haar}\int_{\bm{\mu} \sim \haar}\Tr(A\ket{\bm{\mu}}\bra{\bm{\mu}})\Tr(A\ket{\bm{\nu}}\bra{\bm{\nu}}) \mathrm{d} \bm{\mu} \mathrm{d} \bm{\nu} 
		\nonumber \\
		= & \int_{W \sim \haar} \Tr\left(AW \ket{\bm{0}}\bra{\bm{0}}^{\otimes n}W^{\dagger})\Tr(AW \ket{\bm{1}}\bra{\bm{1}}^{\otimes n}W^{\dagger}\right) \mathrm{d} W 
		\nonumber \\
		= &  \frac{1}{d^2-1} \left( \Tr\big( A \big)^2 + \Tr\big( A^2 \big) \Tr\big(\ket{0}\bra{0}^{\otimes n} \ket{1}\bra{1}^{\otimes n} \big) \right) - \frac{1}{d(d^2-1)} \left( \Tr\big( A^2 \big) +  \Tr\big( A \big)^2 \Tr\big(\ket{\bm{0}}\bra{\bm{0}}^{\otimes n} \ket{\bm{1}}\bra{\bm{1}}^{\otimes n} \big) \right)
		\nonumber \\
		= & -\frac{1}{d(d^2-1)} \Tr\big( A^2 \big),
	\end{align}
	where the second equality employs Lemma~\ref{lem:Tr(WW)Tr(WW)}, the second equality follows $\Tr(A)=0$ and $\Tr(\ket{\bm{0}}\bra{\bm{0}}^{\otimes n} \ket{\bm{1}}\bra{\bm{1}}^{\otimes n} )=0$. This completes the proof.
\end{proof}

\subsection{Related work}\label{app_subsec:related_work}
In this subsection, we review prior literature  related to the quantum NFL theorem, quantum dynamics learning, and the main methodologies employed in the proof of this work.

\medskip

\noindent	\textbf{Quantum NFL theorem.} 
The no-free-lunch (NFL) theorem is a celebrated result in learning theory that limits one’s ability to learn a function with a training dataset. It is widely studied in classical learning theory \cite{wolpert1996existence, wolpert1996lack, wolpert1997no, ho2002simple,wolf2018mathematical,adam2019no}. Ref.~\cite{poland2020no} made the initial attempt to establish NFL theorem in the context of quantum machine learning where the inputs and outputs are quantum states. Ref.~\cite{sharma2022reformulation} reformulated the quantum NFL theorem in the quantum-assisted learning protocols where the bipartite entangled states are prepared as the input states by introducing a reference quantum system. They demonstrated that the utilization of entangled data can remove the exponential cost of the training data size for learning unitaries with unentangled data. However, their results are established in the ideal setting with infinite measurements, which can not apply to quantifying the practical power of entangled data.

\medskip

\noindent   \textbf{Complexity in QML.} 
The complexity of quantum learning is a multi-faceted concept \cite{banchi2023statistical}, including query complexity (a.k.a, copy complexity) and sample complexity (a.k.a, data complexity). Each of these metrics has its practical meaning and has been separately studied in a lot of work to explore the potential quantum advantages in the field of QML. We categorize and summarize the related work according to the employed evaluation metrics in Supplementary Table~\ref{tab:complexity}.

\renewcommand{\arraystretch}{1.5}
\begin{table}[ht]
	\centering
	\begin{tabular}{|p{7.5cm}|p{8.2cm}|}
		\hline  
		\multicolumn{2}{|c|}{\textbf{Sample complexity}}\\
		\hline
		\centering \textbf{Description} &   \textbf{~~~~~~~~~~~~~~~~Related work} \\
		\hline
		As in classical machine learning, limitations on the amount of available data which may be classical or quantum play a key role in determining the achievable accuracy levels for data-driven methods. Sample complexity, denoted as $N$, refers to the requirements in terms of the number of distinct input states on the performance of a learning algorithm. & \makecell[lt]{\textbf{\cite{huang2021power}}: studying the 
			generalization of quantum kernels; \\ \textbf{\cite{sharma2022reformulation}}:
			studying quantum NFL theorem. \\ \textbf{\cite{caro2023out}}: studying the 
			out-of-distribution generalization \\ in learning quantum unitary;\\
			\textbf{\cite{caro2022generalization}}: studying the generalization of VQAs with finite \\ parameterized 
			gats;}
		\\
		\hline 
		\multicolumn{2}{|c|}{\textbf{Query complexity}}\\
		\hline 
		\centering \textbf{Description} & \textbf{~~~~~~~~~~~~~~~~Related work} \\
		\hline
		With quantum data, the number of copies available of a given quantum state, i.e., the number of measurements $m$, determines the amount of information that can be extracted from the state. Query complexity, denoted as $Q$, reflects the requirements in terms of copies of all quantum states in the training dataset that are needed to ensure given accuracy levels. In this regard, we have $Q=N\cdot m$. & \makecell[lt]{\textbf{\cite{huang2021information}, \cite{chen2022exponential}, \cite{huang2022quantum}}: studying the quantum advantage of \\ learning 
			quantum processes with quantum memory.}
		\\  
		\hline  
	\end{tabular}
	\caption{Definitions of sample complexity and query complexity,  and the related work in the field of QML.}
	\label{tab:complexity}
\end{table}

In particular, Ref.~\cite{huang2021power} studied the generalization of quantum kernels in terms of sample complexity and achieved provable quantum advantage for some specific data. Refs.~\cite{caro2022generalization,caro2023out,caro2021encoding} studied the generalization of quantum learning models in terms of sample complexity from various aspects, like data encoding, data distribution, and circuit complexity. Moreover, Refs.~\cite{huang2021information,huang2022quantum, chen2022exponential} studied the query complexity of learning quantum processes with and without quantum memory. Besides, instead of considering the query complexity, the sample complexity and the number of measurements for each input state are also considered in many existing works. Ref.\cite{liu2021rigorous} and Ref.~\cite{wang2021towards} studied the generalization error bound of quantum kernels in terms of sample complexity and the number of measurements under the setting of finite measurement limitation. It is meaningful in the sense that some states are easy to prepare many times while preparing a large set of different training states is hard in many practical scenarios.

\medskip

\noindent	\textbf{Quantum dynamics learning.}
Quantum computers have the potential to enhance classical machine-learning models. One such application is the utilization of quantum dynamics learning, which involves the conversion of an analogue quantum unitary into a digital form that can be subsequently examined on either a quantum or classical computer. Numerous studies have delved into understanding the provable quantum advantage in terms of sample complexity, aiming to achieve approximate learning of quantum unitaries with small prediction errors. For instance, Ref.~\cite{chung2018sample,huang2021power,liu2021rigorous,caro2022generalization,wang2021towards,abbas2021power,banchi2021generalization,caro2021encoding,cai2022sample,du2022efficient,du2023problem,jerbi2023power} studied in variational quantum machine learning how the sample complexity bounds are determined by the type of datasets and the complexity of the quantum model which acts as an unknown quantum unitary to be learned. These findings are obtained upon assumptions regarding the complexity of the unknown unitaries or the type of datasets, which are subsequently utilized to constrain the information-theoretic complexity of the learning task. As a result, these approaches are typically not applicable to general datasets and highly complex quantum unitaries without specific conditions. Recently, a growing literatures prove the exponential separation in sample complexity between learning unitaries with and without quantum memory \cite{chen2022exponential, huang2023learning, huang2021information, buadescu2021improved}. However, the proposed algorithms heavily depend on the entanglement measurement for optimal tomography, which may pose significant challenges from a practical standpoint. 

Besides learning the quantum dynamics for predicting the output states, another important task is learning the quantum dynamics under a given observable to predict the expectation value of the observable \cite{huang2020predicting, huang2021information}. Among various choices of observable, taking the projective measurement as the observable related to taking the projective measurement as observable in the learning task $\mathrm{f}_{\bm{U}}(\bm{\rho})=\Tr(\bm{O}\bm{U}\bm{\rho} \bm{U}^{\dagger})$ is related to many practical tasks, such as the classification task \cite{li2022recent,caro2022generalization,du2023problem} and quantum feature extraction and dimensionality reduction \cite{lloyd2014quantum,liang2020variational,xin2021experimental}. In quantum classification tasks, projective measurements can be used to distinguish between different classes of data. After processing the quantum states representing data points through a variational circuit, projective measurements are performed to read out the result, which corresponds to the classification of the data.  A typical task for dimensionality reduction is a quantum principle component analysis. For given training data, learning the target unitary under projective measurement could be used to construct the most significant features of the original data.

\medskip

\noindent	\textbf{Techniques employed in the proof.}
The main used tool for proving the lower bound of the prediction error is Fano's inequality. This inequality is widely used for obtaining the lower bounds in both classical and quantum learning theory with various formulations \cite{Tsybakov_2009,wilde2013quantum,duchi2016lecture,quek2022exponentially}. In the filed of quantum machine learning, one of its applications involves utilizing Fano's inequality to derive the lower bound of sample complexity for quantum state tomography \cite{flammia2012quantum,haah2016sample, lowe2022lower,huang2021information}. Particularly, a fundamental framework for establishing such lower bounds builds upon insights from Fano's inequality \cite{Tsybakov_2009,wilde2013quantum}, suggesting that performing tomography with sufficient precision accurately reduces to distinguishing well-separated quantum states.  However, the aim in our work is to proving the lower bound of prediction errors   for dynamics learning. In this regard, we adopt another formulation of Fano's inequality given in Ref.~\cite{duchi2016lecture}, which is widely used in classical learning theory for proving the lower bound of prediction errors. Such formulation suggests that learning unitary with limited training data accurately reduces to distinguishing well-separated PDF of the related measurement outputs. Intuitively, the expression of Fano's inequality used in this work and that used in the context of quantum state tomography could be regarded as the \textit{dual formulation} in the sense that we employ the amount of training quantum states to lower bound the prediction error while previous works exploited a pre-fixed prediction error to lower bound the sample complexity.

Despite the various formulations of Fano's inequality, a common and standard technique in the field of density estimation involves ``discretizing'' the learning problem. This technique is employed to achieve worst-case lower bounds, which can be seen as the classical counterpart of quantum tomography (see for example Chapter 2 of Ref.~\cite{Tsybakov_2009}). One way to rigorously establish this argument is by utilizing Fano's inequality in conjunction with Holevo's theorem, which offers an interpretation in terms of a communication protocol between two parties, namely Alice and Bob. Holevo's theorem is conventionally employed to provide an upper bound on the mutual information between the target message $X$, transmitted by Alice, and the decoded message $\hat{X}$ received by Bob. However, there are two caveats to using Holevo's theorem in the context of the quantum NFL theorem. First, the utilization of Holevo's theorem does not take into account restrictions on the measurements we are allowed to perform on the $N$ copies of the state. Second,  it fails to consider the limitations imposed by the availability of a limited number of training states on the accessible information regarding $X$.

In contrast to previous studies, our approach distinguishes itself by directly exploiting the connection between the mutual information of two random variables and the Kullback-Leibler (KL) divergence of related distributions. Additionally, we utilize techniques for Haar integration with respect to the random training data and the random target unitary. Two crucial technical steps in our approach involve analyzing the KL divergence instead of individual measurement outcomes' probabilities, as well as investigating the maximal mutual information under the constraint of limited training states. These steps enable us to establish tight lower bounds on prediction error using a predetermined single-copy non-adaptive measurement strategy. 

\section{Problem setup of learning quantum dynamics in the view of quantum NFL theorem}\label{app_sec:NFL_Preliminaries}
For self-consistency, in this section we first recap the main results of the quantum NFL theorem which is achieved in the ideal setting by the study \cite{sharma2022reformulation}. Then we introduce the learning problems of the entanglement-assisted quantum NFL theorem in a realistic scenario for further elucidation.

\subsection{Quantum NFL theorem for learning quantum dynamics in the ideal setting}

In this subsection, we briefly recap the results achieved in Ref.~\cite{sharma2022reformulation} for self-consistency. The learning problem studied in Ref.~\cite{sharma2022reformulation} aims to learn the full representation of target unitary operator $\bm{U}$ through training a hypothesis unitary $\bm{V}_{\mathcal{S}_Q}$ on the training data 
\begin{equation}
	\mathcal{S}_Q = \left\{ (\ket{\bm{\psi}_j}, \ket{\bm{\phi}_j}): \ket{\bm{\psi}_j},\ket{\bm{\phi}_j}=(\bm{U}\otimes \mathbb{I}_d)\ket{\bm{\psi}_j} \in \mathcal{H}_{\mathcal{X},\mathcal{R}}  \right\}_{j=1}^N.
\end{equation}
In this regard, they adopt the averaged error in the trace norm of the output quantum states rather than the distance of classical measurement outputs $\mathrm{R}_{\bm{U}}(\bm{V}_{\mathcal{S}_Q})$ between the learned hypothesis and the target hypothesis as the risk function $\widetilde{\mathrm{R}}_{\bm{U}}(\bm{V}_{\mathcal{S}_Q})$ to quantify the accuracy of the hypothesis $\bm{V}_{\mathcal{S}_Q}$. Formally, the risk function is defined as 
\begin{eqnarray}\label{eq:risk_trace_norm}
	\widetilde{\mathrm{R}}_{\bm{U}}(\bm{V}_{\mathcal{S}_Q}) := && \int_{\bm{\phi} \sim \haar} \mathrm{d} \phi \frac{1}{4} \left\| \bm{U}\ket{\bm{\phi}}\bra{\bm{\phi}}\bm{U}^{\dagger} - \bm{V}_{\mathcal{S}_Q}\ket{\bm{\phi}}\bra{\bm{\phi}}\bm{V}_{\mathcal{S}_Q}^{\dagger} \right\|_1^2
	\nonumber \\
	= && 1- \frac{d+|\Tr(\bm{U}^{\dagger}\bm{V}_{\mathcal{S}_Q})|^2}{d(d+1)},
\end{eqnarray}
where $\ket{\bm{\phi}}\in \mathcal{H}_{\mathcal{X}}$ and the second equality follows directly Haar integration calculation. They showed that under the assumption of perfect training on the output states with the infinite measurements, i.e., $(\bm{U}\otimes \mathbb{I}_d)\ket{\bm{\psi}_j}=(\bm{V}_{\mathcal{S}_Q}\otimes \mathbb{I}_d)\ket{\bm{\psi}_j}$ for $j\in[N]$, then the unitary operators $\bm{U}^{\dagger}\bm{V}_{\mathcal{S}_Q}$ can be reduced to the form of 
\begin{equation}
	\bm{U}^{\dagger}\bm{V}_{\mathcal{S}_Q} = \left(\begin{array}{cc}
		\mathbb{I}_{N\cdot r} &  0  \\
		0 &  \mathrm{Y}
	\end{array}\right)
\end{equation}
where $Y \in \mathbb{S}\mathbb{U}(d-Nr)$ follows Haar distribution when the target unitary is Haar distributed. This leads to their main results as shown in the following theorem.

\begin{theorem-non}[Formulated theorem according to the results of \cite{sharma2022reformulation}]
	Let $\bm{U}$ and $\bm{V}_{\mathcal{S}_Q}$ be the target unitary sampled from Haar distribution and the learned hypothesis unitary on the entangled data $\mathcal{S}_Q$, such that the assumption of perfect training on the output states holds. the averaged risk function defined in Eqn.~(\ref{eq:risk_trace_norm}) over all target unitaries and training data yields
	\begin{equation}
		\mathbb{E}_{\bm{U}} \mathbb{E}_{\mathcal{S}_Q}\widetilde{\mathrm{R}}_{\bm{U}}(\bm{V}_{\mathcal{S}_Q}) \ge 1 - \frac{d+r^2N^2}{d(d+1)}.
	\end{equation}
\end{theorem-non}
It implies that the entangled data can remove the exponential cost of training data size in learning unitaries. Specifically, in the extreme cases of $r=d$ and $r=1$, the required training data size for reaching the zero risk of the former achieves an exponential advantage over that of the latter.

\subsection{Quantum NFL theorem for learning quantum dynamics in the realistic scenario} \label{app_subsec:learning_problems_QNFL}

Let us first recall the required notations for the learning problems of quantum NFL theorem. Let $\bm{U} \in \mathbb{S}\mathbb{U}(d)$ denote the target unitary and $\bm{O}=\ket{\bm{o}}\bra{\bm{o}}$ be the fixed projective measurement which acts on the quantum system $\mathcal{X}$. Denote $\mathcal{S}$ as the training data with the size $|\mathcal{S}|=N$. The entanglement-assisted protocol introduces a reference system $\mathcal{R}$ to prepare the entangled training states in the Hilbert space $\mathcal{H}_{\mathcal{X}\mathcal{R}}:=\mathcal{H}_{\mathcal{X}} \otimes \mathcal{H}_{\mathcal{R}}$ with $\dim(\mathcal{H}_{\mathcal{R}})=d$. The $m$-measurement outcomes $(\bm{o}_{j1}, \cdots, \bm{o}_{jm})$ of the observable $\bm{O}$ on the output states $(\bm{U} \otimes \mathbb{I}_d)\ket{\bm{\psi}_j}$ are collected to construct the response of the input $\ket{\bm{\psi}}$ by taking the average value $\bm{o}_j=\sum_{k=1}^m \bm{o}_{jk}/m$. This leads to the training data
\begin{equation}
	\mathcal{S} = \left\{ (\ket{\bm{\psi}_j}, \bm{o}_j) : \ket{\bm{\psi}_j}\in \mathcal{H}_{\mathcal{X}\mathcal{R}}, \bm{o}_j=\frac{1}{m}\sum_{k=1}^m \bm{o}_{jk} \right\}_{j=1}^N,
\end{equation}
where $\bm{o}_{jk}$ is the $k$-th measurement outcome on the state $(\bm{U} \otimes \mathbb{I}_d)\ket{\bm{\psi}_j}$ and $\mathbb{E}\bm{o}_j = \Tr[(\bm{U}^{\dagger}\bm{O}\bm{U} \otimes \mathbb{I}_d)\ket{\bm{\psi}_j}\bra{\bm{\psi}_j}]$.
From the Schmidt decomposition of a pure state, each $\ket{\bm{\psi}_j}$ can be represented as
\begin{equation}\label{eq:entangled_state}
	\ket{\bm{\psi}_j}=\sum_{k=1}^r \sqrt{c_{j, k}}\ket{\bm{\xi}_{j, k}}_{\mathcal{X}}\ket{\bm{\zeta}_{j, k}}_{\mathcal{R}},
\end{equation}
where $\sum_{k=1}^r c_{j, k}=1$ and hence $\ket{c_j}=(\sqrt{c_{j1}}, \cdots, \sqrt{c_{jr}})^{\top}$ forms a state vector in the Hilbert space $\mathcal{H}_r$. Suppose that all training states $\ket{\bm{\psi}_j}\in \mathcal{S}$ have the same Schmidt rank $r\in \{1,2,\cdots,d\}$ across the cut $\mathcal{H}_{\mathcal{X}} \otimes \mathcal{H}_{\mathcal{R}}$.

The problem of incoherent quantum dynamics learning aims to train a hypothesis unitary $\bm{V}_{\mathcal{S}}$ on the training data $\mathcal{S}$ such that it can approximate the output of the target unitary $\bm{U}$ under the observable $\bm{O}$ for given any unseen input state $\ket{\bm{\psi}} \in \mathcal{H}_d$. Namely, this leads to the target function with respect to the input state $\ket{\bm{\psi}}$ (or $\bm{\rho}=\ket{\bm{\psi}}\bra{\bm{\psi}}$ in the density matrix representation)
\begin{equation}
	\label{eq:learning_model}
	\mathrm{f}_{\bm{U}}(\bm{\psi})=\Tr(\bm{U}^{\dagger}\bm{O}\bm{U}\ket{\bm{\psi}}\bra{\bm{\psi}}).
\end{equation}
Similarly, we denote the learned hypothesis function as $\mathrm{h}_{\mathcal{S}}(\bm{\psi})=\Tr(\bm{V}_{\mathcal{S}}^{\dagger}\bm{O}\bm{V}_{\mathcal{S}}\ket{\bm{\psi}}\bra{\bm{\psi}})$. Then to evaluate the quality of the learned unitary $\bm{V}_{\mathcal{S}}$, we define the risk function $\mathrm{R}_{\bm{U}}(\bm{V}_{\mathcal{S}})$ as 
\begin{equation}\label{eq:app_risk_fun}
	\mathrm{R}_{\bm{U}}(\bm{V}_{\mathcal{S}}) = \mathbb{E}_{\ket{\bm{\psi}} \sim \haar} \Tr\left(\bm{O}\left(\bm{V}_{\mathcal{S}}\ket{\bm{\psi}}\bra{\bm{\psi}}\bm{V}_{\mathcal{S}}^{\dagger}-\bm{U}\ket{\bm{\psi}}\bra{\bm{\psi}}\bm{U}^{\dagger}\right)\right) ^2.
\end{equation} 

The standard quantum NFL theorem considers the averaged risk function over all possible training data, target unitaries, and optimization algorithms from which the learned hypothesis unitary has the same output as the target unitary on the training input states, which has the form of 
$\mathbb{E}_{\mathcal{S}} \mathbb{E}_{\bm{U}} \mathrm{R}_{\bm{U}}(\bm{V}_{\mathcal{S}})$.
Particularly, the uniform sampling of the target unitary in $\mathbb{S}\mathbb{U}(d)$ corresponds to the Haar distribution. While random pure states uniformly sampled from the Hilbert space $\mathcal{H}_{\mathcal{X}\mathcal{R}}$ follow Haar distribution, it can not be assumed that the states uniformly sampled from the set of entangled states defined in Eqn.~(\ref{eq:entangled_state}) follow Haar distribution in $\mathbb{S}\mathbb{U}(d^2)$ due to the restriction of fixed Schmidt ranks. On the other hand, we observe that the uniform distribution of the entangled training state $\ket{\bm{\psi}_j}$ should satisfy the following conditions: (i) the state vectors $\ket{\bm{\xi}_j}$ and $\ket{\bm{\zeta}_j}$ in the subsystem $\mathcal{X}$ and $\mathcal{R}$ should be uniformly distributed in the corresponding Hilbert space and are independent; (ii) the state vectors $\ket{\bm{\xi}_j}$ and $\ket{\bm{\xi}_k}$ ($\ket{\bm{\zeta}_j}$ and $\ket{\bm{\zeta}_k}$) in the subsystem $\mathcal{X}$ ($\mathcal{R}$) should be orthogonal for $j\ne k$; (iii) the vector consisting of the coefficients $\ket{\bm{c}_{j}}$ is also uniformly distributed.
We summarize these observations as the following construction rules of the random entangled training states. 

\begin{construct}\label{construct:1}
	Each training entangled state $\ket{\bm{\psi}_j}$ in the form of Eqn.~(\ref{eq:entangled_state}) is independently sampled by separately sampling the $r$ orthogonal Haar states $\ket{\bm{\xi}_{j1}}, \cdots, \ket{\bm{\xi}_{jr}} \in \mathbb{S}\mathbb{U}(d)$, $r$ orthogonal Haar states $\ket{\bm{\zeta}_{j1}}, \cdots, \ket{\bm{\zeta}_{jr}}  \in \mathbb{S}\mathbb{U}(d) $ and the Haar state $\ket{\bm{c}_j} \in \mathbb{S}\mathbb{U}(r)$. The random variables $\ket{\bm{\xi}_{jk}}$, $\ket{\bm{\zeta}_{jk'}}$ and $\ket{\bm{c}_j}$ are mutually independent. 	
\end{construct}	

\subsection{The differences between the quantum NFL theorem in the ideal and realistic scenario} \label{app_subsec:difference_QML}
In this section, we discuss the difference between the quantum NFL theorem in Ref.~\cite{sharma2022reformulation} and this study. In particular, we would like to clarify that the studied problem in our work differs from that studied in Ref.~\cite{sharma2022reformulation} in both the learning problem and the training data, namely, 
\begin{itemize}
	\vspace{-0.1cm}
	\item[1.] While Ref.~\cite{sharma2022reformulation} aims to predict the output states evolved by the unitary $\bm{U}$, we aim to learn the operator $\bm{U}^{\dagger}\bm{O}\bm{U}$ to accurately predict the expectation value of the observable $\bm{O}$ on the output states, which covers a wide class of important learning tasks in QML.
	\vspace{-0.1cm}
	\item[2.] While the training data in Ref.~\cite{sharma2022reformulation} refers to the pairs of input-output quantum states $(\ket{\bm{\psi}}, (\bm{U}\times \mathbb{I})\ket{\bm{\psi}})$, we employ the quantum-classical pairs $((\ket{\bm{\psi}}, \bm{o})$ as the training data, where $\bm{o}$ refers to the vectors of the measurement outcomes on the output state $(\bm{U}\times \mathbb{I})\ket{\bm{\psi}}$.
	\vspace{-0.1cm}
\end{itemize}
In this study, we would like to highlight the impact of entangled data on the learning performance for a class of learning task, i.e., learning $\Tr(\bm{O}\bm{U}\ket{\bm{\psi}}\bra{\bm{\psi}}\bm{U}^{\dagger})$, from a more practical setting of finite measurement. The consideration of finite measurement leads to fundamental differences in practical implications and theoretical proof. In particular, A critical challenge in leveraging quantum systems is the efficient extraction of information, which is notably impeded by finite measurements.
Such limitations could lead to markedly divergent conclusions in realizing potential quantum advantages, heavily dependent on whether these finite measurements are incorporated into the analysis or not \cite{dalzell2023quantum}. Theoretically, while the derivation of the quantum NFL in Ref.~\cite{sharma2022reformulation} primarily employs algebraic manipulation of input-output states and unitary operators, our study establishes a quantum NFL from an information-theoretic standpoint, which involves the utilization of many complicated techniques such as mutual information and Fano's inequality, as presented in the subsequent sections.

\section{Quantum NFL theorem for projective measurements (Theorem~1)}\label{app_sec:NFL_SM}
In the realistic setting where the aim is to learn the target unitary under a fixed observable $\mathrm{f}_{\bm{U}}(\bm{\rho})=\Tr(\bm{U}^{\dagger}\bm{O}\bm{U}\bm{\rho})$ with a limited number of measurements, the learned hypothesis $\mathrm{h}_{\mathcal{S}}(\bm{\rho})=\Tr(\bm{V}_{\mathcal{S}}^{\dagger}\bm{O}\bm{V}_{\mathcal{S}}\bm{\rho})$ can not achieve zero training error on the training data $\mathcal{S}$ due to the statistical error in measuring the quantum system. Consequently, the assumption of perfect training, which is the pivotal condition in the proof of the NFL theorem in previous literature \cite{poland2020no,sharma2022reformulation}, is not applicable. With this regard, we adopt the information-theoretic based technique, which does not rely on the perfect training assumption, to prove the quantum NFL theorem when the number of measurements is finite. The core idea is to ``reduce'' the learning problem to the multi-way hypothesis testing problem. This is equivalent to showing that the risk error of the quantum dynamics learning problem can be lower bounded by the probability of error in testing problems, which can develop tools for. We then employ Fano's lemma to derive the information-theoretic bound for this hypothesis testing problem. 

The organization of this section is as follows. We first discretize the class of target function through $2\varepsilon$-packing in Supplementary Note~\ref{app_subsec:discretizing_packing}. 
We demystify in  Supplementary Note~\ref{app_subsec:reduce_learn_hypo} how to reduce the learning problem to the hypothesis testing problem on the $2\varepsilon$-packing with Fano's inequality. In this regard, learning the target unitary can be reduced to identifying the corresponding index $X$ in the $2\varepsilon$-packing. Then, we elucidate how to obtain the results of Theorem~\ref{thm:formal_finite_measurement} by separately bounding the related terms in the Fano's inequality, namely, the mutual information $\mathrm{I}(X;\hat{X})$ between the target index $X$ and the estimated index $\hat{X}$, and the cardinality of the $2\varepsilon$-packing. The theoretical proof of Theorem~\ref{thm:formal_finite_measurement} is given in Supplementary Note~\ref{app_subsec:proof_theorem}. We provide the theoretical guarantee of the reduction from the learning problem to the hypothesis testing problem in Supplementary Note~\ref{subsec:proof_est_test}. Supplementary Note~\ref{subsec:proof_packing_observable_lem} and Supplementary Note~\ref{subsec:proof_sum_MI_upper_bound} present the theoretical proofs for bounding the cardinality of the $2\varepsilon$-packing and bounding the mutual information $\mathrm{I}(X,\hat{X})$, respectively. Finally, in Supplementary Note~\ref{app_subsec:tightness} we elucidate the tightness of the derived lower bound through establishing the connection to the results of quantum state tomography.

\subsection{Discretizing the class of target functions through \texorpdfstring{$2\varepsilon$}{Lg}-packing}\label{app_subsec:discretizing_packing}
In the quantum dynamics learning problem under a fixed observable $\bm{O}$ described in Supplementary Note~\ref{app_subsec:learning_problems_QNFL}, the goal is to identify the target function $\mathrm{f}_{\bm{U}}(\bm{\rho})=\Tr(\bm{U}^{\dagger}\bm{O}\bm{U}\bm{\rho})$ from the hypothesis set $\mathcal{F}=\{\mathrm{f}_{\bm{U}}(\bm{\rho})=\Tr(\bm{U}^{\dagger}\bm{O}\bm{U}\bm{\rho})| \bm{U} \in \mathbb{S}\mathbb{U}(d)\}$ according to the measurement output $\bm{o}$. This task is hard when $\mathcal{F}$ contains a large amount of very different operators. With this regard, we discretize the set of target functions $\mathcal{F}$ by equipping it with a local $\varepsilon$-packing, as shown in Supplementary Figure~\ref{fig:PackingNet}.  
\begin{definition}
	[$\varepsilon$-packing and local $\varepsilon$-packing]\label{def:packing_net} For a given set of functionals $\mathcal{F}$ and a distance metric $\varrho$ on this set, the \textit{$\varepsilon$-packing} $\mathcal{M}_{\varepsilon}(\mathcal{F},\varrho)$ is a discrete subset of $\mathcal{F}$ whose elements are guaranteed to be distant from each other by a distance greater than or equal $2\varepsilon$. Namely, for any element $\mathrm{f}_1, \mathrm{f}_2 \in \mathcal{M}_{\varepsilon}(\mathcal{F},\varrho)$, the distance between $\mathrm{f}_1$ and $\mathrm{f}_2$ satisfies
	\begin{equation}
		\varrho(\mathrm{f}_1, \mathrm{f}_2) \ge 2\varepsilon.
	\end{equation}  
	Similarly, the \textit{local $\varepsilon$-packing}  $\mathcal{M}_{\varepsilon}^{(\gamma \varepsilon)}(\mathcal{F},\varrho)$ is a discrete subset of $\mathcal{F}$ such that for any element $\mathrm{f}_1, \mathrm{f}_2 \in \mathcal{M}_{\varepsilon}^{(\gamma\varepsilon)}(\mathcal{F},\varrho)$, the distance between $\mathrm{f}_1$ and $\mathrm{f}_2$ satisfies
	\begin{equation}
		\gamma\varepsilon \ge	\varrho(\mathrm{f}_1, \mathrm{f}_2) \ge 2\varepsilon,
	\end{equation}  	
	where $\gamma\geq 2$.
\end{definition}

\begin{figure*}[htbp]
	\centering
	\includegraphics[width=0.88\textwidth]{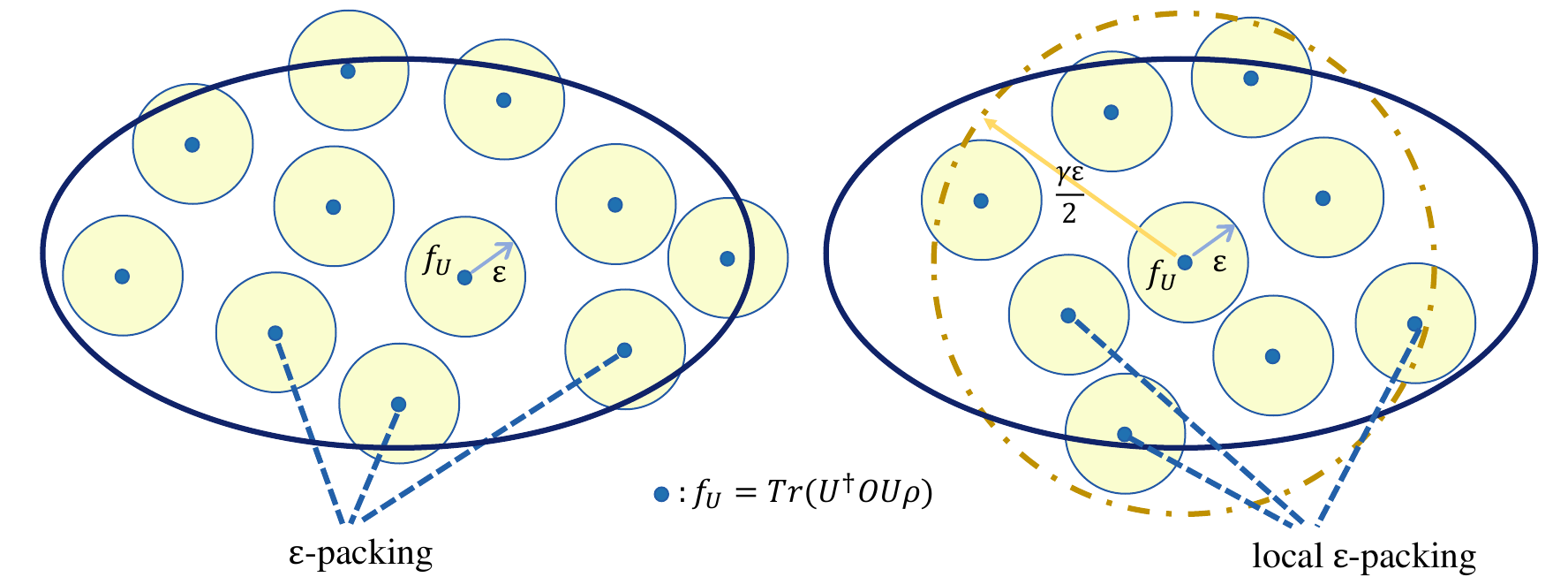}
	\caption{\small{\textbf{The $\varepsilon$-packing of the hypothesis space.}  The left panel and the right panel refer to the $\varepsilon$-packing and the local $\varepsilon$-packing with maximal distance $\gamma\varepsilon$ of $\mathcal{F}$, respectively.
	}}
	\label{fig:PackingNet}
\end{figure*}
For a sufficient large $\gamma$ obeying $\gamma\varepsilon \ge \max_{\mathrm{f}_1, \mathrm{f}_2\in\mathcal{F}}\varrho(\mathrm{f}_1,\mathrm{f}_2)$, the local $\varepsilon$-packing reduces to   $\varepsilon$-packing. In this work, we adopt the local $\varepsilon$-packing instead of the  $\varepsilon$-packing to obtain a tighter bound.  This is because the learned hypothesis $\mathrm{h}_{\mathcal{S}}$ only needs to approximate $\mathrm{f}_{\bm{U}}(\bm{\rho})=\Tr(\bm{U}^{\dagger}\bm{O}\bm{U}\bm{\rho})$ on the training input states $\{\bm{\rho}_j\}_{j=1}^N$ up to a small training error, which can be viewed as the relaxation of the perfect training assumption in the ideal setting. In particular, a small bounded training error can restrict the possible space in which the target unitary resides to the vicinity of the learned hypothesis. Consequently, the local $\varepsilon$-packing suffices to discretize the hypothesis set $\mathcal{F}$ in this context. 

The packing cardinality $|\mathcal{M}_{\varepsilon}^{(\gamma\varepsilon)}(\mathcal{F}, \varrho)|$ heavily depends on the choice of the distance metric $\varrho$, which is generally determined by the employed risk function. The following lemma gives the analytical expression of the metric $\varrho$ for the risk function defined in Eqn.~(\ref{eq:app_risk_fun}).
\begin{lemma}
	[Reformulation of the risk function]\label{lem:rewrite_risk}
	For any given projective measurement $\bm{O}=\ket{\bm{o}}\bra{\bm{o}}$ and any fixed random Haar state $\bm{\rho}$, the risk function defined in Eqn.~(\ref{eq:app_risk_fun}) has the following equivalent expression
	\begin{equation}\label{eq:rewrite_risk}
		\mathrm{R}_{\bm{U}}(\bm{V}_{\mathcal{S}}) 
		= \Phi \left(  \varrho(\mathrm{h}_{\mathcal{S}},\mathrm{f}_{\bm{U}}) \right),	
	\end{equation}
	where $\mathrm{f}_{\bm{U}}(\bm{\rho})=\Tr(\bm{U}^{\dagger}\bm{O}\bm{U}\bm{\rho})$, $\mathrm{h}_{\mathcal{S}}(\bm{\rho})=\Tr(\bm{V}_{\mathcal{S}}^{\dagger}\bm{O}\bm{V}_{\mathcal{S}}\bm{\rho})$, $\Phi(t)=t^2$, and
	\begin{equation}\label{eq:distance_metric}
		\varrho(\mathrm{h}_{\mathcal{S}},\mathrm{f}_{\bm{U}})= \frac{1}{\sqrt{2d(d+1)}}\|\bm{U}^{\dagger}\bm{O}\bm{U} - \bm{V}_{\mathcal{S}}^{\dagger}\bm{O}\bm{V}_{\mathcal{S}} \|_1.
	\end{equation}
\end{lemma}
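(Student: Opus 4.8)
The plan is to reduce the Haar integral defining $R_U(V_{\mathcal{S}})$ to a closed form in a single Hermitian operator, and then exploit the rank-one structure of the projective measurement $O=\ket{o}\bra{o}$ to convert the resulting second-moment quantity into the trace norm appearing in $\varrho$.

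First I would use cyclicity of the trace to move the unitaries onto the observable, writing $\Tr(OV_{\mathcal{S}}\ket{\psi}\bra{\psi}V_{\mathcal{S}}^{\dagger})=\Tr(V_{\mathcal{S}}^{\dagger}OV_{\mathcal{S}}\ket{\psi}\bra{\psi})$ and likewise for $U$. Introducing the Hermitian operator $A:=V_{\mathcal{S}}^{\dagger}OV_{\mathcal{S}}-U^{\dagger}OU$, the integrand of Eqn.~(\ref{eq:app_risk_fun}) becomes $\Tr(A\ket{\psi}\bra{\psi})^2$, so that $R_U(V_{\mathcal{S}})=\mathbb{E}_{\psi\sim\haar}\Tr(A\ket{\psi}\bra{\psi})\Tr(A\ket{\psi}\bra{\psi})$. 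I would then evaluate this Haar second moment directly via Lemma~\ref{lem:Tr(psipsi)Tr(psipsi)}, setting both operator arguments equal to $A$, which gives
\begin{equation}
R_U(V_{\mathcal{S}})=\frac{\Tr(A)^2+\Tr(A^2)}{d(d+1)}. \nonumber
\end{equation}
The first term vanishes, since unitary invariance of the trace yields $\Tr(V_{\mathcal{S}}^{\dagger}OV_{\mathcal{S}})=\Tr(O)=\Tr(U^{\dagger}OU)$, hence $\Tr(A)=0$. This leaves $R_U(V_{\mathcal{S}})=\Tr(A^2)/(d(d+1))$.

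The crux, and the step I expect to be the main obstacle, is relating $\Tr(A^2)$ to $\|A\|_1^2$; this is where the assumption $O=\ket{o}\bra{o}$ is essential. Under it, $A=\ket{a}\bra{a}-\ket{b}\bra{b}$ is the difference of two rank-one projectors onto the pure states $\ket{a}=V_{\mathcal{S}}^{\dagger}\ket{o}$ and $\ket{b}=U^{\dagger}\ket{o}$. Such an operator is supported on $\Span\{\ket{a},\ket{b}\}$, so writing $F=|\braket{a|b}|^2$ I would pass to an orthonormal basis of this (at most) two-dimensional subspace and compute the $2\times 2$ matrix of $A$ explicitly; it has vanishing trace and determinant $-(1-F)$, so its nonzero eigenvalues are exactly $\pm\sqrt{1-F}$.

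Consequently $\Tr(A^2)=2(1-F)$ while $\|A\|_1=2\sqrt{1-F}$, which establishes the identity $\Tr(A^2)=\tfrac{1}{2}\|A\|_1^2$. Substituting this back gives $R_U(V_{\mathcal{S}})=\|A\|_1^2/(2d(d+1))$, which is precisely $\Phi(\varrho(h_{\mathcal{S}},f_U))$ with $\Phi(t)=t^2$ and $\varrho$ as in Eqn.~(\ref{eq:distance_metric}), completing the proof. I would emphasize that the only place the rank-one hypothesis on $O$ enters is the eigenvalue computation of the third paragraph; for a general Hermitian $O$ the factor of $\tfrac12$ relating $\Tr(A^2)$ and $\|A\|_1^2$ need not hold, so the clean metric form is special to projective measurements.
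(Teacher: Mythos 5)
Your proof is correct and follows essentially the same route as the paper: reduce the Haar average to $\bigl(\Tr(A)^2+\Tr(A^2)\bigr)/\bigl(d(d+1)\bigr)$ via the second-moment formula of Lemma~\ref{lem:Tr(psipsi)Tr(psipsi)}, kill the $\Tr(A)^2$ term, and use the rank-one structure of $O$ to convert $\Tr(A^2)$ into $\tfrac{1}{2}\|A\|_1^2$. The only cosmetic difference is that the paper invokes the standard fidelity--trace-distance identity $1-|\braket{\mu|\nu}|^2=\|\ket{\mu}\bra{\mu}-\ket{\nu}\bra{\nu}\|_1^2/4$ as a known fact, whereas you derive it by an explicit $2\times 2$ eigenvalue computation on $\Span\{\ket{a},\ket{b}\}$ -- a sound and self-contained substitute.
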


\begin{proof}
	[Proof of Lemma~\ref{lem:rewrite_risk}]
	We recall that the risk function given in Eqn.~(\ref{eq:learning_model}) refers to
	\begin{equation}
		\mathrm{R}_{\bm{U}}(\bm{V}_{\mathcal{S}}) = \mathbb{E}_{\ket{\bm{\phi}} \sim \haar} \Tr\left(\bm{O}\left(\bm{V}_{\mathcal{S}}\ket{\bm{\phi}}\bra{\bm{\phi}}\bm{V}_{\mathcal{S}}^{\dagger}-\bm{U}\ket{\bm{\phi}}\bra{\bm{\phi}}\bm{U}^{\dagger}\right)\right) ^2.
	\end{equation}
	Denote $\bm{G}:=\bm{U}^{\dagger}\bm{O}\bm{U}-\bm{V}_{\mathcal{S}}^{\dagger}\bm{O}\bm{V}_{\mathcal{S}}$, we have  
	\begin{align}
		\mathrm{R}_{\bm{U}}(\bm{V}_{\mathcal{S}}) 
		& =  \mathbb{E}_{\ket{\bm{\phi}} \sim \haar} \Tr\left( \bm{G}  \ket{\bm{\phi}}\bra{\bm{\phi}} \right)^2 \nonumber \\
		& = \Tr\left( \bm{G}^{\otimes 2} \mathbb{E}_{\ket{\bm{\phi}} \sim \haar} \ket{\bm{\phi}}\bra{\bm{\phi}}^{\otimes 2} \right) \nonumber \\
		& = \frac{1}{d(d+1)}\Tr\left( \bm{G}^{\otimes 2} (\mathbb{I}^{\otimes 2} + \swap)\right) \nonumber \\
		& = \frac{1}{d(d+1)}\left(\Tr(\bm{G})^2+\Tr(\bm{G}^2)\right)\nonumber \\
		& = \frac{1}{d(d+1)}\left(2\Tr(\bm{O}^2)-2\Tr(\bm{U}^{\dagger}\bm{O}\bm{U}\bm{V}_{\mathcal{S}}^{\dagger}\bm{O}\bm{V}_{\mathcal{S}}) \right), \label{eq:app_risk_simplify_1}
	\end{align}
	where the second equality employs $\Tr(A)^2=\Tr(A\otimes A)$, the third equality exploits Lemma~\ref{lem:Tr(psipsi)Tr(psipsi)}, and the $\swap$ is the swap operator on the space $\mathcal{H}_d^{\otimes 2}$.
	When the observable $\bm{O}$ takes the projective measurement $\ket{\bm{o}}\bra{\bm{o}}$, the operators $\bm{U}^{\dagger}\bm{O}\bm{U}$ and $\bm{V}_{\mathcal{S}}^{\dagger}\bm{O}\bm{V}_{\mathcal{S}}$ can be treated as density matrix of the quantum state that we denote as $\ket{\bm{\mu}}\bra{\bm{\mu}}$ and $\ket{\bm{\nu}}\bra{\bm{\nu}}$ with $\ket{\bm{\mu}}=\bm{U}\ket{\bm{o}}$ and $\ket{\bm{\nu}}=\bm{V}\ket{\bm{o}}$ respectively. With this observation, we can rewrite the risk function as 
	\begin{eqnarray}
		\mathrm{R}_{\bm{U}}(\bm{V}_{\mathcal{S}}) 
		&& = \frac{1}{d(d+1)}\left(2\Tr(\ket{\bm{o}}\bra{\bm{o}}^2)-2\braket{\bm{\mu} | \bm{\nu}}^2 \right)
		\nonumber \\
		&& = \frac{2}{d(d+1)} \left(1-\braket{\bm{\mu} | \bm{\nu}}^2 \right)
		\nonumber \\
		&& = \frac{1}{2d(d+1)} \left\|\bm{U}^{\dagger}\bm{O}\bm{U} - \bm{V}_{\mathcal{S}}^{\dagger}\bm{O}\bm{V}_{\mathcal{S}} \right\|_1^2. \label{eq:app_risk_simplify_2}
	\end{eqnarray}
	where the first equality comes from substituting the density matrix representation of operators $\bm{O}, \bm{U}^{\dagger}\bm{O}\bm{U},  \bm{V}_{\mathcal{S}}^{\dagger}\bm{O}\bm{V}_{\mathcal{S}}$ into Eqn.~(\ref{eq:app_risk_simplify_1}), the second equality is based on the fact $\Tr(\bm{\rho}^2)=1$ for any pure state $\bm{\rho}$, the third equality employs the relation between trace distance and fidelity that $1-\braket{\bm{u}|\bm{v}}^2=\|\ket{\bm{u}}\bra{\bm{u}}- \ket{\bm{v}}\bra{\bm{v}} \|_1^2/4$.
	Eqn.~(\ref{eq:rewrite_risk}) can be directly obtained by denoting $\Phi(t)=t^2$, and $\varrho(\mathrm{h}_{\mathcal{S}}, \mathrm{f}_{\bm{U}})=\|\bm{U}^{\dagger}\bm{O}\bm{U} - \bm{V}_{\mathcal{S}}^{\dagger}\bm{O}\bm{V}_{\mathcal{S}} \|_1/\sqrt{2d(d+1)}$. 
\end{proof}

The definition of $\varrho$ in Eqn.~(\ref{eq:distance_metric}) satisfies the properties of the distance metric, i.e., (1) non-negativity: $\varrho(\mathrm{f}_{\bm{V}}, \mathrm{f}_{\bm{U}}) \ge 0$, (2) identity of indiscernible: $\varrho(\mathrm{f}_{\bm{V}}, \mathrm{f}_{\bm{U}}) \ge 0$ iff $\mathrm{f}_{\bm{V}}=\mathrm{f}_{\bm{U}}$, (3) symmetry:
$\varrho(\mathrm{f}_{\bm{V}}, \mathrm{f}_{\bm{U}}) = \varrho(\mathrm{f}_{\bm{U}}, \mathrm{f}_{\bm{V}})$, (4) triangle inequality: $\varrho(\mathrm{f}_{\bm{V}}, \mathrm{f}_{\bm{U}}) \le  \varrho(\mathrm{f}_{\bm{U}}, \mathrm{f}_{\bm{W}}) + \varrho(\mathrm{f}_{\bm{V}}, \mathrm{f}_{\bm{W}})$. With this well-defined distance metric $\varrho$ on the space $\mathcal{F}$, the local $2\varepsilon$-packing is denoted as $\mathcal{M}_{2\varepsilon}^{(\gamma\varepsilon)}$ with dropping the dependence on $\mathcal{F}$ and $\varrho$ for simplification. Without loss of generality, we employ the positive integer set  $\mathcal{X}_{2\varepsilon}^{(2\gamma\varepsilon)}=[|\mathcal{M}_{2\varepsilon}^{(2\gamma\varepsilon)}|]$ to index the elements in the local $2\varepsilon$-packing, i.e., $\mathcal{M}_{2\varepsilon}^{(2\gamma\varepsilon)}=\{\mathrm{f}_{\bm{U}_x}\}_{x\in \mathcal{X}_{2\varepsilon}^{(2\gamma\varepsilon)}}$, where each index $x$ in $\mathcal{X}_{2\varepsilon}^{(2\gamma\varepsilon)}$  uniquely corresponds to an element $\mathrm{f}_{\bm{U}_x}$ in the local packing $\mathcal{M}_{2\varepsilon}^{(2\gamma\varepsilon)}$. In the following, we refer the index set $\mathcal{X}_{2\varepsilon}^{(2\gamma\varepsilon)}$ to the local $2\varepsilon$-packing.

\subsection{Reducing the learning problem to hypothesis testing}\label{app_subsec:reduce_learn_hypo}
We now elucidate how to reduce the quantum dynamics learning problem introduced in Supplementary Note~\ref{app_subsec:learning_problems_QNFL} to the hypothesis testing problem by exploiting Fano's method, which is extensively used in deriving the lower bound of the generalization error in statistical learning theory \cite{duchi2016lecture}. 
Assume that the target function is in the discrete local $2\varepsilon$-packing $\{\mathrm{f}_{\bm{U}_x}\}_{x \in \mathcal{X}_{2\varepsilon}^{(2\gamma\varepsilon)}}$, then the learning problem aims to identify the underlying index $x$ from $\mathcal{X}_{2\varepsilon}^{(2\gamma\varepsilon)}$ according to the training data $\mathcal{S}$, which is exactly described by the hypothesis testing. Particularly, given the set of training input states $\{\bm{\rho}_j\}_{j=1}^N$, we refer the hypothesis testing to any measurable mapping $\Psi_{\bm{\rho}_1, \cdots, \bm{\rho}_N }: \bm{o} \to \mathcal{X}_{2\varepsilon}^{(2\gamma\varepsilon)}$. The learning problem can be reduced to the following hypothesis testing problem as shown in Supplementary Figure~\ref{fig:hypothesistesting}:

\begin{itemize}
	\item first, nature chooses $X=x$ according to the uniform distribution over $\mathcal{X}_{2\varepsilon}^{(2\gamma\varepsilon)}$ and the target function is denoted as $\mathrm{f}_{\bm{U}_x}$ with $\bm{U}_x$ being the unitary operator in $\{\bm{U}_{x'} \}_{x'\in \mathcal{X}_{2\varepsilon}^{(2\gamma\varepsilon)}}$;
	\item second, conditioned on the choice $X=x$, and given the Haar random input states set $\{\bm{\rho}_j=\ket{\bm{\psi}_j}\bra{\bm{\psi}_j}\}_{j=1}^N$, the measurement outcome is drawn from  distribution $\mathbb{P}_{\bm{o}|x}=\prod_{j=1}^N \mathbb{P}_{\bm{o}|x}^{(j)}$, where $\mathbb{P}_{\bm{o}|x}^{(j)}$ characterized the distribution of $\bm{o}_{j}=\sum_{k=1}^m \bm{o}_{jk}/m$ conditional on $X=x$. The set of input states  $\{\bm{\rho}_j\}_{j=1}^N$ and the measurement outcomes $\{\bm{o}_j\}_{j=1}^N$ together form the training data $\mathcal{S}=\{(\bm{\rho}_j, \bm{o}_j)\}_{j=1}^N$;
	\item  third, the learning model performs the following optimization to minimize the empirical training error  
	\begin{equation}
		\mathrm{h}_{\mathcal{S}}= \mathop{\arg\min}\limits_{\mathrm{h}\in \{\mathrm{f}_{\bm{U}_{x'}}|x' \in \mathcal{X}_{2\varepsilon}^{( 2\gamma\varepsilon)}\}} \frac{1}{N}\sum_{j=1}^N \left(\mathrm{h}(\bm{\rho}_j)-\bm{o}_j\right)^2;
	\end{equation}
	\item fourth,  the hypothesis testing $\Psi_{\bm{\rho}_1, \cdots, \bm{\rho}_N}(\bm{o}):= \arg \min_{\tilde{x} \in\mathcal{X}_{2\varepsilon}^{(2\gamma\varepsilon)}}\varrho(\mathrm{h}_{\mathcal{S}}, \mathrm{f}_{\bm{U}_{\tilde{x}}})$ is conducted to determine the randomly chosen index $X$. The associated error probability of the hypothesis testing problem is denoted by  $\mathbb{P}_{\bm{o},X}(\Psi_{\bm{\rho}_1, \cdots, \bm{\rho}_N}(\bm{o})\ne X)$.  
\end{itemize}

\begin{figure*}[htbp]
	\centering         		\includegraphics[width=176mm]{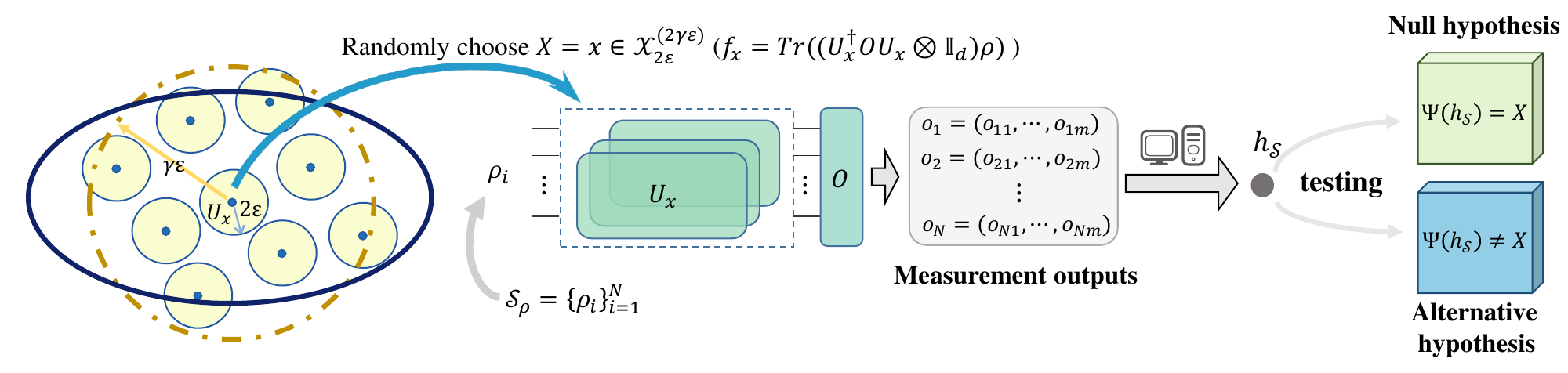}
	\caption{\small{\textbf{The paradigm of the reduction from quantum dynamics learning to hypothesis testing.  
	}}}
	\label{fig:hypothesistesting}
\end{figure*}

The choice of the uniform distribution over $\mathcal{X}_{2\varepsilon}^{(2\gamma\varepsilon)}$ in the first step is because in the context of the quantum NFL theorem the target unitary is assumed to be sampled from the Haar distribution. The distribution $\mathbb{P}_{\bm{o}|x}^{(j)}$ of the measurement outcome $\bm{o}_{jk}$ in the second step is assumed to be a Gaussian distribution whose mean $u_x=\Tr((\bm{U}_x^{\dagger}\bm{O}\bm{U}_{x}\otimes \mathbb{I}_{\mathcal{R}})\bm{\rho}_j)$ varies with the index $x$ and the input state $\bm{\rho}_j$, and the variance $\sigma^2$ is assumed to be an $n$-dependent constant which is identified below (see Assumption \ref{assu:normal_distribution}). This learning algorithm in the third step is arbitrary, as long as the learned function can achieve the minimal training error (i.e., can be greater than $0$), which can be regarded as the relaxation of the perfect training assumption with zero training error. In the fourth step, the notation $\mathbb{P}_{\bm{o},X}$ denotes the joint distribution over the random index $X$ and $\bm{o}$. In particular, define $\bar{\mathbb{P}}_{\bm{o}}=\sum_{x\in\mathcal{X}_{2\varepsilon}^{(2\gamma\varepsilon)}} \mathbb{P}_{\bm{o}|x}/|\mathcal{X}_{2\varepsilon}^{(2\gamma\varepsilon)}|$ as the mixture distribution, the measurement output is drawn (marginally) from $\bar{\mathbb{P}}_{\bm{o}}$, and our hypothesis testing problem is to determine the randomly chosen index $X$  given the training data $\mathcal{S} = \{(\bm{\rho}_j, \bm{o}_j)\}_{j=1}^N$ with $\bm{o}=(\bm{o}_1, \cdots, \bm{o}_N)$ sampled from this mixture $\bar{\mathbb{P}}_{\bm{o}}$. Denote the average risk function over the training sets $\mathcal{S}$ and target unitary $\bm{U}$ as 
\begin{equation}\label{eq:minimax_risk}
	\mathbb{E}_{\bm{U}}\mathbb{E}_{\mathcal{S}} \mathrm{R}_{\bm{U}}(\bm{V}_{\mathcal{S}}) = \mathrm{R}_N(\Phi \odot \varrho) := \mathbb{E}_{\bm{U}}\mathbb{E}_{\mathcal{S}} \Phi \left(  \varrho(\mathrm{h}_{\mathcal{S}},\mathrm{f}_{\bm{U}}) \right),
\end{equation}
where the second equality follows Lemma~\ref{lem:rewrite_risk}. Notably, the randomness of $\mathcal{S}$ comes from the randomness of the input states set $\{\bm{\rho}_{j}\}_{j=1}^N$ and the randomness of measurement outcomes $\bm{o}_j$. For simplifying the calculation of Haar integration, we make the following mild assumption on the distribution of the measurement output $\bm{o}_j=\sum_{k=1}^m \bm{o}_{jk}/m$.
\begin{assu}\label{assu:normal_distribution}
	For any index $X=x$, the outcome $\bm{o}_j=\sum_{k=1}^m \bm{o}_{jk}/m$ of the projective measurements $\bm{O}=\ket{\bm{o}}\bra{\bm{o}}$ on the given output state $(\bm{U}_X \otimes \mathbb{I}_{\mathcal{R}})\ket{\bm{\psi}_j}$ follows the binomial distribution $\mathbb{B}(m, u_x^{(j)})$ with $u_x=\mathbb{E}[\bm{o}_j]=\Tr((\bm{U}_x^{\dagger}\bm{O}\bm{U}_x\otimes \mathbb{I}_{\mathcal{R}}) \ket{\bm{\psi}_j}\bra{\bm{\psi}_j})$. According to the central limit theorem,    the binomial distribution is approximated by the normal distribution $\mathbb{N}(u_x, (\sigma_x^{(j)})^2)$ with the mean $u_x = \Tr((\bm{U}_x^{\dagger}\bm{O}\bm{U}_x\otimes \mathbb{I}_{\mathcal{R}}) \ket{\bm{\psi}_j}\bra{\bm{\psi}_j} )$ and the variance $(\sigma_x^{(j)})^2=u_x(1-u_x)/m$. Furthermore, the variance is assumed to take the expectation of $(\sigma_x^{(j)})^2$ over the random input state $\ket{\bm{\psi}_j}$, i.e., $\sigma^2=\mathbb{E}_{\ket{\bm{\psi}_j} \sim \haar}(\sigma_x^{(j)})^2$. 
\end{assu}

\noindent\textbf{Remark}. We note that Assumption \ref{assu:normal_distribution}  is mild, as the convergence of binomial distribution to the normal distribution is guaranteed by the central limit theorem for a large number of measurements \cite{loeve2017probability}. Additionally, the approximate proxy of the variance $\sigma_x^2$ with its expectation $\sigma^2$ is based on the observation that the variance $\sigma_x^2$ is smaller than the expectation of $u_x$ with at least a multiplier factor of $1/m$ and is exponentially concentrated in the number of qubits $n$ which is widely studied in barren plateaus of variation quantum algorithms \cite{mcclean2018barren, cerezo2021cost, zhang2021toward}. These observations enable the effective  discrimination of the normal distribution of the measurement outputs corresponding to the different hypothesis unitaries in the local $2\varepsilon$-packing.

\medskip	
With this setup, the following lemma whose proof is deferred to Supplementary Note~\ref{subsec:proof_est_test} gives a theoretical guarantee of the reduction from the quantum dynamics learning problem to the hypothesis testing problem.
\begin{lemma}\label{lem:est_test}
	Given the local $2\varepsilon$-packing $\mathcal{X}_{2\varepsilon}^{(2\gamma\varepsilon)}$ of the functional class $\mathcal{F}$, the average risk error in Eqn.~(\ref{eq:minimax_risk}) is lower bounded by
	\begin{equation}\label{eq:est_test}
		\mathrm{R}_N(\Phi \odot \varrho) \ge \mathbb{E}_{\bm{\rho}_1, \cdots, \bm{\rho}_N} \Phi(\varepsilon)  \mathbb{P}_{\bm{o},X}(\Psi_{\bm{\rho}_1, \cdots, \bm{\rho}_N} (\bm{o})\ne X),
	\end{equation}
	where the probability measure $\mathbb{P}$ refers to the joint distribution of the random index $X$ and the measurement output $\bm{o}$.
\end{lemma}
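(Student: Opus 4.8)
The plan is to carry out the standard \emph{reduction from estimation to testing} in the style of Fano's method (cf. Refs.~\cite{duchi2016lecture,Tsybakov_2009}), treating the learned hypothesis $h_{\mathcal{S}}$ as an estimator of the target $f_{U_X}$ and the map $\Psi_{\rho_1,\cdots,\rho_N}$ as the induced test. Recall from Eqn.~(\ref{eq:minimax_risk}) and Lemma~\ref{lem:rewrite_risk} that $R_N(\Phi\odot\varrho)=\mathbb{E}_{U}\mathbb{E}_{\mathcal{S}}\,\Phi(\varrho(h_{\mathcal{S}},f_{U}))$ with $\Phi(t)=t^2$, and that in the reduced setup of SM~\ref{app_subsec:reduce_learn_hypo} the target $f_U$ equals $f_{U_X}$ for $X$ drawn uniformly over the index set $\mathcal{X}_{2\varepsilon}^{(2\gamma\varepsilon)}$. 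Since the randomness of $\mathcal{S}$ factorizes into the Haar input states $\{\rho_j\}_{j=1}^N$ and, conditioned on these and on $X$, the measurement outcomes $\bm{o}$, I would first fix the input states $\rho_1,\cdots,\rho_N$ and argue against the joint law $\mathbb{P}_{\bm{o},X}$; the Haar average $\mathbb{E}_{\rho_1,\cdots,\rho_N}$ is restored only at the end, and this is precisely what produces the outer expectation on the right-hand side of Eqn.~(\ref{eq:est_test}).

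The crux is a deterministic, pointwise inequality asserting that a testing error forces a large estimation distance. Write $\hat{x}=\Psi_{\rho_1,\cdots,\rho_N}(\bm{o})=\arg\min_{\tilde{x}}\varrho(h_{\mathcal{S}},f_{U_{\tilde{x}}})$. On the event $\{\hat{x}\ne X\}$, minimality of $\hat{x}$ gives $\varrho(h_{\mathcal{S}},f_{U_{\hat{x}}})\le\varrho(h_{\mathcal{S}},f_{U_X})$, while the packing guarantee of Definition~\ref{def:packing_net} gives $\varrho(f_{U_X},f_{U_{\hat{x}}})\ge 2\varepsilon$ because $X$ and $\hat{x}$ are distinct packing indices. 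Feeding these into the triangle inequality for $\varrho$ (property (4) verified right after the proof of Lemma~\ref{lem:rewrite_risk}) yields
\[
2\varepsilon\le\varrho(f_{U_X},f_{U_{\hat{x}}})\le\varrho(f_{U_X},h_{\mathcal{S}})+\varrho(h_{\mathcal{S}},f_{U_{\hat{x}}})\le 2\,\varrho(h_{\mathcal{S}},f_{U_X}),
\]
so that $\varrho(h_{\mathcal{S}},f_{U_X})\ge\varepsilon$ whenever the test errs. I stress that this argument never uses \emph{how} $h_{\mathcal{S}}$ is produced (step three of the reduction), only that $\Psi$ selects the nearest packing element; hence it applies verbatim to the empirical-risk minimizer.

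Combining this with the monotonicity of $\Phi$ on $[0,\infty)$ and $\Phi(0)=0$ gives the pointwise bound $\Phi(\varrho(h_{\mathcal{S}},f_{U_X}))\ge\Phi(\varepsilon)\,\mathbbm{1}[\hat{x}\ne X]$ on every outcome (on $\{\hat{x}=X\}$ the right side is zero and the left side nonnegative). Taking the expectation over $(\bm{o},X)$ with the input states held fixed converts the indicator into the testing error probability,
\[
\mathbb{E}_{\bm{o},X}\,\Phi(\varrho(h_{\mathcal{S}},f_{U_X}))\ge\Phi(\varepsilon)\,\mathbb{P}_{\bm{o},X}\!\left(\Psi_{\rho_1,\cdots,\rho_N}(\bm{o})\ne X\right),
\]
and finally applying $\mathbb{E}_{\rho_1,\cdots,\rho_N}$ to both sides (legitimate since $X$ is independent of the Haar-sampled states, so the two averages commute) reproduces Eqn.~(\ref{eq:est_test}).

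I expect no deep obstacle: the statement is essentially a reformulation of the classical estimation-to-testing reduction, and its only genuine content is the triangle-inequality step, which relies on $\varrho$ being a bona fide metric (established after Lemma~\ref{lem:rewrite_risk}) and on tracking the packing separation with the correct constant, so that half the minimal separation $2\varepsilon$ is exactly the $\varepsilon$ appearing in $\Phi(\varepsilon)$. The one point demanding care is the bookkeeping of the expectations: because both the test $\Psi_{\rho_1,\cdots,\rho_N}$ and the conditional law $\mathbb{P}_{\bm{o}\mid X}$ depend on the realized input states, the reduction must be applied conditionally on $\{\rho_j\}_{j=1}^N$, with the Haar average over the states pulled outside, exactly as it appears in the stated bound.
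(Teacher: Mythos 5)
Your proposal is correct and follows essentially the same route as the paper: both reduce to the event inclusion $\{\Psi_{\rho_1,\cdots,\rho_N}(\bm{o})\ne X\}\subseteq\{\varrho(h_{\mathcal{S}},f_{U_X})\ge\varepsilon\}$ via the triangle inequality and the $2\varepsilon$ packing separation, then integrate the pointwise bound $\Phi(\varrho(h_{\mathcal{S}},f_{U_X}))\ge\Phi(\varepsilon)\mathbbm{1}[\hat{x}\ne X]$ with the Haar average over the input states pulled outside. The only cosmetic difference is that you establish the inclusion directly (test error forces distance $\ge\varepsilon$) whereas the paper argues the contrapositive (distance $<\varepsilon$ forces the test to return $X$); the two are logically equivalent and yield the identical bound.
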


Lemma~\ref{lem:est_test} reduces the problem of lower bounding the risk function $\mathrm{R}_N(\Phi \odot \varrho)$ to the problem of lower bounding the error of hypothesis testing. The Fano's inequality gives an information-theoretical bound to the latter.

\begin{lemma}[Fano's inequality]\label{lem:Fano}
	Assume that $X$ is uniform in $\mathcal{X}_{2\varepsilon}^{(2\gamma\varepsilon)}$. The learning procedure can be depicted by the Markov chain $X \to \mathcal{S} \to \hat{X}$, where $\hat{X}$ is returned by the hypothesis testing, i.e., $\Phi_{\bm{\rho}_1, \cdots, \bm{\rho}_N} (\bm{o})=\hat{X}$. Then we have
	\begin{equation}\label{eq:Fano}
		\mathbb{P}_{\bm{o},X}(\Psi_{\bm{\rho}_1, \cdots, \bm{\rho}_N}(\bm{o})\ne X) \ge 1 - \frac{\mathrm{I}(X;\hat{X})+\log 2}{\log(|\mathcal{X}_{2\varepsilon}^{(2\gamma\varepsilon)}|)},
	\end{equation}
	where $\mathrm{I}(X; \hat{X})$ refers to the mutual information between the estimated index $\hat{X}$ and $X$.
\end{lemma}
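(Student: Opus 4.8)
The plan is to prove this as the standard information-theoretic Fano bound, relating the error probability $P_e := \mathbb{P}_{\bm{o},X}(\Psi_{\rho_1,\cdots,\rho_N}(\bm{o})\ne X)$ to the conditional entropy $H(X\mid\hat{X})$, and then invoking the uniformity of $X$. First I would rewrite the mutual information in its entropy form (Eqn.~(\ref{eq:MI_entropy})) as $I(X;\hat{X}) = H(X) - H(X\mid\hat{X})$. Since $X$ is uniform over $\mathcal{X}_{2\varepsilon}^{(2\gamma\varepsilon)}$, Property~\ref{property:max_value_entropy} is saturated, giving $H(X) = \log|\mathcal{X}_{2\varepsilon}^{(2\gamma\varepsilon)}|$. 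The whole task then reduces to producing an upper bound on $H(X\mid\hat{X})$ of the form $\log 2 + P_e\log|\mathcal{X}_{2\varepsilon}^{(2\gamma\varepsilon)}|$.

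To bound $H(X\mid\hat{X})$ I would introduce the binary error indicator $E := \mathbbm{1}[\hat{X}\ne X]$, which satisfies $\mathbb{P}(E=1)=P_e$. Because $E$ is a deterministic function of the pair $(X,\hat{X})$, adjoining it costs no entropy, so the chain rule for conditional entropy (via the definition in Eqn.~(\ref{eq:conditional_entropy})) yields $H(X\mid\hat{X}) = H(X,E\mid\hat{X}) = H(E\mid\hat{X}) + H(X\mid E,\hat{X})$. The first term is controlled by the fact that conditioning cannot increase entropy together with Property~\ref{property:max_value_entropy}: $H(E\mid\hat{X}) \le H(E) \le \log 2$, since $E$ is binary.

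For the second term I would condition on the two values of $E$. When $E=0$ the estimate is correct, so $X=\hat{X}$ is determined and $H(X\mid E=0,\hat{X})=0$; when $E=1$ the true index differs from $\hat{X}$ and hence ranges over at most $|\mathcal{X}_{2\varepsilon}^{(2\gamma\varepsilon)}|-1$ values, so again by Property~\ref{property:max_value_entropy} we have $H(X\mid E=1,\hat{X})\le \log(|\mathcal{X}_{2\varepsilon}^{(2\gamma\varepsilon)}|-1)$. Averaging over $E$ gives $H(X\mid E,\hat{X}) \le P_e\log(|\mathcal{X}_{2\varepsilon}^{(2\gamma\varepsilon)}|-1) \le P_e\log|\mathcal{X}_{2\varepsilon}^{(2\gamma\varepsilon)}|$. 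Combining the three bounds gives $H(X\mid\hat{X}) \le \log 2 + P_e\log|\mathcal{X}_{2\varepsilon}^{(2\gamma\varepsilon)}|$, and substituting into $I(X;\hat{X}) = \log|\mathcal{X}_{2\varepsilon}^{(2\gamma\varepsilon)}| - H(X\mid\hat{X})$ and solving for $P_e$ reproduces Eqn.~(\ref{eq:Fano}) exactly.

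There is no genuine obstacle here — this is a textbook argument — and the only points requiring care are structural rather than computational. One must note that $E$ is determined by $(X,\hat{X})$ so that the chain-rule decomposition introduces no extra slack, and one must execute the case analysis on $E$ cleanly so that the binary-error picture produces the $\log(|\mathcal{X}_{2\varepsilon}^{(2\gamma\varepsilon)}|-1)$ factor. I would also emphasize what is \emph{not} needed: the Markov structure $X\to\mathcal{S}\to\hat{X}$ plays no role in this particular inequality, since the bound is stated directly in terms of $I(X;\hat{X})$; that structure is instead invoked downstream, through the data-processing inequality of Property~\ref{property:data_pro_ineqn}, to further upper bound $I(X;\hat{X})$ by the accessible information $I(X;\mathcal{S})$.
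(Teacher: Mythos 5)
Your proof is correct. The paper does not actually include a proof of this lemma --- it invokes Fano's inequality as a standard result (in the formulation of Ref.~\cite{duchi2016lecture}) --- and your argument is precisely the textbook derivation being relied upon: introduce the error indicator $E=\mathbbm{1}[\hat{X}\ne X]$, use $H(X\mid\hat{X})=H(E\mid\hat{X})+H(X\mid E,\hat{X})\le \log 2+P_e\log|\mathcal{X}_{2\varepsilon}^{(2\gamma\varepsilon)}|$, and combine with $H(X)=\log|\mathcal{X}_{2\varepsilon}^{(2\gamma\varepsilon)}|$ from uniformity. Your closing remark is also accurate: the Markov chain $X\to\mathcal{S}\to\hat{X}$ is not needed for this inequality itself and only enters downstream through the data-processing bound on $I(X;\hat{X})$.
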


The information-theoretic lower bound for the risk function defined in Eqn.~(\ref{eq:minimax_risk}) can be achieved by using Fano's method for multiple hypothesis testing established in Lemma~\ref{lem:est_test} and Lemma~\ref{lem:Fano}. Moreover, it is reduced to separately bound the mutual information $\mathrm{I}(X;\hat{X})$ and the local $2\varepsilon$-packing cardinality $|\mathcal{X}_{2\varepsilon}^{(2\gamma\varepsilon)}|$, which are given by the following two lemmas. 

\begin{lemma}
	[Lower bound of local $2\varepsilon$-packing cardinality for the output under the projective measurement]\label{lem:packing_observable_lem}
	Let $\bm{O}=\ket{\bm{o}}\bra{\bm{o}}$ be the projective measurement, $\mathcal{F}= \{\mathrm{f}_{\bm{U}}: \bm{\rho} \to \Tr(\bm{U}^{\dagger}\bm{O}\bm{U}\bm{\rho})|U\in \mathbb{S}\mathbb{U}(d)\}$ be the function class of the output of quantum system given an arbitrary fixed Haar state $\bm{\rho}$, and $\varrho(\mathrm{f}_{\bm{U}_1}, \mathrm{f}_{\bm{U}_2}) =  \sqrt{\mathbb{E}_{\bm{\rho} \sim \haar}\Tr\left(\bm{O}\left(\bm{U}_1\bm{\rho} \bm{U}_1^{\dagger}-\bm{U}_2\bm{\rho} \bm{U}_2^{\dagger}\right)\right) ^2 }$ be the distance measure. Then there exist a local $2\varepsilon$-packing  $\mathcal{X}_{2\varepsilon}^{(2\gamma\varepsilon)}$ with $\varepsilon=\mathcal{O}(1/d)$ in the $\varrho$-metric such that the packing cardinality yields
	\begin{equation}\label{eq:packing_observable}
		\left|\mathcal{X}_{2\varepsilon}^{(2\gamma\varepsilon)} \right|\ge \exp\left(\frac{d\min\{(1-2\tilde{\varepsilon})^2, (4\gamma^2\tilde{\varepsilon}^2-1)^2\}}{16}\right),
	\end{equation}
	where $\tilde{\varepsilon}=2\sqrt{2d(d+1)}\varepsilon$ and $\gamma$ refers to an arbitrary constant obeying $0<4\gamma^2\tilde{\varepsilon}^2-1<1$ and $\gamma>2$. 
\end{lemma}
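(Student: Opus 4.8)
The plan is to reduce the packing problem for the function class $\mathcal{F}$ to a packing problem for pure states and then construct the latter by a probabilistic (Gilbert--Varshamov type) argument. First I would invoke Lemma~\ref{lem:rewrite_risk}: since $O=\ket{o}\bra{o}$, each operator $U^{\dagger}OU=\ket{\psi_U}\bra{\psi_U}$ is the rank-one projector onto $\ket{\psi_U}=U^{\dagger}\ket{o}$, and as $U$ ranges over $\mathbb{S}\mathbb{U}(d)$ the vector $\ket{\psi_U}$ sweeps out all of $\mathcal{H}_d$. Using $\|\ket{\psi_1}\bra{\psi_1}-\ket{\psi_2}\bra{\psi_2}\|_1=2\sqrt{1-|\braket{\psi_1|\psi_2}|^2}$, the metric reduces to $\varrho(f_{U_1},f_{U_2})=\sqrt{2(1-c^2)/(d(d+1))}$ with $c=|\braket{\psi_1|\psi_2}|$. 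Hence building the local $2\varepsilon$-packing of $\mathcal{F}$ is equivalent to exhibiting a family of unit vectors whose pairwise overlaps all lie in a prescribed band: substituting $\tilde\varepsilon=2\sqrt{2d(d+1)}\varepsilon$, the lower distance bound $\varrho\ge 2\varepsilon$ becomes $c^2\le 1-\tilde\varepsilon^2/4$ (states not too close), while the upper bound becomes $c^2\ge 1-\mathcal{O}(\gamma^2\tilde\varepsilon^2)$ (states not too far).

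Next I would engineer a family meeting this two-sided constraint. Because the upper distance bound forces the states to be mutually close, I would cluster them around a fixed reference axis, writing $\ket{\psi_x}=\cos\theta\,\ket{o}+\sin\theta\,\ket{w_x}$ with $\ket{w_x}$ a unit vector in the $(d-1)$-dimensional complement of $\ket{o}$. Then $\braket{\psi_x|\psi_y}=\cos^2\theta+\sin^2\theta\,\braket{w_x|w_y}$, so it suffices to control the auxiliary overlaps $|\braket{w_x|w_y}|\le\delta$; this confines $c$ to the interval $[\cos^2\theta-\sin^2\theta\,\delta,\ \cos^2\theta+\sin^2\theta\,\delta]$, and choosing the cluster angle $\theta$ together with $\delta$ makes that interval sit inside the admissible fidelity band. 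The two endpoints of the band translate into two separate upper bounds on $\delta$ — one enforcing ``not too close'', the other ``not too far'' — and the binding one is exactly what produces the minimum of two quantities in the final exponent. The admissibility hypotheses $0<4\gamma^2\tilde\varepsilon^2-1<1$ and $\gamma>2$ are precisely the regime in which both bounds are simultaneously feasible.

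To count the resulting vectors I would use the Haar overlap tail on $\mathcal{H}_{d-1}$, namely $\Pr[\,|\braket{w_x|w_y}|^2\ge\delta^2\,]=(1-\delta^2)^{d-2}$, inside a greedy/union-bound selection: starting from i.i.d.\ Haar candidates and discarding any vector whose overlap with a previously retained one exceeds $\delta$, one keeps a subfamily of size at least $(1-\delta^2)^{-(d-2)}$. The elementary bound $-\log(1-\delta^2)\ge\delta^2$ converts this into $\exp\big((d-2)\delta^2\big)$, and plugging in the binding value $\delta=\tfrac14\min\{\,1-2\tilde\varepsilon,\ 4\gamma^2\tilde\varepsilon^2-1\,\}$ reproduces the claimed $\exp\!\big(d\min\{(1-2\tilde\varepsilon)^2,(4\gamma^2\tilde\varepsilon^2-1)^2\}/16\big)$, with the two branches of the minimum coming from the ``not too close'' and ``not too far'' constraints respectively.

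I expect the main obstacle to be the genuinely two-sided nature of the local packing. A standard packing argument only forbids points from being too close, so its cardinality grows like $\varepsilon^{-\Theta(d)}$; here the extra requirement that no two states be too far apart caps the count at $\exp(\Theta(d))$ and is what forces the clustered construction in the first place. Making both constraints hold simultaneously for every pair, optimizing the cluster radius so that $\delta$ is as large as possible, and then carrying the constants carefully through the substitution $\tilde\varepsilon=2\sqrt{2d(d+1)}\varepsilon$ so that the two endpoints come out exactly as $(1-2\tilde\varepsilon)^2$ and $(4\gamma^2\tilde\varepsilon^2-1)^2$ is where the real work lies; the remaining Haar tail estimation is routine.
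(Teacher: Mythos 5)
Your reduction of the $\varrho$-packing of $\mathcal{F}$ to a fidelity-band condition on the pure states $U^{\dagger}\ket{o}$ is the same first step the paper takes (via Lemma~\ref{lem:rewrite_risk} and $\varrho(f_{U_1},f_{U_2})=\|U_1^{\dagger}OU_1-U_2^{\dagger}OU_2\|_1/\sqrt{2d(d+1)}$), but your construction of the pure-state family is genuinely different. The paper draws $L$ i.i.d.\ Haar states $W_i\ket{\phi}$, applies the projector-overlap concentration inequality (Lemma~\ref{lem:proj_concentration}) to each pair, and union-bounds (Lemma~\ref{lem:packing_states_exist}); the two branches of the minimum are literally the deviation parameters $t=1-2\tilde\varepsilon$ (lower tail, ``too close'') and $t=4\gamma^2\tilde\varepsilon^2-1$ (upper tail, ``too far'') of that inequality. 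You instead cluster the states around the axis $\ket{o}$, convert the two-sided fidelity constraint into a one-sided overlap bound $|\braket{w_x|w_y}|\le\delta$ in the orthogonal complement, and count via the exact Haar tail $(1-\delta^2)^{d-2}$ with a greedy selection. Both are valid probabilistic-existence arguments; yours has the merit of treating the ``not too far'' constraint honestly (in the paper's parametrization that event is degenerate, since $\Tr(\Pi_1W\Pi_2W^{\dagger})\le 1<4\gamma^2\varepsilon^2$, owing to a nonstandard trace-distance--fidelity identity), and of replacing a concentration inequality by an exact tail. Two caveats. First, in your construction the feasibility condition on $\delta$ is a \emph{single} inequality, roughly $\delta\le(c_{\max}-c_{\min})/(2-c_{\max}-c_{\min})$, which in the admissible regime $\gamma\tilde\varepsilon\in(1/2,1/\sqrt{2})$, $\gamma>2$ permits $\delta=\Theta(1)$; so $\delta=\tfrac14\min\{1-2\tilde\varepsilon,4\gamma^2\tilde\varepsilon^2-1\}$ is not ``the binding value'' emerging from two separate constraints but a value reverse-engineered to reproduce the target exponent. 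That is logically acceptable for a lower bound (your $\delta$ is feasible and any feasible $\delta$ yields cardinality at least $\exp((d-2)\delta^2)$), but the narrative about where the minimum comes from should be corrected, and you should explicitly verify feasibility of your chosen $\delta$. Second, mind the bookkeeping: a local $2\varepsilon$-packing requires pairwise $\varrho$-distance at least $4\varepsilon$ (not $2\varepsilon$), and $(d-2)\delta^2$ must be reconciled with $d\,\delta^2$; the paper carries comparable constant-factor slop, so neither point is fatal.
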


\begin{lemma}
	[Upper bound of the mutual information $\mathrm{I}(X;\hat{X})$]\label{lem:sum_MI_upper_bound}
	Following the notations in Lemma~\ref{lem:packing_observable_lem} and Assumption~\ref{assu:normal_distribution}, the average of mutual information over the training states $\{\bm{\rho}_j\}_{j=1}^N$ sampled from the distribution described in Construction Rule \ref{construct:1} yields
	\begin{equation}
		\mathbb{E}_{\bm{\rho}_1,\cdots, \bm{\rho}_N}\mathrm{I}(X;\hat{X}) \le N\cdot \min \left\{\frac{4md\gamma^2\varepsilon^2}{r}, r\log\left(d\right)  \right\},
	\end{equation}
	where $\gamma$ refers to an arbitrary constant obeying $0<4\gamma^2\tilde{\varepsilon}^2-1<1$ with $\tilde{\varepsilon}=2\sqrt{2d(d+1)}{\varepsilon}$ and $\gamma>2$.
\end{lemma}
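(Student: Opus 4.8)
The plan is to track how much the target index $X$ can influence the learner's output $\hat X$ through the Markov chain $X \to \mathcal{S} \to \hat X$, taking all inequalities in the sense of the $\{\rho_j\}$-averaged quantity that the lemma bounds. First I would invoke the data processing inequality (Property~\ref{property:data_pro_ineqn}) to write $I(X;\hat X) \le I(X;\mathcal{S})$, and then note that, conditioned on the target-independent input states $\{\rho_j\}$, the only carrier of information about $X$ is the measurement record $\bm{o}=(\bm{o}_1,\dots,\bm{o}_N)$, whose components are mutually independent given $X$ by the product structure $\mathbb{P}_{\bm{o}|x}=\prod_j \mathbb{P}_{\bm{o}|x}^{(j)}$. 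Subadditivity of mutual information (Property~\ref{property:subadd_MI}) then gives $I(X;\hat X)\le \sum_{j=1}^N I(X;\bm{o}_j)$, so the problem reduces to bounding a single-sample mutual information $I(X;\bm{o}_j)$ in two complementary ways and summing. Since all $\rho_j$ are identically distributed under Construction Rule~\ref{construct:1}, the two per-sample bounds are uniform in $j$ and the outer minimum survives multiplication by $N$.

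For the first branch I would exploit that mutual information is an averaged KL-divergence to the marginal, $I(X;\bm{o}_j)=\mathbb{E}_X D_{\KL}(\mathbb{P}_{\bm{o}_j|X}\,\|\,\bar{\mathbb{P}}_{\bm{o}_j})$, and then use convexity of $D_{\KL}$ in its second argument to pass to pairwise divergences, $I(X;\bm{o}_j)\le \mathbb{E}_{X,X'}D_{\KL}(\mathbb{P}_{\bm{o}_j|X}\,\|\,\mathbb{P}_{\bm{o}_j|X'})$. Under Assumption~\ref{assu:normal_distribution} both conditionals are Gaussians with common variance $\sigma^2$, so each pairwise divergence equals $(u_x-u_{x'})^2/(2\sigma^2)$ with $u_x=\Tr((U_x^{\dagger}OU_x\otimes\mathbb{I}_{\mathcal{R}})\rho_j)$. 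The crux is then the Haar average of $(u_x-u_{x'})^2$: writing the traceless $G=U_x^{\dagger}OU_x-U_{x'}^{\dagger}OU_{x'}$, one has $u_x-u_{x'}=\sum_{k=1}^r c_{jk}\bra{\xi_{jk}}G\ket{\xi_{jk}}$, and averaging over the orthogonal Haar vectors $\ket{\xi_{jk}}$ (via Lemma~\ref{lem:Tr(psipsi)Tr(psipsi)} and Lemma~\ref{lem:Tr(WA)Tr(WB)_Orth}) together with the first and second moments of the Dirichlet-distributed Schmidt weights $\ket{c_j}\in\mathbb{S}\mathbb{U}(r)$ yields a closed form proportional to $\Tr(G^2)/r$. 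Using $\Tr(G^2)=\varrho(f_{U_x},f_{U_{x'}})^2 d(d+1)$ from Lemma~\ref{lem:rewrite_risk} and the local-packing bound $\varrho\le 2\gamma\varepsilon$, I would obtain $\mathbb{E}_{\rho_j}[(u_x-u_{x'})^2]\le 8\gamma^2\varepsilon^2/r$; combined with the estimate $\sigma^2=\mathbb{E}[u_x(1-u_x)]/m\asymp 1/(md)$ (using $\mathbb{E}[u_x]=\Tr(O)/d=1/d$ and $\mathbb{E}[u_x^2]=\mathcal{O}(1/d^2)$), this delivers the branch $\mathcal{O}(md\gamma^2\varepsilon^2/r)$.

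For the second branch I would argue that, no matter how large $m$ is, the law of $\bm{o}_j$ depends on $X$ only through $u_x=\sum_{k=1}^r c_{jk}\,a_{x,k}$ with $a_{x,k}=|\bra{o}U_x\ket{\xi_{jk}}|^2$. Hence $X\to(a_{X,1},\dots,a_{X,r})\to\bm{o}_j$ is a Markov chain, and data processing together with subadditivity of entropy (Property~\ref{property:subadditivity_entropy}) give $I(X;\bm{o}_j)\le H\big((a_{X,k})_{k=1}^r\big)\le \sum_{k=1}^r H(a_{X,k})\le r\log d$, where each overlap contributes at most $\log d$ because its effective range is resolved into $\mathcal{O}(d)$ distinguishable levels (its natural scale being $1/d$). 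Taking the minimum of the two branches and summing over $j$ finishes the argument.

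I expect the Haar integration of $(u_x-u_{x'})^2$ to be the main obstacle, since it is where the Schmidt rank enters and produces the decisive $1/r$ factor underlying the whole transition phenomenon: one must correctly combine the second-moment formula for a single Haar vector (Lemma~\ref{lem:Tr(psipsi)Tr(psipsi)}) with that for two \emph{orthogonal} Haar vectors (Lemma~\ref{lem:Tr(WA)Tr(WB)_Orth}) and with the moments of the Dirichlet Schmidt weights, then verify the elementary inequality $(2d-r-1)/((r+1)(d-1))\le 2/r$ that converts the exact expression into the stated $1/r$ scaling. A secondary subtlety is justifying the $\log d$ per-component cap in the second branch and controlling the error in the variance proxy $\sigma^2$ so that the constants in the first branch emerge as claimed.
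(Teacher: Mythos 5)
Your overall decomposition and your first branch coincide with the paper's proof: the chain $X \to \rho_{1:N}^{(X)} \to \bm{o}_{1:N} \to \hat{X}$, data processing plus subadditivity to reach $\sum_j I(X;\bm{o}_j)$, the convexity-of-KL step to pairwise Gaussian divergences $(u_x-u_{x'})^2/(2\sigma^2)$ (this is exactly the paper's Lemma~\ref{lem:MI_Gaussian}), and the Haar integration of $(u_x-u_{x'})^2$ via Lemma~\ref{lem:Tr(psipsi)Tr(psipsi)} and Lemma~\ref{lem:Tr(WA)Tr(WB)_Orth} together with the moments of $\ket{c_j}$, yielding $8\gamma^2\varepsilon^2/r$ against $\sigma^2 \asymp 1/(md)$. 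That part is sound and essentially identical to the paper.

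The gap is in your second branch. You replace the paper's quantum argument by a classical one: you posit the chain $X\to(a_{X,1},\dots,a_{X,r})\to\bm{o}_j$ with $a_{x,k}=|\bra{o}U_x\ket{\xi_{jk}}|^2$ and then claim $I(X;\bm{o}_j)\le \sum_k H(a_{X,k})\le r\log d$ because each overlap has ``$\mathcal{O}(d)$ distinguishable levels.'' This last step does not hold. The variable $a_{X,k}$ is a continuous function of $X$, and $X$ is uniform over a packing of cardinality $\exp(\Omega(d))$; generically the map $x\mapsto a_{x,k}$ is injective on the packing, so the (discrete) entropy $H(a_{X,k})$ can be as large as $H(X)=\Omega(d)$, not $\log d$. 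There is no resolution limit available to you in this branch, because the bound is supposed to survive the limit $m\to\infty$, where $\bm{o}_j\to u_x$ exactly and no measurement noise coarse-grains the overlaps. The paper's route (Lemma~\ref{lem:mi_u_state_2}) is essential here: it bounds $I(X;\bm{o}_j)$ by the mutual information between $X$ and the quantum states $\{\ket{\xi_{jk}^{(X)}}\}_{k=1}^r$ and then invokes Holevo's theorem, $I \le S\bigl(\tfrac{1}{|\mathcal{X}|}\sum_x \otimes_k \ket{\xi_{jk}^{(x)}}\bra{\xi_{jk}^{(x)}}\bigr)\le \log(d^r)=r\log d$. The $r\log d$ cap is a statement about the dimension of the quantum system carrying the $X$-dependence, not about the entropy of the induced classical overlaps; any purely classical post-processing argument on the real numbers $a_{x,k}$ cannot recover it. To repair your proof, replace the entropy bound on $(a_{X,k})_k$ by the Holevo bound on the ensemble $\{\otimes_k\ket{\xi_{jk}^{(x)}}\}_x$.
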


Hence, learning quantum dynamics in the framework of the quantum NFL theorem taking consideration of the entangled quantum state and the finite measurements is encapsulated in the following theorem.

\begin{theorem}[Formal statement of Theorem~1 in the maintext]
	\label{thm:formal_finite_measurement}
	Let $\{ \mathrm{f}_{\bm{U}_{x}}\}_{x \in \mathcal{X}_{2\varepsilon}^{(2\gamma\varepsilon)}}$ be a local $2\varepsilon$-packing with the maximal distance being $\gamma\varepsilon$ of the function class $\mathcal{F}$ in the $\varrho$-metric. Assume that the index $X$ corresponding to the target function $\mathrm{f}_{\bm{U}_{X}}$ is uniformly sampled from the set $\mathcal{X}_{2\varepsilon}^{(2\gamma\varepsilon)}$. Conditional on $X=x$, we obtain the training set $\mathcal{S}=\{\bm{\rho}_j, \bm{o}_j \}_{j=1}^N$ where $\bm{\rho}_j$ is the random entangled state of Schmidt rank $r$ sampled from the distribution described in Construction Rule \ref{construct:1}, and $\bm{o}_j=\sum_{k=1}^m\bm{o}_{jk}/m$ is the measurement output of the observable $\bm{O}$ following Assumption~\ref{assu:normal_distribution}. Then the averaged risk function defined in Eqn.~(\ref{eq:minimax_risk}) is lower bounded by
	\begin{equation}
		\mathbb{E}_{\bm{U}}\mathbb{E}_{\mathcal{S}} \mathrm{R}_{\bm{U}}(\bm{V}_{\mathcal{S}}) \ge \frac{\tilde{\varepsilon}^2}{8d(d+1)} \left(1- \frac{N\cdot \min\{128m\tilde{\varepsilon}^2/r(d+1), r\log(d))\}+16\log 2 }{d\min\{(1-2\tilde{\varepsilon})^2, (64\tilde{\varepsilon}^2-1)^2\}}  \right),
	\end{equation}
	where $\tilde{\varepsilon}=2\sqrt{2d(d+1)}{\varepsilon}$ and the expectation is taken over all target unitaries $\bm{U}$.
\end{theorem}

\subsection{Proof of Theorem~\ref{thm:formal_finite_measurement}}\label{app_subsec:proof_theorem}

We are now ready to prove Theorem~\ref{thm:formal_finite_measurement}.
\begin{proof}
	[Proof of Theorem~\ref{thm:formal_finite_measurement}]
	Combining Lemma~\ref{lem:Fano} with the reduction from learning to testing in Lemma~\ref{lem:est_test}, we have
	\begin{align}
		\mathbb{E}_{\bm{U}}\mathbb{E}_{\mathcal{S}} \mathrm{R}_{\bm{U}}(\bm{V}_{\mathcal{S}}) = & \mathrm{R}_N(\Phi \odot \varrho )
		\nonumber \\
		\ge & \mathbb{E}_{\bm{\rho}_1,\cdots, \bm{\rho}_N} \Phi(\varepsilon)  \mathbb{P}(\Psi_{\bm{\rho}_1,\cdots, \bm{\rho}_N}(\bm{o})\ne X) 
		\nonumber \\
		\ge & \varepsilon^2 \cdot \left(1-\frac{\mathbb{E}_{\bm{\rho}_1,\cdots, \bm{\rho}_N}\mathrm{I}(X;\hat{X})+\log2}{\log(|\mathcal{X}_{2\varepsilon}^{(2\gamma\varepsilon)}|)} \right).
	\end{align}
	Let $\gamma=4$ and $\varepsilon=\tilde{\varepsilon}/(2\sqrt{2d(d+1)})$ with $\tilde{\varepsilon}$ taking any number satisfying $\gamma \tilde{\varepsilon} \in (1/2,\sqrt{2}/2)$ such that $0<4\gamma^2\tilde{\varepsilon}^2-1<1$. 
	Employing the results of Lemma~\ref{lem:packing_observable_lem} and Lemma~\ref{lem:sum_MI_upper_bound}, we have
	\begin{align}
		\mathrm{R}_N(\Phi \odot \varrho ) 
		& \ge \varepsilon^2 \left(1- \frac{N\cdot \min\{4md\gamma^2\varepsilon^2/r, r\log(d)\}+\log 2}{d\min\{(1-2\tilde{\varepsilon})^2, (4\gamma^2\tilde{\varepsilon}^2-1)^2\}/16}  \right)
		\nonumber \\
		& \ge \frac{\tilde{\varepsilon}^2}{8d(d+1)} \left(1- \frac{N\cdot \min\{8md\tilde{\varepsilon}^2/rd(d+1), r \log(d)\}+\log 2}{d\min\{(1-2\tilde{\varepsilon})^2, (64\tilde{\varepsilon}^2-1)^2\}/16}  \right)
		\nonumber \\
		& = \frac{\tilde{\varepsilon}^2}{8d(d+1)} \left(1- \frac{N\cdot \min\{128m\tilde{\varepsilon}^2/r(d+1), 16r\log(d)\}+16\log 2}{d\min\{(1-2\tilde{\varepsilon})^2, (64\tilde{\varepsilon}^2-1)^2\}}  \right).
	\end{align}
	where the first inequality employs Lemma~\ref{lem:packing_observable_lem} that $\log(|\mathcal{X}_{2\varepsilon}^{(2\gamma\varepsilon)}|) \ge  d\min\{(1-2\tilde{\varepsilon})^2, (4\gamma^2\tilde{\varepsilon}^2-1)^2\}/16$.
	This completes the proof.
\end{proof}

\subsection{Proof of Lemma~\ref{lem:est_test}-reducing the learning problem to hypothesis testing}\label{subsec:proof_est_test}
\begin{proof}[Proof of Lemma~\ref{lem:est_test}]
	To see this result, we first recall that the averaged risk given in Eqn.~(\ref{eq:minimax_risk}) has the form 
	\begin{equation}
		\mathrm{R}_N(\Phi \odot \varrho)
		= \mathbb{E}_{\bm{U}}\mathbb{E}_{\mathcal{S}} [\Phi(\varrho(\mathrm{h}_{\mathcal{S}},\mathrm{f}_{\bm{U}}))], \nonumber
	\end{equation}
	where $\Phi(t)=t^2$, $\varrho$ is the distance metric on the functionals set $\mathcal{F}$ with the definition given in Eqn.~(\ref{eq:distance_metric}), and $\mathrm{h}_{\mathcal{S}}$ refers to an arbitrary learned hypothesis on the training data $\mathcal{S}$. 
	Denote $\Theta$ as the value space of $\bm{o}$ and $\mathbb{P}_{\bm{o}}$ as the cumulative distribution function (CDF) of $\bm{o}$. Decomposing the integration with respect to $\mathcal{S}$ into that with respect to $\{\bm{\rho}_j\}_{j=1}^N$ and $\bm{o}$  yields
	\begin{align}
		\mathrm{R}_N(\Phi \odot \varrho)
		=&  \mathbb{E}_{\bm{U}}\mathbb{E}_{\bm{\rho}_1, \cdots, \bm{\rho}_N} \mathbb{E}_{\bm{o}} [\Phi(\varrho(\mathrm{h}_{\mathcal{S}},\mathrm{f}_{\bm{U}}))]
		\nonumber \\
		=&  \mathbb{E}_{\bm{U}}\mathbb{E}_{\bm{\rho}_1, \cdots, \bm{\rho}_N} \int_{\Theta} \Phi(\varrho(\mathrm{h}_{\mathcal{S}},\mathrm{f}_{\bm{U}})) \mathrm{d} \mathbb{P}_{\bm{o}}
		\nonumber \\
		=&  \mathbb{E}_{\bm{U}}\mathbb{E}_{\bm{\rho}_1, \cdots, \bm{\rho}_N} \int_{ \{\bm{o}:\varrho(\mathrm{h}_{\mathcal{S}},\mathrm{f}_{\bm{U}})\ge 0\} } \Phi(\varrho(\mathrm{h}_{\mathcal{S}},\mathrm{f}_{\bm{U}})) \mathrm{d} \mathbb{P}_{\bm{o}}
		\nonumber \\ 
		\ge&  \mathbb{E}_{\bm{U}}\mathbb{E}_{\bm{\rho}_1, \cdots, \bm{\rho}_N} \int_{\{\bm{o}:\varrho(\mathrm{h}_{\mathcal{S}},\mathrm{f}_{\bm{U}})\ge \varepsilon\} } \Phi(\varepsilon) \mathrm{d} \mathbb{P}_{\bm{o}}
		\nonumber \\ 
		=  &  \mathbb{E}_{\bm{U}}\mathbb{E}_{\bm{\rho}_1, \cdots, \bm{\rho}_N} \Phi(\varepsilon)\mathbb{P}_{\bm{o}}(\varrho(\mathrm{h}_{\mathcal{S}},\mathrm{f}_{\bm{U}}) \ge \varepsilon),
		\nonumber \\ 
		=  &  \mathbb{E}_{\bm{\rho}_1, \cdots, \bm{\rho}_N}\frac{1}{|\mathcal{X}_{2\varepsilon}^{(2\gamma\varepsilon)}|}\sum_{x\in \mathcal{X}_{2\varepsilon}^{(2\gamma\varepsilon)}} \Phi(\varepsilon)\mathbb{P}_{\bm{o}}(\varrho(\mathrm{h}_{\mathcal{S}},\mathrm{f}_{\bm{U}_X}) \ge \varepsilon | X=x), \label{eq:est_test_1}
	\end{align}
	where the first equality follows that the randomness of $\mathcal{S}$ comes from $\{\bm{\rho}_j\}_{j=1}^N$ and $\bm{o}$, the second equality employs the integral representation with respect to $\bm{o}$ over the integration area $\Theta$, the third equality follows that the inequality $\varrho(\mathrm{h}_{\mathcal{S}}, \mathrm{f}_{\bm{U}})\ge 0 $ holds for any $\bm{o}\in \Theta$ and hence $\Theta=\{\bm{o}:\varrho(\mathrm{h}_{\mathcal{S}},\mathrm{f}_{\bm{U}})\ge 0\}$ with $\mathrm{h}_{\mathcal{S}}$ being $\bm{o}$-dependent, the first inequality follows that $\Phi$ is non-negative and non-decreasing in the integral region, the fourth equality uses the independence of $\Phi(\varepsilon)$ on $\bm{o}$ and the probability representation $\int_A \mathrm{d} \mathbb{P}_{\bm{o}}=\mathbb{P}_{\bm{o}}(A)$ with $A=\{\bm{o}:\varrho(\mathrm{h}_{\mathcal{S}},\mathrm{f}_{\bm{U}_x}) \ge \varepsilon\}$, the final equality follows the equivalence of the expectation under the Haar distribution on $\mathbb{S}\mathbb{U}(d)$ and the uniform distribution on the packing. 
	
	To obtain Eqn.~(\ref{eq:est_test}) with Eqn.~(\ref{eq:est_test_1}), we now only need to prove the inequality 
	\[\mathbb{P}_{\bm{o}}(\varrho(\mathrm{h}_{\mathcal{S}},\mathrm{f}_{\bm{U}_X}) \ge \varepsilon | X=x ) \ge \mathbb{P}_{\bm{o}}(\Psi_{\bm{\rho}_1, \cdots, \bm{\rho}_N}(\bm{o})\ne X | X=x)\] for any index $x \in \mathcal{X}_{2\varepsilon}^{(2\gamma\varepsilon)}$.  		Denote two events $A=\{\bm{o}:\varrho(\mathrm{h}_{\mathcal{S}},\mathrm{f}_{\bm{U}_x}) \ge \varepsilon\}$ and $B=\{\bm{o}: \Psi_{\bm{\rho}_1, \cdots, \bm{\rho}_N}(\bm{o}) \ne x\}$, the inequality can be achieved by proving $B \subset A$ or equivalently $A^C \subset B^C$, where $A^C, B^C$ denote the complement of $A$ and $B$. Particularly, the inequality $\mathbb{P}_{\bm{o}}(A)\ge \mathbb{P}_{\bm{o}}(B)$ holds if and only if $B \subset A$ or equivalently $A^C \subset B^C$. In the following,
	we complete this proof by proving the inclusion relationship of $A^C \subset B^C$, i.e., $\varrho(\mathrm{h}_{\mathcal{S}},\mathrm{f}_{\bm{U}_x}) < \varepsilon$ implies that $\Psi_{\bm{\rho}_1, \cdots, \bm{\rho}_N}(\bm{o})=x$.
	
	Assume that there exists an index $x$ such that $\varrho(\mathrm{h}_{\mathcal{S}},\mathrm{f}_{\bm{U}_x}) < \varepsilon$, then for any $x'\ne x$ with $x' \in \mathcal{X}_{2\varepsilon}^{(2\gamma\varepsilon)}$, we have
	\begin{equation}
		\varrho(\mathrm{h}_{\mathcal{S}},\mathrm{f}_{\bm{U}_{x'}}) \ge \varrho(\mathrm{f}_{\bm{U}_{x'}},\mathrm{f}_{\bm{U}_x}) - \varrho(\mathrm{h}_{\mathcal{S}},\mathrm{f}_{\bm{U}_{x}}) \ge \varepsilon \ge \varrho(\mathrm{h}_{\mathcal{S}},\mathrm{f}_{\bm{U}_x})
	\end{equation}
	where the first inequality employs the triangle inequality, the second inequality follows that $\varrho(\mathrm{f}_{\bm{U}_{x'}},\mathrm{f}_{\bm{U}_x}) \ge 2\varepsilon$ for any $x\ne x' \in \mathcal{X}_{2\varepsilon}^{(2\gamma\varepsilon)}$, the last inequality is based on the presupposition. This implies that $x=\arg \min_{\tilde{x} \in \mathcal{X}_{2\varepsilon}^{(2\gamma\varepsilon)}}\varrho(\mathrm{h}_{\mathcal{S}},  \mathrm{f}_{\bm{U}_{\tilde{x}}})$ and hence $\Psi_{\bm{\rho}_1, \cdots, \bm{\rho}_N}(\bm{o})=x$ according to the definition of $\Psi_{\bm{\rho}_1, \cdots, \bm{\rho}_N}(\bm{o})$.
\end{proof}

\subsection{Proof of Lemma~\ref{lem:packing_observable_lem}---bounding the local \texorpdfstring{$2\varepsilon$}{Lg}-packing cardinality \texorpdfstring{$|\mathcal{X}_{2\varepsilon}^{(2\gamma\varepsilon)}|$}{Lg}}\label{subsec:proof_packing_observable_lem}

The idea of lower bounding the local packing cardinality of the metric space $(\mathcal{F}, \varrho)$ in Lemma~\ref{lem:packing_observable_lem} can be decomposed into two steps. First, we derive a lower bound of the local packing cardinality of the space consisting of $n$-qubit pure states $\mathcal{H}_{d}$. This proof is given by Lemma~\ref{lem:proj_concentration} and Lemma~\ref{lem:packing_states_exist} in a probabilistic existence argument. Second, the local packing cardinality of metric spaces $(\mathcal{F}, \varrho)$ can be obtained by building the connection between metric spaces $(\mathcal{H}_{d}, \|\cdot\|_1)$ and $(\mathcal{F}, \varrho)$.

In the probabilistic method for bounding the local $\varepsilon$-packing cardinality, a sequence of independent and identically distributed (i.i.d) unitary operators $W_1, W_2, \cdots$ are sampled from the Haar distribution on $\mathbb{S}\mathbb{U}(d)$  to construct the $\varepsilon$-packing with  the states $\bm{\rho}_{\varepsilon,W_i}=W_i\ket{\bm{\phi}} \bra{\bm{\phi}} W_i^{\dagger}$ where $\ket{\bm{\phi}}\in\mathcal{H}_d$ is any fixed quantum state. We then apply standard concentration of measure results to argue that the probability of selecting an undesirable state set (i.e., there exist two states whose trace distance is less than $2\varepsilon$ or larger than $\gamma\varepsilon$) is exponentially small. This in turn implies that such states set is a local $\varepsilon$-packing with the maximal distance being $\gamma\varepsilon$. To this end, we first exploit the concentration of projector overlaps, which have been used in deriving the lower bounds for quantum state tomography \cite{haah2016sample, aaronson2018shadow, huang2021information, lowe2022lower}.

\begin{lemma}[Lemma~3.2, \cite{lowe2022lower}]\label{lem:proj_concentration}
	Let $W\in \mathbb{S}\mathbb{U}(d)$ be a Haar-random unitary operator and let $\Pi_1, \Pi_2: \mathcal{H}_d \to \mathcal{H}_d$ be orthogonal projection operators with the rank $r_1,r_2$ respectively. For all $t \in (0,1)$ it holds that
	\begin{align}
		\mathbb{P}_{W \sim \haar}\left[ \Tr(\Pi_1 W \Pi_2 W^{\dagger}) \le (1-t) \frac{r_1r_2}{d} \right] \le \exp \left(-\frac{r_1r_2t^2}{2}\right),
		\nonumber \\
		\mbox{and}~~\mathbb{P}_{W \sim \haar}\left[ \Tr(\Pi_1 W \Pi_2 W^{\dagger}) \ge (1+t) \frac{r_1r_2}{d} \right] \le \exp \left(-\frac{r_1r_2t^2}{4}\right).
	\end{align}
\end{lemma}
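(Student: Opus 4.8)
The plan is to view the overlap as the squared norm of an off-diagonal block of $W$ and then apply Lipschitz concentration of measure on the unitary group. First I would rewrite, using $\Pi_1^2=\Pi_1$ and $\Pi_2^2=\Pi_2$,
\begin{equation}
	X := \Tr(\Pi_1 W \Pi_2 W^{\dagger}) = \Tr\big(\Pi_2 W^{\dagger}\Pi_1 W \Pi_2\big) = \|\Pi_1 W \Pi_2\|_{\Fnorm}^2 ,
\end{equation}
so that $X$ is the squared Frobenius norm of the block of $W$ cut out by the two projectors. The mean then follows immediately from Lemma~\ref{lem:Tr(WW)} with $A=\Pi_2$, $B=\Pi_1$ and $\Tr(\Pi_i)=r_i$, giving $\mathbb{E}_{W\sim\haar}[X]=r_1 r_2/d=:\mu$. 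The target is therefore a two-sided concentration bound for $X$ about $\mu$.

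Next I would observe that $f(W):=\sqrt{X}=\|\Pi_1 W \Pi_2\|_{\Fnorm}$ is $1$-Lipschitz in the Frobenius metric: by the reverse triangle inequality and the fact that projectors are contractions,
\begin{equation}
	|f(W_1)-f(W_2)| \le \|\Pi_1(W_1-W_2)\Pi_2\|_{\Fnorm} \le \|W_1-W_2\|_{\Fnorm} .
\end{equation}
Concentration of measure on the unitary group (Levy's lemma / the Gromov--Milman inequality for $\mathbb{U}(d)$) then yields a sub-Gaussian tail $\mathbb{P}[\,|f-\mathbb{E}f|\ge s\,]\le \exp(-c\, d\, s^2)$ for a $1$-Lipschitz $f$ and an absolute constant $c$. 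Because the variance proxy is of order $1/(cd)$, Jensen's inequality $\mathbb{E}f\le\sqrt{\mu}$ together with this variance bound pins $\mathbb{E}f$ to within $O(1/\sqrt{d})$ of $\sqrt{\mu}$, so I may recentre the tail bounds at $\sqrt{\mu}$ at the cost of a lower-order correction.

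Finally I would convert the Gaussian tail of $f=\sqrt X$ into the asymmetric tails of $X$, which is exactly where the distinct constants $1/2$ and $1/4$ arise. For the lower tail I use $1-\sqrt{1-t}\ge t/2$, whence $\{X\le(1-t)\mu\}\subseteq\{\sqrt{\mu}-f\ge \tfrac12 t\sqrt{\mu}\}$ and the exponent becomes $c\,d\,(\tfrac12 t\sqrt{\mu})^2 = c\,r_1 r_2 t^2/4$; for the upper tail the weaker bound $\sqrt{1+t}-1\ge t/(2\sqrt2)$ (valid on $(0,1)$) gives $\{X\ge(1+t)\mu\}\subseteq\{f-\sqrt{\mu}\ge \tfrac{t}{2\sqrt2}\sqrt{\mu}\}$ with exponent $c\,r_1 r_2 t^2/8$, so the two tails sit in the ratio $2:1$ purely from the square-root transformation. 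Taking $c=2$ reproduces the claimed $\exp(-r_1 r_2 t^2/2)$ and $\exp(-r_1 r_2 t^2/4)$. The main obstacle is the \emph{exact} constant: the standard Gromov--Milman constant for $\mathbb{U}(d)$ is typically smaller than $2$, so the crude Lipschitz route recovers the correct scaling but not the sharp prefactors. To obtain them I would instead work with the exact law of $X$, representing the random frame $W\Pi_2$ by complex Gaussian vectors so that for $r_2=1$ one has $X\sim\mathrm{Beta}(r_1,d-r_1)$, and applying a direct Chernoff/moment-generating-function bound---which yields distinct upper- and lower-tail constants natively---before extending to general $r_2$ through the Jacobi (principal-angle) ensemble.
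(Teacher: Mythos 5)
First, note that the paper does not prove this statement at all: it is imported verbatim as Lemma~3.2 of Ref.~\cite{lowe2022lower}, so there is no in-paper proof to compare against, and your attempt has to stand on its own. Your setup is sound --- the identity $\Tr(\Pi_1 W\Pi_2 W^{\dagger})=\|\Pi_1 W\Pi_2\|_{\Fnorm}^2$, the mean $r_1r_2/d$ via Lemma~\ref{lem:Tr(WW)}, the $1$-Lipschitz bound for $f(W)=\|\Pi_1 W\Pi_2\|_{\Fnorm}$, and the square-root change of variables that produces the $2{:}1$ asymmetry between the two tails are all correct. But the argument as written does not establish the lemma, for two concrete reasons. (i) The concentration constant: L\'evy/Gromov--Milman concentration on $\mathbb{SU}(d)$ for a $1$-Lipschitz function in the Hilbert--Schmidt metric gives $\mathbb{P}[|f-\mathbb{E}f|\ge s]\le 2e^{-c\,d\,s^2}$ with $c$ on the order of $1/12$ to $1/24$ (coming from the Ricci curvature lower bound), not $c=2$. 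Simply ``taking $c=2$'' is not a lower-order issue to be cleaned up; the Lipschitz route provably cannot reach the stated prefactors, only $\exp(-\Omega(r_1r_2t^2))$ with a much smaller hidden constant. You recognize this, but then the entire displayed derivation proves a strictly weaker statement than the lemma. (ii) The recentering of the lower tail: concentration is around $\mathbb{E}f$, and for the event $\{X\le(1-t)\mu\}$ you need a \emph{lower} bound on $\mathbb{E}f-\sqrt{(1-t)\mu}$. Jensen gives $\mathbb{E}f\le\sqrt{\mu}$, which points the wrong way; the variance argument gives $\mathbb{E}f\ge\sqrt{(\mu-C/(cd))_+}$, and the correction $C/(cd)$ is comparable to $t^2\mu=t^2r_1r_2/d$ unless $r_1r_2t^2\gg 1$. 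So the lower-tail bound is not obtained for all $t\in(0,1)$ and all ranks, which is what the lemma asserts (and small $t$ is not a vacuous regime for the first inequality). The upper tail does not suffer from this, since there $\mathbb{E}f\le\sqrt{\mu}$ helps.

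The fallback you propose --- the exact law $X\sim\mathrm{Beta}(r_1,d-r_1)$ for $r_2=1$ plus a Chernoff bound, then ``extending to general $r_2$ through the Jacobi ensemble'' --- is the right direction and is essentially how sharp constants are obtained, but it is only gestured at. The extension is the genuinely hard step: for $r_2>1$ the quantity is $\Tr(\Pi_1 P)$ with $P=W\Pi_2W^{\dagger}$ a Haar-random rank-$r_2$ projector, and the $r_2$ column contributions are \emph{not} independent Beta variables, so one needs either the moment generating function of a linear statistic of the Jacobi ensemble or a martingale/negative-association argument across columns. Without that step carried out, the claimed constants $\frac{1}{2}$ and $\frac{1}{4}$ remain unproved. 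In short: correct skeleton and correct intuition for the tail asymmetry, but both routes you describe are left with gaps exactly where the stated constants would have to come from.
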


With this lemma, we can derive the lower bound of the local $\varepsilon$-packing cardinality of $n$-qubit pure states, which is encapsulated in the following lemma.

\begin{lemma}
	[Lower bound of packing cardinality for $n$-qubit quantum states]
	\label{lem:packing_states_exist}
	Under the distance metric of trace norm $\|\cdot\|_1$, there exists a local $\varepsilon$-packing $\mathcal{P}_{\varepsilon}^{(\gamma\varepsilon)}$ of the set of all $n$-qubit pure quantum states $\ket{\bm{\psi}}$ where the distance between arbitrary two elements in $\mathcal{P}_{\varepsilon}^{(\gamma\varepsilon)}$ is less than $\gamma\varepsilon$ satisfying 
	\begin{equation}    
		\left|\mathcal{P}_{\varepsilon}^{(\gamma\varepsilon)} \right| \ge \exp\left(\frac{d\min\{(1-\varepsilon)^2, (4\gamma^2\varepsilon^2-1)^2\}}{16}\right)  \label{eq:packing_states_exist},
	\end{equation}
	where $\gamma$ refers to an arbitrary constant obeying $0<4\gamma^2\varepsilon^2-1<1$ and $\gamma>2$. 
\end{lemma}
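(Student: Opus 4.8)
The plan is to prove existence by the probabilistic method: sample candidate pure states at random, show that the expected number of pairs violating either the lower distance bound $2\varepsilon$ or the upper (local) bound $\gamma\varepsilon$ is strictly less than one once the cardinality is capped at the claimed value, and conclude that at least one admissible local packing must exist. Following the template indicated just before Lemma~\ref{lem:proj_concentration}, I would generate states $\rho_i = W_i\ket{\phi}\bra{\phi}W_i^{\dagger}$ from i.i.d.\ Haar unitaries $W_i$ and reduce everything to the pairwise overlaps $F_{ij} = \Tr(\rho_i\rho_j) = |\bra{\phi}W_i^{\dagger}W_j\ket{\phi}|^2$, using the pure-state identity $\|\rho_i-\rho_j\|_1 = 2\sqrt{1-F_{ij}}$ so that both metric constraints become a single band on $F_{ij}$.

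First I would translate the two metric constraints into that band. The condition $\|\rho_i-\rho_j\|_1 \ge 2\varepsilon$ is equivalent to $F_{ij}\le 1-\varepsilon^2$, while $\|\rho_i-\rho_j\|_1 \le \gamma\varepsilon$ is equivalent to $F_{ij}\ge 1-\gamma^2\varepsilon^2/4$; the band $[\,1-\gamma^2\varepsilon^2/4,\ 1-\varepsilon^2\,]$ is nonempty precisely when $\gamma\ge 2$, which explains that hypothesis. Next I would bound the two bad events separately with Lemma~\ref{lem:proj_concentration} applied to $W=W_i^{\dagger}W_j$. For the \emph{too-close} event $F_{ij}>1-\varepsilon^2$ I would pass to the amplitude $\sqrt{F_{ij}}$ and use $\sqrt{1-\varepsilon^2}\ge 1-\varepsilon$, so that the deviation needed is at least $1-\varepsilon$ and the concentration tail contributes the factor $(1-\varepsilon)^2$ in the exponent. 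For the \emph{too-far} event I would use the opposing tail with a deviation parameter tied to $4\gamma^2\varepsilon^2-1$, which the hypothesis $0<4\gamma^2\varepsilon^2-1<1$ keeps inside the admissible range $(0,1)$ of Lemma~\ref{lem:proj_concentration}; this contributes $(4\gamma^2\varepsilon^2-1)^2$. Each tail is exponentially small in $d$, so a union bound over the $\binom{M}{2}$ pairs succeeds as long as $M^2\exp(-d\min\{(1-\varepsilon)^2,(4\gamma^2\varepsilon^2-1)^2\}/8)<1$, and solving for the largest admissible $M$ yields the stated lower bound $\exp(d\min\{\cdots\}/16)$.

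The step I expect to be the main obstacle is the \emph{two-sided} nature of the local packing. In an ordinary $\varepsilon$-packing only the too-close event must be ruled out; here I must additionally keep every pair \emph{within} distance $\gamma\varepsilon$, i.e.\ keep the whole family suitably clustered, so the lower-tail estimate controlling the too-far event is the genuinely new and delicate ingredient. The technical heart is therefore to set up the sampling and the deviation parameters fed into Lemma~\ref{lem:proj_concentration} so that both tails are simultaneously of order $\exp(-\Theta(d))$ and jointly beat the quadratic factor $\binom{M}{2}$, all while respecting $t\in(0,1)$; this simultaneous balancing is exactly what forces the constraints $\gamma>2$ and $0<4\gamma^2\varepsilon^2-1<1$ and what produces the minimum of the two squared deviations in the exponent. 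Once both tail bounds are secured, assembling them through the union bound and the final optimization over $M$ is a direct calculation.
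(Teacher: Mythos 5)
Your proposal is correct and follows essentially the same route as the paper: generate candidates $\rho_i = W_i\ket{\phi}\bra{\phi}W_i^{\dagger}$ from i.i.d.\ Haar unitaries, reduce each pairwise distance to a single Haar unitary $W=W_j^{\dagger}W_i$ by unitary invariance, control both the too-close and too-far events with the two tails of Lemma~\ref{lem:proj_concentration}, and finish with a union bound over pairs and the choice $M=\exp(d\min\{\cdot\}/16)$. The only (immaterial) divergence is in the too-close tail, where you pass through the exact fidelity relation to get a deviation of order $1-\varepsilon$ while the paper lower-bounds the trace norm by a projector overlap via $\|A\|_1\ge|\Tr(AW)|$ and takes $t=1-2\varepsilon$; both land on the claimed exponent up to the paper's own constant-factor bookkeeping.
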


\begin{proof}[Proof of Lemma~\ref{lem:packing_states_exist}] 
	We give a probabilistic existence argument of the lower bound of local $\varepsilon$-packing cardinality. Particularly, we first construct a local $\varepsilon$-packing in which the distance between arbitrary two elements is less than $\gamma\varepsilon$ by applying a probabilistic method. Let $\bm{\rho}_0=\ket{\bm{\phi}}\bra{\bm{\phi}}$ be any fixed quantum state and $\bm{\rho}_i=W_i\bm{\rho}_0 W_i^{\dagger}$ for each $i\in [L]$ with $W_1, W_2, \cdots, W_L \in \mathbb{S}\mathbb{U}(d) $ being arbitrary unitary operators sampled from Haar distribution. 
	In the following, with the aim of showing the existence of such local $\varepsilon$-packing with the desired lower bound, we will show the event that there exists $i \ne j \in [L]$ such that the trace distance between  $\bm{\rho}_i$ and $\bm{\rho}_j$ is less than $2\varepsilon$ or larger than $\gamma\varepsilon$ occurs with a small probability.
	Mathematically, the probability of this event has the form of 
	\begin{align}
		& \mathbb{P}_{W_1,\cdots,W_L\sim \haar} \left[ \bigcup_{i=1}^L \bigcup_{j=i+1}^L \left\{\| \bm{\rho}_{i} - \bm{\rho}_{j} \big\|_1 \le 2\varepsilon\right\} \bigcup \left\{\| \bm{\rho}_{i} - \bm{\rho}_{j} \big\|_1 \ge \gamma\varepsilon \right\} \right] 
		\nonumber \\
		\le & \sum_{i=1}^L \sum_{j=i+1}^L
		\bigg(\mathbb{P}_{W_i,W_j \sim \haar} \left[ \| \bm{\rho}_{i} - \bm{\rho}_{j} \big\|_1 \le 2\varepsilon  \right] + \mathbb{P}_{W_i,W_j \sim \haar} \left[ \| \bm{\rho}_{i} - \bm{\rho}_{j} \big\|_1 \ge \gamma\varepsilon  \right]  \bigg) 
		\nonumber \\
		= & \sum_{i=1}^L \sum_{j=i+1}^L
		\bigg(\mathbb{P}_{W_i,W_j \sim \haar} \left[ \big\| \bm{\rho}_{W} - \bm{\rho}_{\mathbb{I}} \big\|_1 \le 2\varepsilon  \right] + \mathbb{P}_{W_i,W_j \sim \haar} \left[ \big\| \bm{\rho}_{W} - \bm{\rho}_{\mathbb{I}} \big\|_1 \ge \gamma\varepsilon  \right]  \bigg)
		\nonumber \\
		= & \frac{L(L+1)}{2} \bigg(\mathbb{P}_{W \sim \haar} \left[ \big\| \bm{\rho}_{W} - \bm{\rho}_{\mathbb{I}} \big\|_1 \le 2\varepsilon  \right] + \mathbb{P}_{W \sim \haar} \left[ \big\| \bm{\rho}_{W} - \bm{\rho}_{\mathbb{I}} \big\|_1 \ge \gamma\varepsilon  \right]  \bigg)
		\label{eq:union_event_prob}
	\end{align}
	where the first inequality employs the subadditivity of probability measure, the first equality exploits the definition of $\bm{\rho}_W=W\ket{\bm{\phi}}\bra{\bm{\phi}} W^{\dagger}$ and the invariance of trace distance under arbitrary unitary transformations, i.e.,
	\begin{equation}\label{eq:trace_dist_equivalence}
		\left\| \bm{\rho}_{i} -\bm{\rho}_{j} \right\|_1 = \left\| W_i \bm{\rho}_0 W_i^{\dagger} - W_j \bm{\rho}_0 W_j^{\dagger}  \right\|_1 = \left\| W_j^{\dagger}(W_i \bm{\rho}_0 W_i^{\dagger} - W_j \bm{\rho}_0 W_j^{\dagger})W_j \right\|_1 = \| \bm{\rho}_W - \bm{\rho}_{\mathbb{I}} \|_1,
	\end{equation}
	where the last equality follows that the unitary operator $W=W_j^{\dagger}W_i$ follows Haar distribution as $W_i,W_j \in \mathbb{S}\mathbb{U}(d)$ are sampled from Haar distribution. Next, we separately consider the probability of the events in Eqn.~(\ref{eq:union_event_prob}) of $\| \bm{\rho}_{W} - \bm{\rho}_{\mathbb{I}} \big\|_1 \le 2\varepsilon$ and  $\| \bm{\rho}_{W} - \bm{\rho}_{\mathbb{I}} \big\|_1 \ge \gamma\varepsilon$ under the Haar distribution.  Particularly, we first note that with the definition $\bm{\rho}_W=W\ket{\bm{\phi}}\bra{\bm{\phi}} W^{\dagger}$ for any $W \in \mathbb{S}\mathbb{U}(d)$, we have
	\begin{align}
		\big\| \bm{\rho}_W - \bm{\rho}_{\mathbb{I}} \big\|_1 = & \big\| W\ket{\bm{\phi}}\bra{\bm{\phi}} W^{\dagger} -  \ket{\bm{\phi}}\bra{\bm{\phi}} \big\|_1 
		\nonumber \\
		\ge &  \frac{1}{2}\big\| W\ket{\bm{\phi}}\bra{\bm{\phi}} W^{\dagger} -  \ket{\bm{\phi}}\bra{\bm{\phi}} \big\|_1 
		\nonumber \\
		\ge &  \Tr \left( \left(\frac{W\ket{\bm{\phi}}\bra{\bm{\phi}} W^{\dagger} -  \ket{\bm{\phi}}\bra{\bm{\phi}}}{2}\right) \left(\mathbb{I}_d - 2\ket{\bm{\phi}}\bra{\bm{\phi}} \right) \right)
		\nonumber \\
		= & \Tr \left( \left(\mathbb{I}_d - \ket{\bm{\phi}}\bra{\bm{\phi}}  \right) W \ket{\bm{\phi}}\bra{\bm{\phi}} W^{\dagger}   \right)
		\nonumber \\
		= & \Tr \left( \Pi_1 W \Pi_2 W^{\dagger}   \right),
	\end{align}
	where the second inequality employs the property of trace norm that $\|A\|_1 = \max\{|\Tr(AW)|: W\in \mathbb{U}(d) \}$ for any square operator $A$, and $\mathbb{I}_d - 2\ket{\bm{\phi}}\bra{\bm{\phi}} \in \mathbb{U}(d)$, and the last equality is obtained by denoting $\Pi_2=\ket{\bm{\phi}}\bra{\bm{\phi}}$ and $\Pi_1=\mathbb{I}_{d}-\Pi_2$.
	This leads to the conclusion that 
	\begin{equation}\label{eq:single_event_prob}
		\mathbb{P}_{W \sim \haar} \left[ \| \bm{\rho}_{W} - \bm{\rho}_{\mathbb{I}} \big\|_1 \le 2\varepsilon \right] 
		\le  \mathbb{P}_{W\sim \haar}   \left[ \Tr(\Pi_1 W \Pi_2 W^{\dagger}) \le 2\varepsilon \right] 
		\le  \exp \left( -\frac{(d-1)(1-2\varepsilon)^2}{2}\right) 
	\end{equation}
	where the last inequality employs Lemma~\ref{lem:proj_concentration} with taking $t=1-2\varepsilon$. 
	
	On the other hand, employing the relation between trace distance and fidelity of pure states yields
	\begin{align}
		\big\| \bm{\rho}_W - \bm{\rho}_{\mathbb{I}} \big\|_1 = & \big\| W\ket{\bm{\phi}}\bra{\bm{\phi}} W^{\dagger} -  \ket{\bm{\phi}}\bra{\bm{\phi}} \big\|_1 
		\nonumber \\
		= &  \frac{1}{2}\sqrt{ 1- \Tr(W\ket{\bm{\phi}}\bra{\bm{\phi}} W^{\dagger}\ket{\bm{\phi}}\bra{\bm{\phi}}) }
		\nonumber \\
		= &  \frac{1}{2}\sqrt{ \Tr(W\ket{\bm{\phi}}\bra{\bm{\phi}} W^{\dagger}(\mathbb{I}_d-\ket{\bm{\phi}}\bra{\bm{\phi}})) }.
	\end{align}
	This leads to the probability upper bound of the event 	$\big\| \bm{\rho}_W - \bm{\rho}_{\mathbb{I}} \big\|_1 \ge \gamma \varepsilon$ with Lemma~\ref{lem:proj_concentration}, i.e.,
	\begin{align}
		\mathbb{P}_{W \sim \haar} \left[ \| \bm{\rho}_{W} - \bm{\rho}_{\mathbb{I}} \big\|_1 \ge  \gamma \varepsilon \right] 
		= & \mathbb{P}_{W\sim \haar}   \left[\Tr(W\ket{\bm{\phi}}\bra{\bm{\phi}} W^{\dagger}(\mathbb{I}_d-\ket{\bm{\phi}}\bra{\bm{\phi}})) \ge 4\gamma^2\varepsilon^2 \right] 
		\nonumber \\
		\le & \exp \left( -\frac{(d-1)(4\gamma^2\varepsilon^2-1)^2}{2}\right), 
		\label{eq:single_event_prob_lower}
	\end{align}
	where the first inequality employs the condition $0<4\gamma^2\varepsilon^2-1<1$, and the second equation in Lemma~\ref{lem:proj_concentration} with taking $t=4\gamma^2\varepsilon^2-1$.
	In conjunction with Eqn.~(\ref{eq:union_event_prob}), Eqn.~(\ref{eq:single_event_prob}), and Eqn.~(\ref{eq:single_event_prob_lower}), we have
	\begin{align}
		& \mathbb{P}_{W_1,\cdots,W_L\sim \haar} \left[ \bigcup_{i=1}^L \bigcup_{j=i+1}^L \left\{\| \bm{\rho}_{i} - \bm{\rho}_{j} \big\|_1 \le 2\varepsilon\right\} \bigcup \left\{\| \bm{\rho}_{i} - \bm{\rho}_{j} \big\|_1 \ge \gamma\varepsilon \right\} \right]  
		\nonumber \\
		\le & \frac{L(L+1)}{2} \left(\exp \left( -\frac{(d-1)(1-2\varepsilon)^2}{2}\right) + \exp \left( -\frac{(d-1)(4\gamma^2\varepsilon^2-1)^2}{2}\right) \right) 
		\nonumber \\
		\le &  L^2 \exp \left( -\frac{d\min\{(1-2\varepsilon)^2, (4\gamma^2\varepsilon^2-1)^2\}}{4}\right). 
	\end{align}
	This inequality implies that when taking $L=\exp(d(1-\varepsilon)^2/16)$, the probability of the event that $\{\bm{\rho}_i\}_{i=1}^L$ is not an $\varepsilon$-packing or there exists $\bm{\rho}_i, \bm{\rho}_j$ obeying $\|\bm{\rho}_i-\bm{\rho}_j\|_1 \le \gamma\varepsilon$  is strictly less than one. Therefore, we can conclude that there exists a local $\varepsilon$-packing whose cardinality is larger than $L=\exp(d\min\{(1-2\varepsilon)^2, (4\gamma^2\varepsilon^2-1)^2\}/16)$. This completes the proof.
	
\end{proof}

We are now ready to present the proof of Lemma~\ref{lem:packing_observable_lem}.
\begin{proof}[Proof of Lemma~\ref{lem:packing_observable_lem}]
	To measure the local $2\varepsilon$-packing cardinality of $\mathcal{F}$, we first consider the local packing cardinality of the operator group $\mathcal{U}_{O}=\{\bm{U}^{\dagger}\bm{O}\bm{U}| U\in \mathbb{S}\mathbb{U}(d)\}$ and then employ the relation between $\varrho(\Tr(\bm{U}_1^{\dagger}\bm{O}\bm{U}_1\bm{\rho}), \Tr(\bm{U}_2^{\dagger}\bm{O}\bm{U}_2\bm{\rho}))$  and $\varrho_{A}(\bm{U}_1^{\dagger}\bm{O}\bm{U}_1, \bm{U}_2^{\dagger}\bm{O}\bm{U}_2)=\|\bm{U}_1^{\dagger}\bm{O}\bm{U}_1-\bm{U}_2^{\dagger}\bm{O}\bm{U}_2\|_1$  to obtain the local $2\varepsilon$-packing cardinality of $\mathcal{F}$ in the $\varrho$-metric. 
	Specifically, denote $\tilde{\varepsilon}=2\sqrt{2d(d+1)}\varepsilon$, we first construct a local $\tilde{\varepsilon}$-packing $\mathcal{P}_{\tilde{\varepsilon}}^{(\gamma\tilde{\varepsilon})}(\mathcal{U}_{O}, \varrho_{A})$ of the operator group $\mathcal{U}_O$ following the manner in Lemma~\ref{lem:packing_states_exist} as $\bm{U}^{\dagger}\bm{O}\bm{U}$ is the density matrix representation of a quantum state for the projective measurement $\bm{O}=\ket{\bm{o}}\bra{\bm{o}}$. In this regard, it holds according to Lemma~\ref{lem:packing_states_exist} that for any $\bm{U}_1^{\dagger}\bm{O}\bm{U}_1, \bm{U}_2^{\dagger}\bm{O}\bm{U}_2\in \mathcal{P}_{\tilde{\varepsilon}}^{(\gamma\tilde{\varepsilon})}(\mathcal{U}_{O}, \varrho_{A})$,
	\begin{equation}\label{eq:varrho_A_bound}
		2\tilde{\varepsilon}<\varrho_{A}(\bm{U}_1^{\dagger}\bm{O}\bm{U}_1, \bm{U}_2^{\dagger}\bm{O}\bm{U}_2)<\gamma\tilde{\varepsilon}.
	\end{equation}
	On the other hand, the $\varrho$-metric between the elements $\mathrm{f}_{\bm{U}_{1}}$ and $\mathrm{f}_{\bm{U}_{2}}$ in $\mathcal{F}$ yields
	\begin{align}
		& \varrho(\mathrm{f}_{\bm{U}_1}, \mathrm{f}_{\bm{U}_2})
		\nonumber \\
		= & \sqrt{\mathbb{E}_{\bm{\rho} \sim Haar}\Tr\left(\bm{O}\left(\bm{U}_1^{\dagger}\bm{\rho} \bm{U}_1-\bm{U}_2^{\dagger}\bm{\rho} \bm{U}_2\right)\right) ^2 } 
		\nonumber \\
		= & \frac{1}{\sqrt{2d(d+1)}}\left\|\bm{U}_1^{\dagger}\bm{O}\bm{U}_1 - \bm{U}_2^{\dagger}\bm{O}\bm{U}_2 \right\|_1
		\nonumber \\
		= & \frac{1}{\sqrt{2d(d+1)}} \varrho_{A}(\bm{U}_1^{\dagger}\bm{O}\bm{U}_1, \bm{U}_2^{\dagger}\bm{O}\bm{U}_2)
		\label{eq:relation_varrho_A}
	\end{align}
	where the second equality follows Eqn.~(\ref{eq:app_risk_simplify_2}).
	In conjunction with Eqn.~(\ref{eq:varrho_A_bound}) and Eqn.~(\ref{eq:relation_varrho_A}), we have
	\begin{equation}
		4\varepsilon<\varrho(\mathrm{f}_{\bm{U}_1}, \mathrm{f}_{\bm{U}_2})= \frac{1}{\sqrt{2d(d+1)}}\varrho_{A}(\bm{U}_1^{\dagger}\bm{O}\bm{U}_1, \bm{U}_2^{\dagger}\bm{O}\bm{U}_2) < 2\gamma\varepsilon,
	\end{equation}
	where the inequality employs the notation $\tilde{\varepsilon}=2\sqrt{2d(d+1)}\varepsilon$. This implies that the set $\mathcal{X}_{2\varepsilon}^{(2\gamma\varepsilon)}=\{\mathrm{f}_{\bm{U}_1}: \bm{U}_1^{\dagger}\bm{O}\bm{U}_1\in\mathcal{P}_{\tilde{\varepsilon}}^{(\gamma\tilde{\varepsilon})}(\mathcal{U}_{O}, \varrho_{A})\}$ refers to the local $2\varepsilon$-packing of $\mathcal{F}$ where the $\varrho$-metric distance between arbitrary elements in $\mathcal{X}_{2\varepsilon}^{(2\gamma\varepsilon)}$ is larger than $4\varepsilon$ and less than $2\gamma\varepsilon$. Hence, the cardinality of the local $2\varepsilon$-packing  $\mathcal{X}_{2\varepsilon}^{(2\gamma\varepsilon)}$ is the same as that of $\mathcal{P}_{\tilde{\varepsilon}}^{(\gamma\tilde{\varepsilon})}(\mathcal{U}_{O}, \varrho_{A})$, i.e.,
	\begin{equation}
		\left|\mathcal{X}_{2\varepsilon}^{(2\gamma\varepsilon)} \right|=\left|\mathcal{P}_{\tilde{\varepsilon}}^{(\gamma\tilde{\varepsilon})}(\mathcal{U}_{O}, \varrho_{A}) \right| \ge \exp\left(\frac{d\min\{(1-2\tilde{\varepsilon})^2, (4\gamma^2\tilde{\varepsilon}^2-1)^2\}}{16}\right),
	\end{equation}
	where the inequality follows Lemma~\ref{lem:packing_states_exist} with denoting $\tilde{\varepsilon}=2\sqrt{2d(d+1)}\varepsilon$. This completes the proof.

\end{proof}

\subsection{Proof of Lemma~\ref{lem:sum_MI_upper_bound}---bounding the mutual information \texorpdfstring{$\mathrm{I}(X;\hat{X})$}{Lg}}\label{subsec:proof_sum_MI_upper_bound}

The upper bound of the mutual information $\mathrm{I}(X;\hat{X})$ is derived from two aspects. First, the mutual information between $X$ and $\hat{X}$ is upper bounded by the mutual information of the output states $\{(\bm{U}_X \otimes \mathbb{I}_{\mathcal{R}})\ket{\bm{\psi}_j}\}_{j=1}^N$ and measurement outputs $\{\bm{o}_j\}_{j=1}^N$, according to the data processing inequality. This is because increasing the number of measurements allows for more information to be extracted from the output states, thereby increasing the mutual information $\mathrm{I}(X;\hat{X})$. Second, as the mutual information $\mathrm{I}(X;\hat{X})$ cannot grow infinitely with the number of measurements, it is also upper bounded by the mutual information between the index $X$ and the training output states $\{(\bm{U}_X \otimes \mathbb{I}_{\mathcal{R}})\ket{\bm{\psi}_j}\}_{j=1}^N$, which can be interpreted as the maximum amount of information that the output states can obtain about the target unitary. These observations are formulated as the following lemmas, namely, Lemma~\ref{lem:MI_Gaussian} and Lemma~\ref{lem:mi_u_state_2}, which are used to derive the results of Lemma~\ref{lem:sum_MI_upper_bound}. 

\begin{lemma}\label{lem:MI_Gaussian}
	For a given observable $\bm{O}$ and the output state $\bm{\rho}_{X}=(\bm{U}_X \otimes \mathbb{I}_{\mathcal{R}})\bm{\rho} (\bm{U}_X^{\dagger} \otimes \mathbb{I}_{\mathcal{R}})$ with $X$ uniformly distributed in the index set of the local $2\varepsilon$-packing $\mathcal{X}_{2\varepsilon}^{(2\gamma\varepsilon)}$, assume that the $m$-measurement output $\bm{o}=\sum_{k=1}^m\bm{o}_k/m$ conditional on $X=x$ follows the normal distribution $\mathbb{N}(u_x, (\sigma_x)^2)$ where $u_x$ and $\sigma^2$ refers to the mean and variance following Assumption~\ref{assu:normal_distribution}. Then the mutual information between $X$ and $\bm{o}$ given the training state $\bm{\rho}$ yields
	\begin{equation}
		\mathrm{I}(X; \bm{o}) \le \frac{1}{|\mathcal{X}_{2\varepsilon}^{(2\gamma\varepsilon)}|^2}\sum_{x, x' \in \mathcal{X}_{2\varepsilon}^{(2\gamma\varepsilon)}} \frac{(u_x-u_{x'})^2}{2\sigma^2}.  
	\end{equation}
\end{lemma}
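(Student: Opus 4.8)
The plan is to rewrite the mutual information as an average Kullback--Leibler (KL) divergence against the mixture distribution, and then dominate this divergence by an average of \emph{pairwise} KL divergences between the individual conditional Gaussians. Since $X$ is uniform on $\mathcal{X}_{2\varepsilon}^{(2\gamma\varepsilon)}$, I would first invoke the standard identity
\[
	I(X;\bm{o}) = \frac{1}{|\mathcal{X}_{2\varepsilon}^{(2\gamma\varepsilon)}|}\sum_{x \in \mathcal{X}_{2\varepsilon}^{(2\gamma\varepsilon)}} D_{\KL}\!\left(\mathbb{P}_{\bm{o}|x} \,\|\, \bar{\mathbb{P}}_{\bm{o}}\right),
\]
where $\bar{\mathbb{P}}_{\bm{o}} = \frac{1}{|\mathcal{X}_{2\varepsilon}^{(2\gamma\varepsilon)}|}\sum_{x'}\mathbb{P}_{\bm{o}|x'}$ is the marginal (mixture) output distribution already introduced in SM~\ref{app_subsec:reduce_learn_hypo}. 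This follows directly from the definition of mutual information as $D_{\KL}(p_{X,\bm{o}}\|p_X p_{\bm{o}})$ together with the uniformity of $X$.

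The crux of the argument is to replace the divergence against the mixture $\bar{\mathbb{P}}_{\bm{o}}$ by the average of divergences against each component $\mathbb{P}_{\bm{o}|x'}$. I would expand $D_{\KL}(\mathbb{P}_{\bm{o}|x}\|\bar{\mathbb{P}}_{\bm{o}}) = \int \mathbb{P}_{\bm{o}|x}\log \mathbb{P}_{\bm{o}|x} - \int \mathbb{P}_{\bm{o}|x}\log \bar{\mathbb{P}}_{\bm{o}}$ and apply Jensen's inequality to the concave logarithm appearing in the second integral, namely $\log \bar{\mathbb{P}}_{\bm{o}} \ge \frac{1}{|\mathcal{X}_{2\varepsilon}^{(2\gamma\varepsilon)}|}\sum_{x'}\log \mathbb{P}_{\bm{o}|x'}$. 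This yields
\[
	D_{\KL}\!\left(\mathbb{P}_{\bm{o}|x}\,\|\,\bar{\mathbb{P}}_{\bm{o}}\right) \le \frac{1}{|\mathcal{X}_{2\varepsilon}^{(2\gamma\varepsilon)}|}\sum_{x'\in\mathcal{X}_{2\varepsilon}^{(2\gamma\varepsilon)}} D_{\KL}\!\left(\mathbb{P}_{\bm{o}|x}\,\|\,\mathbb{P}_{\bm{o}|x'}\right),
\]
which is exactly the step that lets us bypass Holevo's theorem and instead work directly with the KL divergence of the measurement-output distributions.

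Finally I would evaluate each pairwise divergence in closed form. By Assumption~\ref{assu:normal_distribution} the conditional law $\mathbb{P}_{\bm{o}|x}$ is $\mathbb{N}(u_x,\sigma^2)$ with the \emph{common}, state-averaged variance $\sigma^2$ independent of $x$, so the univariate Gaussian KL formula collapses to $D_{\KL}(\mathbb{N}(u_x,\sigma^2)\,\|\,\mathbb{N}(u_{x'},\sigma^2)) = (u_x-u_{x'})^2/(2\sigma^2)$. Substituting this back and combining the two outer averages over $x$ and $x'$ gives the claimed bound
\[
	I(X;\bm{o}) \le \frac{1}{|\mathcal{X}_{2\varepsilon}^{(2\gamma\varepsilon)}|^2}\sum_{x,x'\in\mathcal{X}_{2\varepsilon}^{(2\gamma\varepsilon)}} \frac{(u_x-u_{x'})^2}{2\sigma^2}.
\]

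The main obstacle is the Jensen/convexity step: justifying that the divergence to the mixture is controlled by the average pairwise divergence is the single inequality on which the tighter, variance-dependent bound rests, and it is what distinguishes this analysis from the looser Holevo-based estimates. A secondary point requiring care is the reliance on the common variance $\sigma^2$ for every hypothesis; keeping the per-hypothesis variances $\sigma_x^2$ would introduce extra $\log(\sigma_{x'}/\sigma_x)$ and variance-ratio terms in the Gaussian KL, so the clean quadratic form depends squarely on the approximation licensed by Assumption~\ref{assu:normal_distribution}.
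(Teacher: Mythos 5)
Your proposal is correct and follows essentially the same route as the paper's proof: decomposing $I(X;\bm{o})$ into the average KL divergence of each conditional law against the mixture, applying Jensen's inequality (concavity of $\log$, equivalently convexity of $-\log$) to dominate that by the average pairwise divergence $D_{\KL}(\mathbb{P}_{\bm{o}|x}\|\mathbb{P}_{\bm{o}|x'})$, and then evaluating the equal-variance Gaussian KL in closed form as $(u_x-u_{x'})^2/(2\sigma^2)$. Your remarks on the role of the common variance $\sigma^2$ from Assumption~\ref{assu:normal_distribution} match the paper's treatment as well.
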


\begin{proof}
	[Proof of Lemma~\ref{lem:MI_Gaussian}]
	Denote $\Theta$ as the value space of $\bm{o}$. Let $p_{X}$ and $\mathbb{P}_{\bm{o}|\bm{\rho}_{x}}=\mathbb{P}(\cdot | \bm{\rho}_x):=\mathrm{p}(\cdot | X=x, \bm{\rho})$ be the distribution of $X$ and the distribution of $\bm{o}$ conditional on $X=x$ for fixed input state $\bm{\rho}$, respectively. For simplicity, in the following, we denote $\mathrm{I}(\bm{\rho}_X;\bm{o})=\mathrm{I}(X;\bm{o}|\bm{\rho})$ and $\mathrm{p}(\bm{\rho}_X, \bm{o})=\mathrm{p}(X, \bm{o}|\bm{\rho})$ as the mutual information and joint distribution between $X$ and $\bm{o}$ given $\bm{\rho}$. Then $\bm{o}$ is drawn from the mixture distribution,
	\begin{equation}\label{eq:mixture}
		\bar{\mathbb{P}}_{\bm{o}}: = \sum_{x\in \mathcal{X}_{2\varepsilon}^{(2\gamma\varepsilon)}} \mathrm{p}_{X}(x)\mathbb{P}_{\bm{o}|\bm{\rho}_x} = \frac{1}{|\mathcal{X}_{2\varepsilon}^{(2\gamma\varepsilon)}|}\sum_{x\in \mathcal{X}_{2\varepsilon}^{(2\gamma\varepsilon)}} \mathbb{P}_{\bm{o}|\bm{\rho}_x},
	\end{equation}
	where the second equality follows that $X$ is uniform in $\mathcal{X}_{2\varepsilon}^{(2\gamma\varepsilon)}$. Denoting $\mathrm{p}(\bm{\rho}_x, \bm{o})$ and $\mathrm{p}(\bm{o}|\bm{\rho}_x)$ as the joint probability distribution function of $\bm{o}$ and $X=x$, and the conditional probability distribution function of $\bm{o}$ on $X=x$, respectively. 
	With the definition of the mixture distribution and mutual information, we have 
	\begin{align}
		\mathrm{I}(\bm{\rho}_X; \bm{o}) & = \mathrm{D}_{\KL}(\mathbb{P}_{\bm{\rho}_X, \bm{o}}\|\mathbb{P}_{\bm{\rho}_X}\mathbb{P}_{\bm{o}} )
		\nonumber \\
		& = \sum_{x \in \mathcal{X}_{2\varepsilon}^{(2\gamma\varepsilon)}} \int_{\Theta} \mathrm{p}(\bm{\rho}_x, \bm{o}) \log\frac{\mathrm{p}(\bm{\rho}_x, \bm{o})}{\mathrm{p}_{X}(x) \bar{\mathrm{p}}(\bm{o})} \mathrm{d}\bm{o}
		\nonumber \\
		& = \sum_{x \in \mathcal{X}_{2\varepsilon}^{(2\gamma\varepsilon)}} \mathrm{p}_{X}(x) \int_{\Theta}  \mathrm{p}(\bm{o}|\bm{\rho}_x) \log \frac{\mathrm{p}(\bm{o}|\bm{\rho}_x)}{\bar{\mathrm{p}}(\bm{o})} \mathrm{d} \bm{o}
		\nonumber \\
		& = \frac{1}{|\mathcal{X}_{2\varepsilon}^{(2\gamma\varepsilon)}|}\sum_{x \in \mathcal{X}_{2\varepsilon}^{(2\gamma\varepsilon)}} \mathrm{D}_{\KL}(\mathbb{P}_{o|\bm{\rho}_{x}} \| \bar{\mathbb{P}}_{\bm{o}})   
		\nonumber \\
		& \le \frac{1}{|\mathcal{X}_{2\varepsilon}^{(2\gamma\varepsilon)}|^2}\sum_{x, x' \in \mathcal{X}_{2\varepsilon}^{(2\gamma\varepsilon)}}  \mathrm{D}_{\KL}(\mathbb{P}_{o|\bm{\rho}_{x}} \| \mathbb{P}_{o|\bm{\rho}_{x'}}) 
		\nonumber \\
		& = \frac{1}{|\mathcal{X}_{2\varepsilon}^{(2\gamma\varepsilon)}|^2}\sum_{x, x' \in \mathcal{X}_{2\varepsilon}^{(2\gamma\varepsilon)}}  \mathrm{D}_{\KL}(\mathbb{N}(u_x, \sigma^2) \| \mathbb{N}(u_{x'}, \sigma^2))
		\nonumber \\
		& = \frac{1}{|\mathcal{X}_{2\varepsilon}^{(2\gamma\varepsilon)}|^2}\sum_{x, x' \in \mathcal{X}_{2\varepsilon}^{(2\gamma\varepsilon)}}   \frac{(u_x-u_{x'})^2}{2\sigma^2},
	\end{align}
	where the first equality and the second equality employ the definition of mutual information that $\mathrm{I}(X;Y)=\mathrm{D}_{\KL}(\mathrm{p}_{XY}||\mathrm{p}_X\mathrm{p}_Y)$ with $\mathrm{D}_{\KL}$ referring to the KL-divergence and $\mathrm{p}_{XY},\mathrm{p}_X,\mathrm{p}_Y$ referring to the joint probability distribution of $X$ and $Y$, the probability distribution of $X$ and $Y$ respectively, the third equality exploits the property of conditional probability that $\mathrm{p}(\bm{\rho}_x, \bm{o})=\mathrm{p}( \bm{o}|\bm{\rho}_x)\mathrm{p}_{X}(x)$,   the fourth equality follows that $X$ is uniform in $ \mathcal{X}_{2\varepsilon}^{(2\gamma\varepsilon)}$ and hence $\mathrm{p}_{X}(x)=1/|\mathcal{X}_{2\varepsilon}^{(2\gamma\varepsilon)}|$ for all $x \in \mathcal{X}_{2\varepsilon}^{(2\gamma\varepsilon)}$, the first inequality employs the Jensen's inequality with the convexity of the function $-\log$, the fifth equality follows that $o_{j}=\sum_{k=1}^m o_{k}/m$ following the binomial distribution $\mathbb{B}(m,u_x)$ is approximated by the normal distribution $\mathbb{N}(u_x, \sigma^2)$ with the variance is assumed to be identical over different choice of $x$, the final equality is obtained by algebraic computation of the KL-divergence between two Gaussian distributions with the identical variance.
\end{proof}

We next derive an upper bound of the mutual information $\mathrm{I}(X;\hat{X})$ independent with the number of measurements $m$. Particularly, since a large number of measurements aims to precisely estimate the expectation value of the projective measurement $\bm{O}$ on the training output states $(\bm{U}_X\otimes \mathbb{I}_d)\ket{\bm{\psi}_j}$, i.e., $u_j^{(X)}=\Tr((\bm{U}_X^{\dagger}\bm{O}\bm{U}_X \otimes \mathbb{I}_d)\ket{\bm{\psi}_j}\bra{\bm{\psi}_j})$, we can directly focus on the mutual information of $X$ and the training output states $\{(\bm{U}_X\otimes \mathbb{I}_d)\ket{\bm{\psi}_j}\}_{j=1}^N$ by regarding the states $(\bm{U}_X\otimes \mathbb{I}_d)\ket{\bm{\psi}_j}$ as random variables. We encapsulate the analyzed results related to this case into the following lemma.

\begin{lemma}
	
	\label{lem:mi_u_state_2}
	For any Haar random unitary $\bm{U}_X$ uniformly sampled from the local $2\varepsilon$-packing $\{\bm{U}_x\}_{x\in \mathcal{X}_{2\varepsilon}^{(2\gamma\varepsilon)}}$ and the entangled
	states set $\{\ket{\bm{\psi}_j}=\sum_{k=1}^r \sqrt{c_{jk}}\ket{\bm{\xi}_{jk}} \ket{\bm{\zeta}_{jk}}\}_{j=1}^N$, let $\bm{o}_j=\sum_{k=1}^{m}\bm{o}_{jk}/m$ be the $m$ measured results  of the corresponding output states $\ket{\bm{\psi}_j^{(X)}}=(\bm{U}_X\otimes \mathbb{I}_d)\ket{\bm{\psi}_j}$ under the projective measurement $\bm{O}$.  Then the mutual information between $X$ and the measurement outputs $\bm{o}_1, \cdots, \bm{o}_N$ for the infinite number of measurements $m\rightarrow \infty$ yields 
	\begin{equation}
		\lim\limits_{m \to \infty}\mathrm{I}(X;\bm{o}_1, \cdots, \bm{o}_N) \le 
		\mathrm{I}(X;\ket{\bm{\psi}_1^{(X)}}, \cdots, \ket{\bm{\psi}_N^{(X)}}) \le Nr\log\left(d\right).
	\end{equation} 
\end{lemma}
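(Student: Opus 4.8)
The plan is to bound the single-copy quantum information carried by the $N$ output states and to argue that repeating the fixed projective measurement $O$ cannot extract more than this. Concretely, I would read $I(X;\ket{\psi_1^{(X)}},\cdots,\ket{\psi_N^{(X)}})$ as the Holevo information of the classical--quantum ensemble $\{(1/|\mathcal{X}_{2\varepsilon}^{(2\gamma\varepsilon)}|,\ \rho_B^x)\}$ with $\rho_B^x=\bigotimes_{j=1}^N \ket{\psi_j^{(x)}}\bra{\psi_j^{(x)}}$, and establish the two inequalities separately: the first by a data-processing argument, the second by subadditivity together with a rank (support-dimension) bound.

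For the first inequality, I would observe that the empirical outcomes $\bm{o}_j$ are produced solely by measuring the output states $\ket{\psi_j^{(X)}}$ with $O$, so that $X\to(\ket{\psi_1^{(X)}},\cdots,\ket{\psi_N^{(X)}})\to(\bm{o}_1,\cdots,\bm{o}_N)$ forms a Markov chain in which the measurement acts as a (quantum-to-classical) processing map. Property~\ref{property:data_pro_ineqn} then gives $I(X;\bm{o}_1,\cdots,\bm{o}_N)\le I(X;\ket{\psi_1^{(X)}},\cdots,\ket{\psi_N^{(X)}})$ for every $m$; since the right-hand side does not depend on $m$, the bound survives the limit $m\to\infty$. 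This is the step I expect to be the main obstacle: one must argue carefully that repeating a single fixed two-outcome measurement, even infinitely often, cannot surpass the single-copy information content of the states---that is, that the infinite-shot idealisation is correctly captured by the single-copy Holevo quantity rather than by full tomographic knowledge of the ensemble, which would instead scale with $\log|\mathcal{X}_{2\varepsilon}^{(2\gamma\varepsilon)}|$.

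For the second inequality, I would invoke Holevo's theorem: because each conditional state $\rho_B^x$ is pure its von Neumann entropy vanishes, so the Holevo quantity reduces to $I(X;\ket{\psi_1^{(X)}},\cdots,\ket{\psi_N^{(X)}})\le S(\rho_B)$ with $\rho_B=(1/|\mathcal{X}_{2\varepsilon}^{(2\gamma\varepsilon)}|)\sum_x \rho_B^x$. Subadditivity of the von Neumann entropy then yields $S(\rho_B)\le\sum_{j=1}^N S(\rho_{B_j})$, where $\rho_{B_j}=(1/|\mathcal{X}_{2\varepsilon}^{(2\gamma\varepsilon)}|)\sum_x \ket{\psi_j^{(x)}}\bra{\psi_j^{(x)}}$. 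The crucial structural observation is that, since $\ket{\psi_j^{(x)}}=\sum_{k=1}^r\sqrt{c_{jk}}\,(U_x\ket{\xi_{jk}})\otimes\ket{\zeta_{jk}}$ and the reference Schmidt vectors $\ket{\zeta_{jk}}$ do not depend on $x$, every $\ket{\psi_j^{(x)}}$ lies in the fixed subspace $\mathcal{H}_{\mathcal{X}}\otimes\Span\{\ket{\zeta_{j1}},\cdots,\ket{\zeta_{jr}}\}$ of dimension $dr$. Hence $\mathrm{rank}(\rho_{B_j})\le dr$ and $S(\rho_{B_j})\le\log(dr)=\log d+\log r$. Finally, using $r\le d$ one has $\log r\le(r-1)\log d$, so $\log(dr)\le r\log d$; summing over $j$ gives the claimed bound $Nr\log d$. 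The only genuinely nontrivial ingredient here is the shared-reference-subspace argument that pins the support dimension at $dr$; everything else is a routine combination of Holevo's bound, subadditivity, and the elementary entropy-versus-rank inequality.
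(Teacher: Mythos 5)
Your first inequality is where the proposal has a genuine gap, and you half-sensed it yourself. Reading $I(X;\ket{\psi_1^{(X)}},\cdots,\ket{\psi_N^{(X)}})$ as the single-copy Holevo quantity and then invoking data processing ``for every $m$'' does not work: for $m>1$ the record $(\bm{o}_{j1},\cdots,\bm{o}_{jm})$ is produced by measuring $m$ copies of $\ket{\psi_j^{(X)}}$, so the data-processing inequality only yields $I(X;\bm{o}_1,\cdots,\bm{o}_N)\le \chi\bigl(\{(\rho^x)^{\otimes m}\}\bigr)$, and this $m$-copy Holevo quantity is \emph{not} independent of $m$ --- for pure, pairwise non-orthogonal conditional states it grows toward $H(X)$ as $m\to\infty$. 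Hence the bound does not ``survive the limit'' by monotonicity alone, and the single-copy right-hand side is not obtained. The paper closes this step by a different route: it first exchanges the limit with the mutual information via dominated convergence, identifying $\lim_{m\to\infty}\bm{o}_j$ with the exact expectation value $u_j^{(X)}=\Tr\bigl((O\otimes\mathbb{I}_d)\ket{\psi_j^{(X)}}\bra{\psi_j^{(X)}}\bigr)$, and only then applies data processing to the chain $X\to(\ket{\psi_1^{(X)}},\cdots,\ket{\psi_N^{(X)}})\to(u_1^{(X)},\cdots,u_N^{(X)})$. The essential point, emphasized in the paper's remark after the lemma, is that because $O$ is a fixed two-outcome measurement, the entire infinite-shot record collapses to the one-dimensional sufficient statistic $u_j^{(X)}$, which is a function of a single copy of the output state; this is precisely the ``careful argument'' you flagged as the main obstacle but did not supply. (Even in the paper's formulation this step is delicate, since $u^{(X)}$ is a deterministic function of the state description rather than the outcome of a single-copy channel, but the paper at least makes the reduction to the sufficient statistic explicit, whereas your per-$m$ DPI cannot reach the single-copy quantity at all.)

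Your second inequality, by contrast, is correct and takes a genuinely different route from the paper's. The paper applies a second data-processing step, replacing the bipartite output states by the collection of $\mathcal{X}$-register Schmidt components $\{\ket{\xi_{jk}^{(X)}}\}_{j,k}$ (legitimate because the coefficients $c_{jk}$ and the reference vectors $\ket{\zeta_{jk}}$ do not depend on $X$), and then bounds the resulting Holevo quantity by the maximal entropy of an $(N\cdot r\cdot n)$-qubit register, giving $Nrn=Nr\log d$. You instead keep the bipartite states, apply Holevo's bound directly, use subadditivity over $j$, and bound each $S(\rho_{B_j})$ by the logarithm of the dimension of the support $\mathcal{H}_{\mathcal{X}}\otimes\Span\{\ket{\zeta_{j1}},\cdots,\ket{\zeta_{jr}}\}$. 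Both arguments exploit the same structural fact (the reference parts are $X$-independent), but yours yields the strictly tighter bound $N\log(dr)\le Nr\log d$ --- which, if taken seriously, would weaken the $r$-dependence of the $m$-independent branch in Lemma~\ref{lem:sum_MI_upper_bound} from $r\log d$ to $\log d+\log r$; that tension is worth noting, though it does not affect the validity of the stated lemma.
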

\begin{proof}[Proof of Lemma \ref{lem:mi_u_state_2}]
	We first note that when the number of measurements $m$ is infinite, the measured result $\bm{o}_j$ is exactly the expectation value of the output states $\ket{\bm{\psi}_j^{(X)}}$ under the projective measurement $\bm{O}$, i.e., $\lim_{m \to \infty}\bm{o}_j = u_j^{(X)}:=\Tr((\bm{O}\otimes \mathbb{I}_d)\ket{\bm{\psi}_j^{(X)}}\bra{\bm{\psi}_j^{(X)}})$. In this regard, we have
	\begin{align}
		\lim\limits_{m \to \infty}\mathrm{I}(X;\bm{o}_1, \cdots, \bm{o}_N) = & \mathrm{I}\left(X;\lim\limits_{m \to \infty}\bm{o}_1,  \cdots, \lim\limits_{m \to \infty}\bm{o}_N \right)
		\nonumber \\
		= & \mathrm{I}\left(X;u_1^{(X)},  \cdots,u_N^{(X)} \right)
		\nonumber \\
		\le & \mathrm{I}\left(X;\ket{\bm{\psi}_1^{(X)}},  \cdots, \ket{\bm{\psi}_N^{(X)}} \right)
		\nonumber \\
		\le & \mathrm{I}(X;\ket{\bm{\xi}_{11}^{(X)}},  \cdots, \ket{\bm{\xi}_{1r}^{(X)}}, \cdots, \ket{\bm{\xi}_{N1}^{(X)}},  \cdots, \ket{\bm{\xi}_{Nr}^{(X)}} )
		\nonumber \\
		= & \mathrm{I}\left(X;\otimes_{j=1}^N \otimes_{k=1}^r \ket{\bm{\xi}_{jk}^{(X)}}\right)
		\nonumber \\
		\le & S\left(\frac{1}{\mathcal{X}_{2\varepsilon}^{(2\gamma\varepsilon)}}\sum_{x\in \mathcal{X}_{2\varepsilon}^{(2\gamma\varepsilon)}}\otimes_{j=1}^N \otimes_{k=1}^r \ket{\bm{\xi}_{jk}^{(x)}}\bra{\bm{\xi}_{jk}^{(x)}}\right)
		\nonumber \\
		\le & Nrn\log(2) 
	\end{align}
	where the first equality employs the dominated convergence theorem applied to the integrated function  in the mutual information, which converges point-wise  and is bounded by $\log(|\mathcal{X}_{2\varepsilon}^{(2\gamma\varepsilon)}|)$, such that the interchange of the order of limitation and integration in the calculation is allowed, the first inequality employs the data processing inequality given in the Property~\ref{property:data_pro_ineqn} for the Markov chain $X\to \ket{\bm{\psi}_1^{(X)}},  \cdots, \ket{\bm{\psi}_N^{(X)}} \to u_1^{(X)},  \cdots,u_N^{(X)}$, the second inequality follows the data processing inequality with the observation that the randomness of the states $\ket{\bm{\psi}_j^{(X)}}$ about $X$ comes from the states $\{\ket{\bm{\xi}_{jk}^{(X)}}\}_{k=1}^r$ and hence there exists an $X$-independent functional $\mathrm{g}:\{\ket{\bm{\xi}_{jk}^{(X)}}\}_{j,k=1}^{N,r} \to \{\ket{\bm{\psi}_j^{(X)}}\}_{1}^N$ leading to a Markov chain $X\to \{\ket{\bm{\xi}_{jk}^{(X)}}\}_{j,k=1}^{N,r} \to \{\ket{\bm{\psi}_j^{(X)}}\}_{1}^N$,   the second inequality employs the Holevo's theorem \cite{araki1970entropy,bengtsson2017geometry,holevo1973bounds,horodecki2009quantum}, and the last inequality uses the fact that the entropy of mixed states living in an $(N\cdot r\cdot n)$-qubit system is at most $Nrn\log 2$. This completes the proof.
\end{proof}

We note that the mutual information $\mathrm{I}(X;\ket{\bm{\psi}_j^{(X)}})$ is defined to quantify the accessible information of the target index from the limited number of training states. Lemma~\ref{lem:mi_u_state_2} implies that the mutual information between the target index $X$ and the measurement outputs of projective measurement $\bm{O}$ on the $m$ copies of the state $\ket{\bm{\psi}_j^{(X)}}$ is upper bounded by the mutual information between $X$ and \textit{the single copy} of $\ket{\bm{\psi}_j^{(X)}}$. This is different from previous literature studying quantum state (or channel) tomography where the mutual information between the target index $X$ and the measurement outputs of adaptive or non-adaptive POVM on the $m$ copies of the state $\ket{\bm{\psi}_j^{(X)}}$ is upper bounded by that between $X$ and \textit{the $m$ copies} of $\ket{\bm{\psi}_j^{(X)}}$. Particularly, under the projective measurement $\bm{O}$, the single measurement output is either $0$ or $1$ such that the infinite measurement outputs can be characterized by a sufficient statistic $u_j^{(X)}=\Tr((\bm{O}\otimes \mathbb{I}_d)\ket{\bm{\psi}_j^{(X)}}\bra{\bm{\psi}_j^{(X)}})$, which contains less information of $X$ than that in a single copy of state $\ket{\bm{\psi}_j^{(X)}}$.  Additionally, this bound is tight in the sense that when $Nr=d$, the mutual information $\mathrm{I}(X;\ket{\bm{\psi}_1}, \cdots, \ket{\bm{\psi}_N})$ reaches the maximal order of $\mathcal{O}(d)$, which is exactly equal to the order of another upper bound of the mutual information $\mathrm{I}(X;\ket{\bm{\psi}_1^{(X)}}, \cdots, \ket{\bm{\psi}_N^{(X)}})\le \mathrm{H}(X) =\mathcal{O}(d)$ according to Lemma~\ref{lem:packing_observable_lem}. 

\smallskip	
We now use the above results to prove  Lemma~\ref{lem:sum_MI_upper_bound}.
\begin{proof}[Proof of Lemma~\ref{lem:sum_MI_upper_bound}]
	For any training input state $\bm{\rho}_j$ independently sampled from the Haar distribution, we denote $\bm{\rho}_{j}^{(X)}=(\bm{U}_X\otimes \mathbb{I}_{\mathcal{R}})\bm{\rho}_j (\bm{U}_X^{\dagger}\otimes \mathbb{I}_{\mathcal{R}})$ and $\bm{\rho}_{i:j}^{(X)}=\{\bm{\rho}_{i}^{(X)}, \cdots, \bm{\rho}_{j}^{(X)}\}$ for any $j >i$. The Markov chain describing the learning process can be depicted as 
	\begin{equation}
		X \to \bm{U}_X \to \bm{\rho}_{1:N}^{(X)} \to \{\bm{o_j}\}_{j=1}^N \to \hat{X}
	\end{equation}
	Hence, with the data processing inequality, we have
	\begin{align}\label{eq:sum_MI}
		\mathrm{I}(X;\hat{X}) & \le \mathrm{I}(X; \bm{o}_{1:N} )
		\nonumber \\
		& \le \sum_{j=1}^N  \mathrm{I}(X; \bm{o}_{j}) 
		\nonumber \\
		& = \frac{1}{|\mathcal{X}_{2\varepsilon}^{(2\gamma\varepsilon)}|^2} \sum_{j=1}^N \sum_{x, x' \in \mathcal{X}_{2\varepsilon}^{(2\gamma\varepsilon)}}   \frac{(u_x-u_{x'})^2}{2\sigma^2},
	\end{align}
	where $\bm{o}_{1:j}=(\bm{o}_1, \cdots, \bm{o}_j)$ and $\bm{o}_j=\sum_{k=1}^m\bm{o}_{jk}/m$ with $\bm{o}_{jk}$ being the $k$-th 
	measurement output on the $j$-th input state, the first inequality employs the data processing inequality,  and the first equality employs Lemma~\ref{lem:MI_Gaussian} with $u_x^{(j)}=\mathbb{E}[\bm{o}_j]=\Tr((\bm{U}_x^{\dagger}O \bm{U}_x\otimes \mathbb{I}_{\mathcal{R}})\bm{\rho}_j)$ and $\sigma^2=\mathbb{E}_{\bm{\rho}_j \sim \haar}u_x^{(j)}(1-u_x^{(j)})/m$ defined in Assumption~\ref{assu:normal_distribution}. 
	
	The problem of upper bounding the mutual information between $\mathrm{I}(X;\hat{X})$ is now reduced to lower bound the variance $\sigma^2$ and to upper bound the square of the difference between $u_x$ and $u_{x'}$ for varying index $x,x'$, i.e., $(u_x-u_{x'})^2$. Particularly,   the direct calculation of Haar integration $\sigma^2=\mathbb{E}_{\bm{\rho}_j \sim \haar}u_x^{(j)}(1-u_x^{(j)})/m$ over the Haar random state $\bm{\rho}_j = \ket{\bm{\psi}_j}\bra{\bm{\psi}_j}$ yields 
	\begin{align}
		\sigma^2 & = \frac{1}{m} \cdot \mathbb{E}_{\ket{\bm{\psi}_j} \sim  \haar } 	 \left[\Tr_{\mathcal{X} \mathcal{R}}((\bm{U}^{\dagger}\bm{O}\bm{U})\otimes 	\mathbb{I}_{\mathcal{R}} \ket{\bm{\psi}_j}\bra{\bm{\psi}_j}) - \Tr_{\mathcal{X} \mathcal{R}}((\bm{U}^{\dagger}\bm{O}\bm{U})\otimes \mathbb{I}_{\mathcal{R}} \ket{\bm{\psi}_j}\bra{\bm{\psi}_j})^2 \right]
		\nonumber \\
		& = \frac{1}{m} \cdot \mathbb{E}_{\ket{\bm{\psi}_j} \sim \haar}   \left[	\Tr_{\mathcal{X}}(\bm{U}^{\dagger}\bm{O}\bm{U}\Tr_{\mathcal{R}}( 	\ket{\bm{\psi}_j}\bra{\bm{\psi}_j})) - \Tr_{\mathcal{X}}(\bm{U}^{\dagger}\bm{O}\bm{U}\Tr_{\mathcal{R}} (\ket{\bm{\psi}_j}\bra{\bm{\psi}_j}))^2\right]
		\nonumber \\
		& = \frac{1}{m} \cdot \mathbb{E}_{\ket{\bm{\xi}_j} \sim \haar} \mathbb{E}_{\ket{c_j} \sim \haar}	\left[\sum_{k=1}^r 	c_{jk}\Tr(\bm{U}^{\dagger}\bm{O}\bm{U}\ket{\bm{\xi}_{jk}}\bra{\bm{\xi}_{jk}}) - \left(\sum_{k=1}^r c_{jk}\Tr(\bm{U}^{\dagger}\bm{O}\bm{U}\ket{\bm{\xi}_{jk}}\bra{\bm{\xi}_{jk}})\right)^2\right]
		\nonumber \\		
		& = \frac{1}{m} \left( \frac{1}{d}-\frac{1}{d(d+1)} -\frac{1}{d(d+1)r} \right)
		\nonumber \\
		& = \frac{dr-1}{mrd(d+1)} 
		\nonumber \\
		& \ge \frac{1}{md}, \label{eq:variance_lower_bound}
	\end{align}
	where the first equality employs $\sigma_x^2=u_x(1-u_x)/m$, the third equality uses the definition of the entangled state in Eqn.~(\ref{eq:entangled_state}) and the coefficient of the entangled Haar random state $\ket{\bm{c}_j}=(\sqrt{c_{j1}}, \cdots, \sqrt{c_{jr}})$ forms a Haar random state in $\mathbb{S}\mathbb{U}(r)$, and the fourth equality follows Haar integration calculation by exploiting Lemma~\ref{lem:Tr(psipsi)} and Lemma~\ref{lem:Tr(psipsi)Tr(psipsi)}. 		On the other hand, as we consider the average case of the mutual information $\mathrm{I}(X;\hat{X})$ over random training data $\mathcal{S}$ following  Construction Rule \ref{construct:1}, the corresponding Haar integration of $(u_x-u_{x'})^2$ in Eqn.~(\ref{eq:sum_MI}) yields 
	\begin{align}
		& \mathbb{E}_{\mathcal{S}} (u_x-u_{x'})^2
		\nonumber \\
		= & 
		\mathbb{E}_{\mathcal{S}}\Tr( (\bm{U}_x^{\dagger}\bm{O}\bm{U}_x-\bm{U}_{x'}^{\dagger}\bm{O}\bm{U}_{x'}) \otimes \mathbb{I} \ket{\bm{\psi}_i}\bra{\bm{\psi}_i})^2 
		\nonumber \\
		= & 
		\mathbb{E}_{\mathcal{S}}\Tr(\sum_{k=1}^r c_{i,k} (\bm{U}_x^{\dagger}\bm{O}\bm{U}_x-\bm{U}_{x'}^{\dagger}\bm{O}\bm{U}_{x'}) \ket{\bm{\xi}_{i,k}}\bra{\bm{\xi}_{i,k}} )^2
		\nonumber \\
		= & 
		\mathbb{E}_{\ket{\bm{c}_{i}}\sim \haar} \Bigg( \sum_{k=1}^r c_{i,k}^2  \mathbb{E}_{\ket{\bm{\xi}_{i,k}}\sim \haar} \Tr( (\bm{U}_x^{\dagger}\bm{O}\bm{U}_x-\bm{U}_{x'}^{\dagger}\bm{O}\bm{U}_{x'})\ket{\bm{\xi}_{i,k}}\bra{\bm{\xi}_{i,k}})^2 
		\nonumber \\
		& +
		\sum_{k=1}^r \sum_{j=k+1}^r  2c_{i,k}c_{i,j}  \mathbb{E}_{\ket{\bm{\xi}_{i,k}}\sim \haar} \mathbb{E}_{\ket{\bm{\xi}_{i,j}}\sim \haar} \Tr( (\bm{U}_x^{\dagger}\bm{O}\bm{U}_x-\bm{U}_{x'}^{\dagger}\bm{O}\bm{U}_{x'})\ket{\bm{\xi}_{i,k}}\bra{\bm{\xi}_{i,k}})\Tr( (\bm{U}_x^{\dagger}\bm{O}\bm{U}_x-\bm{U}_{x'}^{\dagger}\bm{O}\bm{U}_{x'})\ket{\bm{\xi}_{i,j}}\bra{\bm{\xi}_{i,j}})  \Bigg)
		\nonumber \\
		= & 
		\mathbb{E}_{\ket{\bm{c}_{i}}\sim \haar} \left(\sum_{k=1}^r c_{i,k}^2 \varrho(\mathrm{f}_{\bm{U}_x}, \mathrm{f}_{\bm{U}_{x'}})^2  -  \sum_{k=1}^r \sum_{j=k+1}^r  \frac{2c_{i,k}c_{i,j}}{d(d^2-1)} \right) 
		\nonumber \\
		= & \mathbb{E}_{\ket{\bm{c}_i}\sim \haar} \left( \sum_{k=1}^r  \Tr(\ket{\bm{c}_i}\bra{\bm{c}_i} \ket{e_k}\bra{e_k})^2 \varrho(\mathrm{f}_{\bm{U}_x}, \mathrm{f}_{\bm{U}_{x'}})^2 
		-  \sum_{k=1}^r \sum_{j=k+1}^r  \frac{2 \Tr(\ket{\bm{c}_i}\bra{\bm{c}_i} \ket{e_k}\bra{e_k}) \Tr(\ket{\bm{c}_i}\bra{\bm{c}_i} \ket{e_j}\bra{e_j}) }{d(d^2-1)}\right)
		\nonumber \\
		= & \sum_{k=1}^r \frac{ 2\varrho(\mathrm{f}_{\bm{U}_x}, \mathrm{f}_{\bm{U}_{x'}})^2}{r(r+1)} - \sum_{k=1}^r \sum_{j=k+1}^r \frac{2}{r(r+1)d(d^2-1)} 
		\nonumber \\
		\le & \frac{8\gamma^2\varepsilon^2}{r}, \label{eq:single_MI}
	\end{align}
	where the second equality employs    the representation of the input states  $\ket{\bm{\psi}_j}=\sum_{k=1}^r \sqrt{c_{j, k}}\ket{\bm{\xi}_{j, k}}_{\mathcal{X}}\ket{\bm{\zeta}_{j, k}}_{\mathcal{R}}$ defined in Eqn.~(\ref{eq:entangled_state}) and the property of partial trace $\operatorname{Tr}_{\mathcal{R}}\left[\ket{\bm{\psi}_j}\bra{\bm{\psi}_j}\right]=\sum_{k=1}^r c_{j, k}\ket{\bm{\xi}_{j, k}}\bra{\bm{\xi}_{j, k}}_{\mathcal{X}}$, the third equality follows direct expansion of the square function and formulates the Haar integration of the random state $\ket{\bm{\psi}_i} \in \mathcal{S}$ into the Haar integration with respect to $\ket{\bm{c}_i}$ and $\ket{\bm{\xi}_{i,k}}$ which follow Haar distribution in $\mathbb{S}\mathbb{U}(r)$ and $\mathbb{S}\mathbb{U}(d)$ according to  Construction Rule \ref{construct:1}, the fourth equality employs Lemma~\ref{lem:rewrite_risk} and Lemma~\ref{lem:Tr(WA)Tr(WB)_Orth} to derive the first term and the second term respectively, the fifth equality follows that $c_{i,k} = |\braket{e_k|\bm{c}_i}|^2 = \Tr(\bra{e_k}\ket{\bm{c}_i}\bra{\bm{c}_i}\ket{e_k})$ with $\ket{\bm{c}_i}=(\sqrt{c_{i,1}}, \cdots,\sqrt{c_{i,r}} )^{\top}$ and $\ket{e_k}$ being the computational basis defined in Section~\ref{app_subsec:notation}, the sixth equality exploits Lemma~\ref{lem:Tr(psipsi)Tr(psipsi)} to calculate the Haar integration with respect to $\ket{\bm{c}_i}\in \mathbb{S}\mathbb{U}(r)$, and the last equality employs the property of the   local $2\varepsilon$-packing in Lemma~\ref{lem:packing_observable_lem} in which the distance between arbitrary two elements is less than $2\gamma\varepsilon$, i.e., $\varrho(\mathrm{f}_{\bm{U}_x}, \mathrm{f}_{\bm{U}_x'})\le 2\gamma\varepsilon$ for any $x, x' \in \mathcal{X}_{2\varepsilon}^{(2\gamma\varepsilon)}$, and additionally the positivity of the second term.

	In conjunction with Eqn.~(\ref{eq:sum_MI}), Eqn.~(\ref{eq:variance_lower_bound}) and Eqn.~(\ref{eq:single_MI}), we have
	\begin{equation}
		\mathbb{E}_{\mathcal{S}} \mathrm{I}(X;\hat{X}) \le \frac{1}{|\mathcal{X}_{2\varepsilon}^{(2\gamma\varepsilon)}|^2} \sum_{j=1}^N \sum_{x, x' \in \mathcal{X}_{2\varepsilon}^{(2\gamma\varepsilon)}}   \frac{\mathbb{E}_{\mathcal{S}} (u_x-u_{x'})^2}{2\sigma^2} \le \frac{4 Nmd\gamma^2\varepsilon^2}{r}.
	\end{equation}

	Besides, the mutual information between the target index $X$ and the measurement outcomes $\bm{o}_j$ given the state $\bm{\rho}_j^{(X)}$ is bounded by the information of $X$ carried by the states $\bm{\rho}_j^{(X)}$ itself. Particularly, employing the analyzed results in Lemma~\ref{lem:mi_u_state_2} yields
	\begin{align}\label{eq:MI_state_bound}
		\lim\limits_{m \to \infty}\mathrm{I}(X;\bm{o}_1, \cdots, \bm{o}_N) \le 
		\mathrm{I}(X;\ket{\bm{\psi}_1^{(X)}}, \cdots, \ket{\bm{\psi}_N^{(X)}}) \le Nr\log\left(d\right).
	\end{align}

	In conjunction with Eqn.~(\ref{eq:single_MI}), Eqn.~(\ref{eq:sum_MI}) and Eqn.~(\ref{eq:MI_state_bound}), we have
	\begin{align}\label{eq:sum_MI_bound}
		\mathbb{E}_{\mathcal{S}}	\mathrm{I}(X;\hat{X}) 
		\le N\cdot \min \left\{\frac{4 md\gamma^2\varepsilon^2}{r}, r\log\left(d \right) \right\}. 
	\end{align}
	This completes the proof.
\end{proof} 

\subsection{The scaling of training error and prediction error}
In this section, we explain the scaling $1/2^n$ in the training error $\varepsilon$ and the factor of the achieved lower bound $\tilde{\varepsilon}^2/4^n$. we would like to analytically show that $\varepsilon=\mathcal{O}(1/2^n)$ is a ground truth when considering the projective measurement and the framework of quantum NFL.

Particularly, we recall that quantum NFL considers the average performance of the quantum learning models over Haar unitaries. This means that considering the Haar random unitary $\bm{U}$, the expectation value of an arbitrary observable $\bm{O}$ on the evolved state $\bm{U}\ket{\bm{\psi}}$ yields
\begin{equation}\label{eq:training_err_scaling}
	\mathbb{E}_{\bm{U} \sim \haar}\Tr(\bm{O}\bm{U}\bm{\rho} \bm{U}^{\dagger})=\frac{\Tr(\bm{O})}{2^n},
\end{equation}
where the equality follows the direct calculation of Haar integration, as given by Lemma~\ref{lem:Tr(WW)}.
This indicates that the scaling of measurement output depends on the employed observable $\bm{O}$ in the target function defined in Eqn.~(\ref{eq:learning_model}). A large trace $\Tr(\bm{O})$ leads to a large training error and prediction error for given the same training data and target unitaries. Moreover, for the observable set as the projective measurement $\bm{O}=\ket{\bm{o}}\bra{\bm{o}}$, the measurement outputs scale as $\mathcal{O}(1/2^n)$ as the trace of projective measurement being $\Tr(\ket{\bm{o}}\bra{\bm{o}})=1$. In this context, the training error scales naturally as $\mathcal{O}(1/2^n)$, which is easy to reach during training. Moreover, a similar scaling occurs in the prediction error, as a small magnitude of the measurement output will lead to a small scaling of prediction error.

\subsection{The tightness of the derived lower bound in Theorem~\ref{thm:formal_finite_measurement} from the perspective of quantum state tomography}\label{app_subsec:tightness}
We now discuss the tightness of the derived lower bound in Theorem~\ref{thm:formal_finite_measurement}. According to the results of Theorem \ref{thm:formal_finite_measurement}, the query complexity of the target unitary $Nm$ grows with the Schmidt rank $r$ to achieve zero risk. To connect with quantum state tomography, in the following, we focus on the case of the Schmidt rank $r=1$, suggesting that the minimal query complexity for achieving zero risk is $Nm=\Omega(d^2/\tilde{\varepsilon}^2)$. 

Particularly, as the observable $\bm{O}=\ket{\bm{o}}\bra{\bm{o}}$ considered in this work is the projective measurement, the task of learning the unitary under such observable is equivalent to conducting tomography on the pure quantum states $\bm{U}^{\dagger}\ket{\bm{o}}\bra{\bm{o}}\bm{U}$, since both tasks aim to bound the trace distance between the target operator $\bm{U}^{\dagger}\ket{\bm{o}}\bra{\bm{o}}\bm{U}$ and the learned operator $\bm{V}_{\mathcal{S}}^{\dagger}\ket{\bm{o}}\bra{\bm{o}}\bm{V}_{\mathcal{S}}$, i.e., $\|\bm{U}^{\dagger}\ket{\bm{o}}\bra{\bm{o}}\bm{U}-\bm{V}_{\mathcal{S}}^{\dagger}\ket{\bm{o}}\bra{\bm{o}}\bm{V}_{\mathcal{S}}\|_1$. In this regard, measuring the output states $\bm{U}\ket{\bm{\psi}_j}$ with the projective measurement $\bm{O}=\ket{\bm{o}}\bra{\bm{o}}$ could be regarded as measuring the operator $\bm{U}^{\dagger}\ket{\bm{o}}\bra{\bm{o}}\bm{U}$ with the projective measurement  $\ket{\bm{\psi}_j}\bra{\bm{\psi}_j}$. Notably, while the training states can vary over $\mathbb{S}\mathbb{U}(d)$, the measurement outputs always keep $0$ or $1$. That implies that the set of training states $\{\ket{\bm{\psi}_j}\}_{j=1}^N$ acts as the measurement with only two outcomes. There is much literature studying the query complexity for quantum state tomography (note that the term of query complexity discussed in our work amounts to the sample complexity in the context of quantum state tomography). Ref.~\cite{lowe2022lower} has derived a tight lower bound of the query complex under the non-adaptive measurement with a constant number of outcomes, which is given as $\Omega(d^2/\delta^2)$ with $\delta$ obeying $\|\bm{U}^{\dagger}\ket{\bm{o}}\bra{\bm{o}}\bm{U}-\bm{V}_{\mathcal{S}}^{\dagger}\ket{\bm{o}}\bra{\bm{o}}\bm{V}_{\mathcal{S}} \|_1 \le \delta$. Notably, employing the property of $2\varepsilon$-packing leads to $\tilde{\varepsilon} = 4\varepsilon\sqrt{2d(d+1)} \le 2\|\bm{U}^{\dagger}\ket{\bm{o}}\bra{\bm{o}}\bm{U}-\bm{V}_{\mathcal{S}}^{\dagger}\ket{\bm{o}}\bra{\bm{o}}\bm{V}_{\mathcal{S}} \|_1$, which is equivalent to the effect of $\delta$. This implies that the achieved query complexity in Theorem~\ref{thm:formal_finite_measurement} $Nm=\Omega(d^2/\tilde{\varepsilon}^2)$ is tight and optimal for quantum state tomography under the non-adaptive measurement with a constant number of outcomes.

\section{Quantum NFL theorem for generic measurements (Theorem~\ref{thm:formal_finite_measurement_POVM}) }\label{app_sec:lower_bound_POVM}
In this section, we elucidate how to derive the lower bound of prediction error when the learning process uses positive operator-valued measure (POVM) to read out the classical information from output states rather than directly estimating the statistical expectation $\Tr(\bm{O}\bm{U}\bm{\rho} \bm{U}^{\dagger})$ by measuring states $\bm{U}\bm{\rho} \bm{U}^{\dagger}$ with projective measurement $\bm{O}$. We recall that the lower bound is determined by two factors, namely, the cardinality of $2\varepsilon$-packing $|\mathcal{X}_{2\varepsilon}(\bm{O})|$ and the mutual information $\mathrm{I}(X;\hat{X})$ with $X$ and $\hat{X}$ being the target index and estimated index respectively. The transition role of entangled data comes from the bound of mutual information where the Schmit rank occurs in the denominator and numerator of a minimum term as given in Lemma~\ref{lem:sum_MI_upper_bound}. On the other hand, learning the same target function with different measurement approaches does not change the $2\varepsilon$-packing cardinality $|\mathcal{X}_{2\varepsilon}(\bm{O})|$, which only depends on the observable $\bm{O}$ as well as $\varepsilon$ and is independent of the interested quantities, i.e., Schmidt rank $r$, training data size $N$, and the number of measurement $m$. In this regard, we only need to focus on the upper bound of mutual information $\mathrm{I}(X;\hat{X})$ for employing POVM to read out classical information.

This section is organized as follows. We will first introduce the general POVM with $\ell$-outcome. Subsequently, we elaborate on the upper bound of mutual information $\mathrm{I}(X;\hat{X})$ for such POVM with $\ell$-outcome. Combining the established upper bound of $\mathrm{I}(X;\hat{X})$ with Lemma~\ref{lem:Fano} and Lemma~\ref{lem:packing_observable_lem}, the lower bound of the averaged prediction error in Eqn.~(\ref{eq:minimax_risk}) can be directly established.

\subsection{Single-copy POVM with \texorpdfstring{$\ell$}{Lg}-outcome}\label{app_sec:single_copy_POVM}
We formulate the definition of single-copy positive operator-valued measure (POVM) according to conventions in Ref.~\cite{lowe2022lower, chen2022exponential}. A POVM for $d$-dimensional quantum states refers to a set of linear map $\mathcal{M}:\mathcal{H}_d\to L(\ell)$ that act on quantum states $\bm{\rho} \in \mathcal{H}_d$, where $\ell$ is the number of possible outcomes. Specifically, a POVM with $\ell$ outcomes maps the state $\bm{\rho}$ as follows:
\begin{equation}
	\mathcal{M}: \bm{\rho} \mapsto \sum_{z\in \mathcal{Z}}\Tr(M_z \bm{\rho} ) \ket{\bm{z}}\bra{\bm{z}},
\end{equation}
where $\mathcal{M}_z\in \{\Psd(d)\}$ are positive semi-definite operators that satisfy $\sum_{z\in \mathcal{Z}} M_z = \mathbb{I}$ and $\mathcal{Z}$ denotes the set of $\ell$ possible outcomes of the measurement. Without loss of generality, we can assume $\mathcal{Z}=[\ell]$. In this work, we focus on measurements with a finite number of outcomes,
letting $\textbf{Meas}(d,\ell)$ denote the set of all $\ell$-outcome measurements on $d$-dimensional states.  
The probability distribution for obtaining an outcome $z$ upon measuring state $\bm{\rho}$ is given by the probability density function $\mathrm{p}_z = \Tr(M_z\bm{\rho})$ for each $z \in \mathcal{Z}$.

In this study, we only consider the case of employing single-copy POVM acting on the output states for collecting classical responses.
In particular, there is a single $d$-dimensional register that can be prepared in the state $\bm{\rho}$ upon measurement request. This register is measured once, and this process is repeated $m$ times. This class of measurements refers to ``single-copy measurements'', where the number of access to the quantum states corresponds to the number of measurements performed.  Within this framework, we distinguish between two significant approaches: nonadaptive and adaptive measurements. Here we only consider the nonadaptive measurements strategy with adaptive measurements being beyond the considerations of this manuscript. Particularly, we consider $m$ independent copies of the prepared state $\bm{\rho}\in \mathcal{H}_d$ of which each is stored in a single $d$-dimensional register and needs to be measured individually. In the nonadaptive measurement strategy, we utilize a predetermined sequence of measurements, denoted as $\mathcal{M}_i \in \textbf{Meas}(d,\ell)$ for $i = 1, \cdots, m$, to extract classical outcomes from quantum states. Each $\mathcal{M}_i$ is selected prior to conducting any measurements. More precisely, the collective state $\bm{\rho}^{\otimes m}$ is measured using a tensor product of these measurements, represented as $\mathcal{M}_1 \otimes \mathcal{M}_2 \otimes \cdots \otimes \mathcal{M}_m$. It is important to note that treating the selection of each measurement $\mathcal{M}_i$ as an independent random variable effectively parallels the predetermined approach. This is because the inherent randomness of measurement selection can be integrated into the definition of the measurement itself. Consequently, despite the potential for variability in selecting $\mathcal{M}_i$, the applied linear maps remain consistently aligned with the initially chosen fixed measurements for $d$-dimensional states.

\subsection{Mutual information in measurement outcomes}
In this section, we analyze the upper bound of the mutual information $\mathrm{I}(X:\hat{X})$ between the target index $X$ and estimated index $\hat{X}$. The techniques of deriving this upper bound follow the convention of Ref.~\cite{lowe2022lower} which studies the query complexity for quantum state tomography.
According to the data processing inequality, the problem can be reduced to bound the mutual information between the target index $X$ and the single POVM outcome $\bm{a}_{jk}$ on the output states $\bm{\rho}_j^{(X)}$ related to system $\mathcal{X}$, where $\bm{\rho}_j=\Tr_{\mathcal{R}}((\bm{U}_X \otimes \mathbb{I} \ket{\bm{\psi}_j}\bra{\bm{\psi}_j} (\bm{U}_X^{\dagger} \otimes \mathbb{I})$ is the output mixed state of entangled states $\ket{\bm{\psi}_j}$ and $\bm{a}_{jk}$ is the $k$-th measurement outputs of the POVM on state $\bm{\rho}_j^{(X)}$.
Additionally, for general POVM, we can not employ the Gaussian distribution to approximate the distribution of outcomes under projective measurement $\bm{O}=\ket{\bm{o}}\bra{\bm{o}}$, since the outcomes are now a vector independent with the observable $\bm{O}$. In this regard, bounding the KL-divergence between the conditioned output distribution is intractable as in Lemma~\ref{lem:MI_Gaussian}. The following lemma enables us to bound mutual information in terms of the $\chi^2$-divergence, which is more amenable to analysis in this context.
\begin{lemma}[Lemma 4.1 in Ref.~\cite{lowe2022lower}]
	\label{lem:MI_chi_div}
	Let $X$ be an arbitrary random variable and $a\in \mathcal{Z}$ be a discrete random variable for some sample space $\mathcal{Z}$. Denote by $\mathrm{p}_{a|x}:\mathcal{Z}\to [0,1]$ the distribution of $a$ conditioned on the event $X = x$. For an
	arbitrary discrete distribution $\mathrm{p}_{a}: \mathcal{Z} \to [0,1]$, it holds that
	\begin{equation}\label{eq:MI_chi_D}
		\mathrm{I}(X;a) \le \frac{1}{\ln(2)} \mathbb{E}_{X\sim \mathrm{p}_X} D_{\chi^2}(\mathrm{p}_{a|X}||\mathrm{p}_a)= \frac{1}{\ln(2)}\left(\sum_{a\in \mathcal{Z}}\mathbb{E}_{X\sim \mathrm{p}_X} \frac{\mathrm{p}_{a|X}(a)^2}{\mathrm{p}_a(a)} - 1  \right).
	\end{equation}
\end{lemma}

With this lemma, we can obtain the upper bound of the mutual information $\mathrm{I}(X;a)$ and further $\mathrm{I}(X;\hat{X})$ with data processing inequality. We encapsulated the above argument in the following lemma.
\begin{lemma}\label{lem:MI_POVM}
	Let $\mathcal{S}=\{\ket{\bm{\psi}_j}\in \mathcal{H}_{\mathcal{X}}\otimes \mathcal{H}_{\mathcal{R}}\}_{j=1}^N$ be the  entangled training states prepared according to the Construction Rule~\ref{construct:1} and $\sigma_j=\Tr_{\mathcal{R}}(\ket{\bm{\psi}_j}\bra{\bm{\psi}_j})$ be the related mixed states on the system $\mathcal{X}$. Let $\bm{U}_X$ be a Haar unitary in the local $2\varepsilon$-packing $\mathcal{P}_{2\varepsilon}^{(\gamma\varepsilon)}$ and $\bm{a}_i=(\bm{a}_{i1},\cdots,\bm{a}_{im})$ with $\bm{a}_{ij} \in \mathcal{Z}$ being the outcome of the $j$-th measurement $\mathcal{M} \in \Xi(d)$ performed
	on the evolved state $\bm{U}_X\bm{\rho}_i \bm{U}_{X}^{\dagger}$ such that $\mathrm{p}_{a_{ij}|\bm{U}_X} = \diag(\mathcal{M}(\bm{U}_X\bm{\sigma}_i \bm{U}_{X}^{\dagger}))$ for every $\bm{U}_X \in \mathbb{S}\mathbb{U}(d)$. The learning procedure employs the output $\tilde{\bm{a}}=(\bm{a}_1, \cdots, \bm{a}_n)$ to obtain the estimated $\hat{X}$. It holds that
	\begin{equation}
		\mathbb{E}_{\mathcal{S}}\mathrm{I}(X;\hat{X})\le  \frac{N}{r} \min\left\{4m, \frac{6m\ell}{d-1}, r\log(d) \right\}
	\end{equation}
	where $\ell=|\mathcal{Z}|$ refers to the number of measurement outcomes. 
\end{lemma}
The proof of Lemma~\ref{lem:MI_POVM} employs the following lemma whose proof is deferred to Supplementary Note~\ref{append_subsec:MI_POVM}.

\begin{lemma}\label{lem:E_Tr_M_rho}
	Fix an $\varepsilon\in (0,1)$ and a positive integer $d\ge 1$. Let $\bm{U}_X\in \mathbb{S}\mathbb{U}(d)$ be a Haar-random unitary operator in the $2\varepsilon$-packing $\mathcal{P}_{2\varepsilon}^{(\gamma\varepsilon)}$, $M\in \Psd(d)$ be a postive semidefinite operator such that $M \preceq \mathbb{I}_d$, $\bm{U}_X \bm{\rho} \bm{U}_X^{\dagger}$ be the output state with $\bm{\sigma}$ being a mixed state with rank $r$ and $\bm{U}_X$ chosen from the $2\varepsilon$-packing and $\omega:=\Tr(M)/d$. The index $\hat{X}$ is estimated according to the outcomes of $\mathcal{M}$. It holds that 
	\begin{equation}
		\mathbb{E}_{\bm{U}_X\sim \haar}\Tr(M\bm{U}_X \bm{\rho} \bm{U}_X^{\dagger}) = \omega
	\end{equation}
	and 
	\begin{equation}
		\mathbb{E}_{\bm{\rho}}\mathbb{E}_{\bm{U}_X\sim \haar}\Tr(M\bm{U}_X \bm{\rho} \bm{U}_X^{\dagger})^2 \le \omega^2\min\left\{1+\frac{2}{r}, \frac{d^2r\omega+2d}{r\omega(d^2-1)} \right\}.
	\end{equation}
\end{lemma}

\begin{proof}[Proof of Lemma~\ref{lem:MI_POVM}]
	Similar to the proof of Lemma~\ref{lem:sum_MI_upper_bound}, employing the data processing inequality on the Markov chain $X\to \bm{U}_x \to \tilde{\bm{a}} \to \hat{X}$ yields
	\begin{align}\label{eq:povm_mi_sum_single}
		\mathrm{I}(X;\hat{X}) & \le \mathrm{I}(X; \bm{a}_{11}, \cdots, \bm{a}_{1m}, \cdots, \bm{a}_{N1}, \cdots ,\bm{a}_{Nm}) 
		\nonumber \\
		& \le \sum_{i=1}^N\sum_{j=1}^m \mathrm{I}(X;\bm{a}_{ij}),
	\end{align}
	where the second inequality follows Property~\ref{property:subadd_MI} that the subadditivity of mutual information holds when given $X$ and $\bm{\rho}_j$ the outcomes $\{\bm{a}_{ij}\}_{i,j=1}^{N,m}$ are independent. Now we only need to focus on the mutual information $\mathrm{I}(X;a)$ between target index $x$ and single measurement output $a\in \mathcal{Z}$, which is upper bounded by the $\chi^2$-divergence between the PDFs $\mathrm{p}_{a}$ and $\mathrm{p}_{a|X}$. 
	Recalling the definition of $\chi^2$-divergence in Definition~\ref{def:chi_div} and denoting $\omega(a)=\Tr(M_a)/d$, we have 
	\begin{align}\label{eq:bound_chi_single}
		\mathbb{E}_{\mathcal{S}}\mathbb{E}_{\bm{U}_x \sim \haar} \mathrm{D}_{\chi^2}(\mathrm{p}_{a|u}\|\mathrm{p}_a) & = \sum_{a\in \mathcal{Z}} \mathbb{E}_{\mathcal{S}} \mathbb{E}_{\bm{U}_x \sim \haar}\frac{\mathrm{p}_{a|u}(a)^2}{\mathrm{p}_{a}(a)}-1
		\nonumber \\
		& = \sum_{a\in \mathcal{Z}} \mathbb{E}_{\mathcal{S}} \frac{\mathbb{E}_{\bm{U}_x \sim \haar}\Tr(M_a \bm{U}_x \bm{\sigma} \bm{U}_x^{\dagger})^2}{\mathbb{E}_{\bm{U}_x\sim \haar}\Tr(M_a \bm{U}_x \bm{\sigma} \bm{U}_x^{\dagger})}-1
		\nonumber \\
		& = \sum_{a\in \mathcal{Z}}  \frac{\mathbb{E}_{\mathcal{S}}\mathbb{E}_{\bm{U}_x \sim \haar}\Tr(M_a \bm{U}_x \bm{\sigma} \bm{U}_x^{\dagger})^2}{\omega(a)}-1
		\nonumber \\
		& \le \sum_{a\in \mathcal{Z}} \omega(a)\min\left\{1+\frac{2}{r}, \frac{d^2r\omega(a)+2d}{\omega(a)\cdot(d^2-1)r} \right\} -1
		\nonumber \\
		& = \min\left\{\frac{2}{r}, \frac{d^2r+2d\ell}{(d^2-1)r}-1 \right\}
		\nonumber \\
		& = \frac{1}{r}\min\left\{2, \frac{r+2d\ell}{d^2-1} \right\}
		\nonumber \\
		& \le \frac{1}{r}\min\left\{2, \frac{3\ell}{d-1} \right\}
	\end{align}
	where the expectation over $\mathcal{S}$ means taking the expectation over the randomly sampled states $\bm{\sigma}$ according to the Construction Rule~\ref{construct:1}, the second equality follows the form of the conditional probabilities $\mathrm{p}_{a|\bm{U}_x}(a)=\Tr(M_a \bm{U}_x \bm{\sigma} \bm{U}_x^{\dagger})$ with $M_a$ being the POVM corresponding to the measurement $\mathcal{M}$, and the PDF in the denominator $\mathrm{p}_a(a)=\mathbb{E}_{\bm{U}_x\sim \haar}\Tr(M_a \bm{U}_x \bm{\sigma} \bm{U}_x^{\dagger})$. Let $\omega(a)=\Tr(M_a)/d$ for all $a\in \mathcal{Z}$, the first inequality follows Lemma~\ref{lem:E_Tr_M_rho}, and the fourth equality follows the fact that $\sum_{a\in \mathcal{Z}}\omega(a)=\sum_{a\in \mathcal{Z}}\Tr(M_a)/d=1$ as $\sum_{a\in \mathcal{Z}} M_a = \mathbb{I}_d$, the last inequality follows that $r\le d\ell$ and $d/(d^2-1)\le 1/(d-1)$.

	In this regard, in conjunction with Eqn.~(\ref{eq:povm_mi_sum_single}), Eqn.~(\ref{eq:bound_chi_single}) and Lemma~\ref{lem:MI_chi_div}, we have 
	\begin{align}
		\mathbb{E}_{\mathcal{S}}\mathrm{I}(X;\hat{X}) & \le \mathbb{E}_{\mathcal{S}}\sum_{i=1}^N\sum_{j=1}^m \mathrm{I}(X;a_{ij})
		\nonumber \\
		& \le \sum_{i=1}^N\sum_{j=1}^m \frac{1}{(\ln{2})} \mathbb{E}_{\mathcal{S}}\mathbb{E}_{\bm{U}_X \sim \haar} \mathrm{D}_{\chi^2}(\mathrm{p}_{a_{ij}|\bm{U}_x}\|\mathrm{p}_{a_{ij}})
		\nonumber \\
		& \le \frac{Nm}{r(\ln{2})} \min\left\{2, \frac{3\ell}{d-1} \right\}
		\nonumber \\
		& \le \frac{Nm}{r} \min\left\{4, \frac{6\ell}{d-1} \right\}.
	\end{align}
	On the other hand, from the observation that the expectation value $\Tr((\bm{U}_X^{\dagger}O \bm{U}_X\otimes \mathbb{I})\ket{\bm{\psi}_j}\bra{\bm{\psi}_j})$ is the sufficient statistics of the output states $\bm{U}_X\otimes \mathbb{I}\ket{\bm{\psi}_j}$ in terms of $X$, leading to the Markov chain $X\to \bm{a}_j \to \Tr((\bm{U}_X^{\dagger}\bm{O} \bm{U}_X\otimes \mathbb{I})\ket{\bm{\psi}_j}\bra{\bm{\psi}_j}) \to \hat{X}$. In this regard, following Lemma~\ref{lem:mi_u_state_2}, we have $\mathrm{I}(X;\hat{X})\le Nr\log(d)$.
	This completes the proof.
\end{proof}

In conjunction with Lemma~\ref{lem:Fano}, Lemma~\ref{lem:packing_observable_lem}, and Lemma~\ref{lem:MI_POVM}, we finally come to the lower bound of prediction error for employing an $\ell$-outcome POVM, given in the following theorem.
\begin{theorem}[Formal statement of Theorem~2 in the maintext]
	\label{thm:formal_finite_measurement_POVM}
	Let $\{ \mathrm{f}_{\bm{U}_{x}}\}_{x \in \mathcal{X}_{2\varepsilon}^{(2\gamma\varepsilon)}}$ be a local $2\varepsilon$-packing with the maximal distance being $\gamma\varepsilon$ of the function class $\mathcal{F}$ in the $\varrho$-metric. Assume that the index $X$ corresponding to the target function $\mathrm{f}_{\bm{U}_{X}}$ is uniformly sampled from the set $\mathcal{X}_{2\varepsilon}^{(2\gamma\varepsilon)}$. Conditional on $X=x$, we obtain the training set $\mathcal{S}=\{\ket{\bm{\psi}_j}, \bm{a}_j \}_{j=1}^N$ where $\ket{\bm{\psi}_j}$ is the random entangled state of Schmidt rank $r$ sampled from the distribution described in Construction Rule \ref{construct:1}, and $\bm{a}_j=(\bm{a}_{j1}, \cdots, \bm{a}_{jm})$ is the outputs after measuring the mixed states $\bm{U}_X\Tr_{\mathcal{R}}(\ket{\bm{\psi}_j}\bra{\bm{\psi}_j} \bm{U}_X^{\dagger}$ for $m$ times with $\ell$-outcome POVM $\mathcal{M}$. Then the averaged risk function defined in Eqn.~(\ref{eq:minimax_risk}) is lower bounded by
	\begin{equation}\label{eq:lower_bound_POVM}
		\mathbb{E}_{\bm{U}}\mathbb{E}_{\mathcal{S}} \mathrm{R}_{\bm{U}}(\bm{V}_{\mathcal{S}}) \ge \frac{\tilde{\varepsilon}^2}{8d(d+1)} \left(1- \frac{N\cdot \min\{4m/r, 6m\ell/r(d-1), r\log(d))\}+16\log 2 }{\log(|\mathcal{X}_{2\varepsilon}(\bm{O})|)}  \right),
	\end{equation}
	where $\tilde{\varepsilon}=2\sqrt{2d(d+1)}{\varepsilon}$ and the expectation is taken over all target unitaries $\bm{U}$.
\end{theorem}
\begin{proof}
	[Proof of Theorem~\ref{thm:formal_finite_measurement_POVM}]. The lower bound in Eqn.~(\ref{eq:lower_bound_POVM}) can be obtained from the conjunction with Lemma~\ref{lem:est_test}, Lemma~\ref{lem:Fano} and Lemma~\ref{lem:MI_POVM}, as done in the proof of Theorem~\ref{thm:formal_finite_measurement}. 
\end{proof}
The term $\log(|\mathcal{X}_{2\varepsilon}(\bm{O}))|)$ measures the model complexity, which only depends on the employed observable in the target function and $\varepsilon$. While the $\varepsilon$-packing for the case of projective measurement $\bm{O}=\ket{\bm{o}}\bra{\bm{o}}$ is easy to construct as $\bm{U}^{\dagger}\bm{O}\bm{U}$ refers to pure state, it is hard to establish the lower bound of $\varepsilon$-packing for the case of general observable, which could remain as independent interest for future studies.

\subsection{Proof of Lemma~\ref{lem:E_Tr_M_rho}}\label{append_subsec:MI_POVM}
Here we present the proof of Lemma~\ref{lem:E_Tr_M_rho}.
\begin{proof}[Proof of Lemma~\ref{lem:E_Tr_M_rho}]
	Employing Lemma~\ref{lem:Tr(psipsi)} to calculate the Haar expectation over $\bm{U}_x$ yields
	\begin{equation}
		\mathbb{E}_{\bm{U}_x\sim \haar} \Tr(M\bm{U}_x \bm{\rho} \bm{U}_x^{\dagger}) = \frac{\Tr(M)}{d}.
	\end{equation}
	For the second expectation in the lemma, by employing Lemma~\ref{lem:Tr(psipsi)Tr(psipsi)} we have 
	\begin{align}
		\mathbb{E}_{\bm{U}_x\sim \haar} \Tr(M\bm{U}_x \bm{\rho} \bm{U}_x^{\dagger})^2 & = \frac{\Tr(M)^2(d-\Tr(\bm{\rho}^2)) + \Tr(M^2)(d\Tr(\bm{\rho}^2)-1)}{d(d^2-1)} 
		\nonumber \\
		& \le \frac{d^2\omega^2(d-\Tr(\bm{\rho}^2)) + \min\{d^2\omega^2, d\omega\}(d\Tr(\bm{\rho}^2)-1)}{d(d^2-1)} 
		\nonumber \\
		& = \frac{\omega^2}{d+1}\min\left\{d(1+\Tr(\bm{\rho}^2)), \frac{d^2\omega+d\Tr(\bm{\rho}^2)-1-d\omega\Tr(\bm{\rho}^2)}{\omega(d-1)} \right\}
		\nonumber \\
		& \le \frac{d\omega^2}{d+1}\min\left\{1+\Tr(\bm{\rho}^2), \frac{d\omega+\Tr(\bm{\rho}^2)}{\omega(d-1)} \right\},
	\end{align}
	where the first inequality follows the property $0 \preceq M \preceq \mathbb{I}_d$ such that $\Tr(M^2)\le \Tr(M)^2=d^2\omega^2$ and $\Tr(M^2)\le \Tr(M)=d\omega$, the second inequality follows the nonnegativity of $d\omega\Tr(\bm{\rho}^2)+1$. We further calculate the integration in terms of the training state according to the Construction Rule yields
	\begin{align}
		\mathbb{E}_{\bm{\rho}} \mathbb{E}_{\bm{U}_x\sim \haar} \Tr(M\bm{U}_x \bm{\rho} \bm{U}_x^{\dagger})^2 & \le \mathbb{E}_{\bm{\rho}} \frac{d\omega^2}{d+1}\min\left\{1+\Tr(\bm{\rho}^2), \frac{d\omega+\Tr(\bm{\rho}^2)}{\omega(d-1)} \right\}
		\nonumber \\
		& = \frac{d\omega^2}{d+1}\min\left\{1+\frac{2}{r}, \frac{dr\omega+2}{\omega(d-1)r} \right\}
		\nonumber \\
		& = \omega^2 \min\left\{1+\frac{2}{r}, \frac{d^2r\omega+2d}{r\omega(d^2-1)} \right\}
	\end{align}
	where the second equality employs $\Tr(\bm{\rho}^2)=2/(r+1)<2/r$ which follows the derivation in Eqn.~(\ref{eq:single_MI}) and the last inequality employs $d/(d+1)<1$. This completes the proof.
\end{proof}

\section{Numerical Simulations for the quantum NFL theorem}\label{app_sec:Numerical_NFL}
In this section, we provide the simulation details omitted in the main text and more numerical results.

\subsection{Details of numerical simulations}

\noindent\textbf{Constructions of target unitaries.} For the numerical simulations, we consider the set of all possible target unitaries $\mathcal{U}=\{\bm{U} \in \mathbb{S}\mathbb{U}(d)| \bm{U}_{1j}=e^{i\theta_j}, \theta_j \in \mathbb{R}, j \in[d]\}$, where the first column vector $\bm{U}_{1}$ is constrained to the form of $e^{i\theta_j}\ket{\bm{e}_j}$ with $\theta_j\in \mathbb{R}$ being any real number and $\ket{\bm{e}_j}$ referring to the computational basis, and the remaining column vectors could be arbitrary with the only requirement of being orthogonal with the first column vector. The fixed observable is $\bm{O}=\ket{\bm{0}}\bra{\bm{0}}$. In this regard, the substantial set of learning target operators refers to $\mathcal{U}_{O}=\{\bm{U}^{\dagger}\bm{O}\bm{U}=\ket{\bm{e}_j}\bra{\bm{e}_j}: j \in [d]\}$, where $\ket{\bm{e}_j}$ is the computation basis with the $j$-th entry being $1$ and the other being $0$. 
Remarkably, with the aim of constructing such a set of target operators $\mathcal{U}_{O}$, the observable could be any projective measurement $\ket{\bm{o}}\bra{\bm{o}}$ with $\ket{\bm{o}} \in \mathcal{H}_d$ being arbitrary state vectors when restricting the first column vector of the target unitary $\bm{U}$ on the form of $\bm{U}_1=\ket{\bm{o}}$. The target unitary $\bm{U}^*$ is uniformly sampled from the set $\mathcal{U}$. Denote the corresponding target operator as $\bm{U}^{*\dagger}\bm{O}\bm{U}^*=\ket{\bm{e}_{k^*}}\bra{\bm{e}_{k^*}}$, learning the target unitary under the fixed observable $\bm{U}^{*\dagger}\bm{O}\bm{U}^*$ is equivalent to identifying the unknown index $k^*\in[d]$.

\noindent \textit{Remark.} The construction of orthogonal operators $\bm{U}^{\dagger}\bm{O}\bm{U}$ leads to the best distinguishability. Additionally, any discretized function class $\mathcal{F}$ with $\varrho(\bm{U}_i, \bm{U}_j)\ge \varepsilon$ for any $\bm{U}_i, \bm{U}_j \in \mathcal{F}$ could be chosen as the set of target unitaries. A small $\varepsilon$ leads to worse distinguishability, and hence a large number of measurements is required for correctly estimating the target unitary.

\medskip

\noindent	\textbf{Constructions of training data.} We first recall that the entangled training states $\{\ket{\bm{\psi}_j}\}_{j=1}^N$ defined in Eqn.~(\ref{eq:entangled_state}) have the form of $\ket{\bm{\psi}}=\sum_{k=1}^r \sqrt{c_{jk}}\ket{\bm{\xi}_{jk}}_{\mathcal{X}}\ket{\bm{\zeta}_{jk}}_{\mathcal{R}}$ where $\sum_{k=1}^r c_{jk}=1$, and $\mathcal{X}$, $\mathcal{R}$ refer to the quantum system the target unitary acts on and the reference system, respectively. In this regard, we only need to consider the quantum states $\ket{\bm{\xi}_{jk}}$ related to the quantum system $\mathcal{X}$ as the target operators only acts on the partial trace of the entangled state, i.e., $\bm{\sigma}_j:=\operatorname{Tr}_{\mathcal{R}}\left[\ket{\bm{\psi}_j}\bra{\bm{\psi}_j}\right]=\sum_{k=1}^r c_{j, k}\ket{\bm{\xi}_{j, k}}\bra{\bm{\xi}_{j, k}}_{\mathcal{X}}$. Particularly, we consider the set of mixed states 
\begin{equation}
	\widetilde{\mathcal{S}}=\left\{\bm{\sigma} =\sum_{k=1}^r c_{k} \ket{\bm{e}_{\pi(k)}}\bra{\bm{e}_{\pi(k)}}: \pi\in S_d, \ket{c}=(\sqrt{c_{1}}, \cdots, \sqrt{c_{r}})^{\top}\in \mathbb{S}\mathbb{U}(r), \ket{\bm{e}_{\pi(k)}} \in \mathcal{H}_{\mathcal{X}} \right\},
\end{equation}
where $\ket{\bm{c}}\in \mathbb{S}\mathbb{U}(r)$ follows the Haar distribution and is independent with $\ket{\bm{e}_{\pi(k)}}$,  $\pi \in S_d$ refers to the permutation operator rearranging the order of the set $\{1, \cdots, d\}$ as $\{\pi(1), \cdots, \pi(d)\}$, $\mathcal{H}_{\mathcal{X}}$ is the Hilbert space $\mathcal{H}_d$ related to the quantum system $\mathcal{X}$. The $N$ training mixed states $\{\bm{\sigma}_j\}_{j=1}^N$ are uniformly and independently sampled from the set $\widetilde{\mathcal{S}}$. Remarkably, for the case of $r=1$, the ensemble consisting of the mixed states set $\widetilde{\mathcal{S}}$ equipped with the uniform distribution forms $2$-design for which the second moment is the same as that of the random states following the Haar distribution, according with Construction Rule \ref{construct:1}. Furthermore, the measurement outputs $(\bm{o}_{j1}, \cdots, \bm{o}_{jm})$ are collected by measuring the output states $\bm{U}^{*}\bm{\sigma}_j(\bm{U}^{*})^{\dagger}$ for $m$ times with the observable $\bm{O}$. We use the statistical mean value of the measurement outputs $\bm{o}_j = \sum_{k=1}^m \bm{o}_{jk}/m$ as the response of the training states $\ket{\bm{\psi}_j}$ (or equivalently $\bm{\sigma}_j$).

\medskip

\noindent	\textbf{Quantum dynamics learning process.} As discussed above, learning the target unitary $\bm{U}^*$ under the projective measurement $\bm{O}$ is equivalent to identifying the target index $k^*$ corresponding to the target unitary. In this end, for given input states $\{\ket{\bm{\psi}_j}\}_{j=1}^N$, we collect the measurement outputs $\{(\bm{o}_1^{(k)}, \cdots, \bm{o}_{N}^{(k)})\}_{k=1}^d$ over all possible index $k\in[d]$, where $\bm{o}_j^{(k)}$ is the statistical estimation of $\Tr(\bm{\sigma}_j\ket{\bm{e}_k}\bra{\bm{e}_k})$. The estimated index $\hat{k}$ is determined by solving the following minimization problem
\begin{equation}\label{eq:training_process}
	\hat{k} = \arg\min_{k\in [d]} \sum_{j=1}^N \left(\bm{o}_j^{(k)}-\bm{o}_j\right)^2.
\end{equation}
This minimization of Eqn.~(\ref{eq:training_process}) refers to the training process, which can be accomplished through direct calculation. Finally, the corresponding target operator $\bm{V}_\mathcal{S}^{\dagger}\bm{O}\bm{V}_{\mathcal{S}}=\ket{\bm{e}_{\hat{k}}}\bra{\bm{e}_{\hat{k}}}$ is exploited to predict the output of any unseen input states $\ket{\bm{\phi}}\in \mathcal{H}_d$. Notably, we only consider the relation between the prediction error and the resource used for collecting the training data, including the training data size $N$, the Schmidt rank $r$ of the entangled state, and the number of measurements $m$. We do not consider the measurement cost during the training process.

\begin{figure}  
	\centering
	\includegraphics[width=176mm]{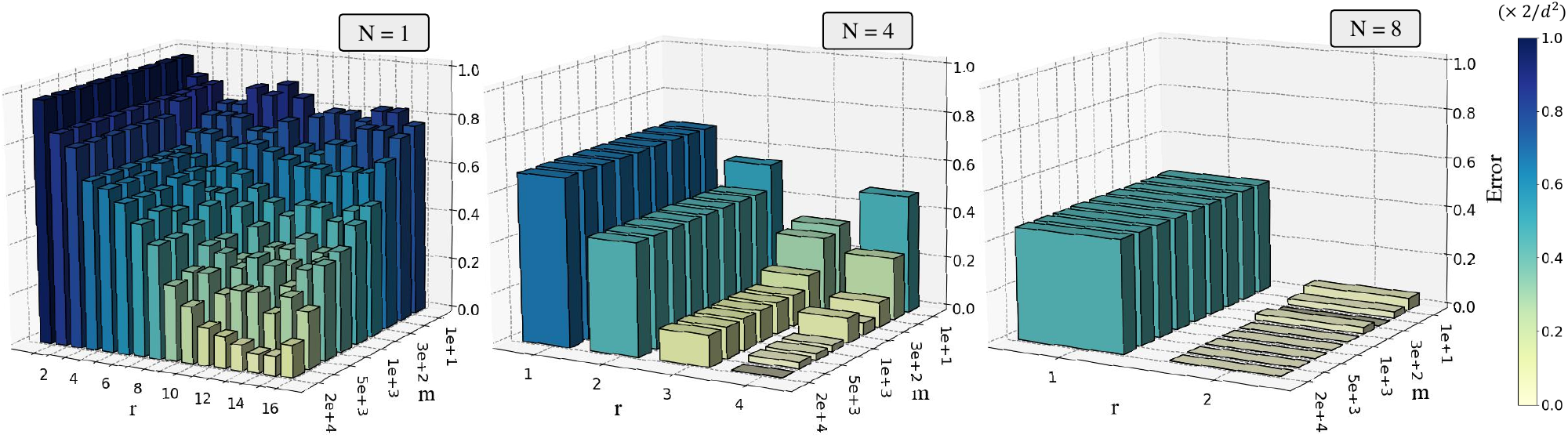}
	\caption{{\textbf{Simulation results of quantum NFL theorem with orthogonal training states.}  The averaged prediction error with a varied number of measurements $m$ and Schmidt rank $r$ when $N=1$, $N=4$,  and $N=8$. The z-axis refers to the averaged prediction error defined in Eqn.~(\ref{eq:learning_model}). The label `($\times 2/d^2$)' refers that the plotted prediction error is normalized by a multiplier factor $2/d^2$.}   }
	\label{appendix_fig:numerical_orth_m_r}
\end{figure}

\begin{figure} 
	\centering
	\includegraphics[width=\textwidth]{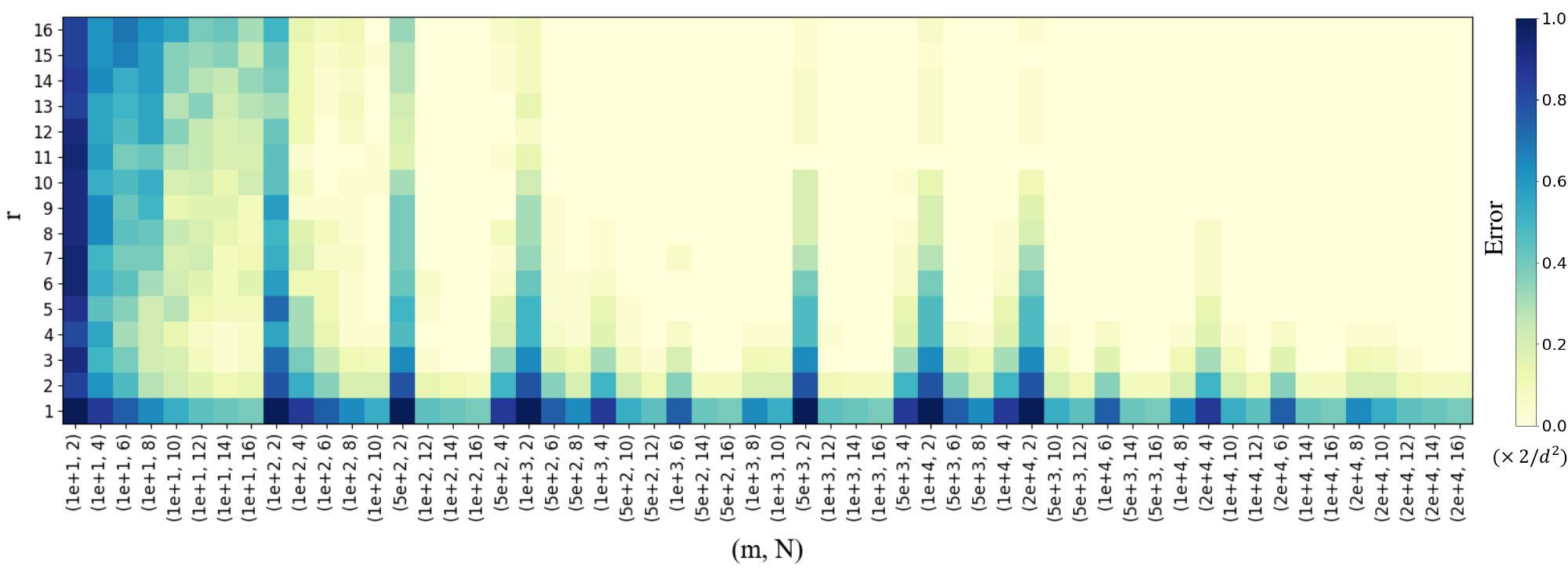}
	\caption{{\textbf{Simulation results of quantum NFL theorem with independent training states.}  The averaged prediction error with a varied number of measurements $m$, size of training data $N$, and Schmidt rank $r$. The x-axis is arranged from left to right by the magnitude of the product of $m$ and $N$.  The label `($\times 2/d^2$)' refers that the plotted prediction error is normalized by a multiplier factor $2/d^2$. } }
	\label{appendix_fig:Numerical_Varying_mN_r}
\end{figure}

\medskip

\noindent	\textbf{Hyper-parameters setting.} The hyper-parameters setting related to the task of quantum dynamics learning is as follows. The number of qubits related to the quantum system $\mathcal{X}$ is set as $4$, and hence the corresponding dimension of Hilbert space $\mathcal{H}_{\mathcal{X}}$ is $d=16$. The Schmidt rank $r$ and the size of training data $N$ are set as $\{1,2, \cdots, 16\}$. We set the number of measurements as 
\begin{equation}
	m \in \{10, 100, 300, 500, 1000, 2000, 5000, 10000, 20000\}.
\end{equation}
The averaged risk is recorded after learning $4$ random unitaries for $10$ random training data.

\subsection{More numerical results}

\noindent	\textbf{Simulation results for orthogonal training states.} Supplementary Figure~\ref{appendix_fig:numerical_orth_m_r} plots the prediction error when the sampled training states are orthogonal, i.e., $\braket{\bm{\psi}_i|\bm{\psi}_j}=0$ for any $i\ne j$. In this case, the product of the Schmidt rank $r$ and the size of training data $N$ obeys $r\times N \le 2^n$ as the number of mutually orthogonal states in the Hilbert space $\mathcal{H}_{\mathcal{X}}$ is less than the dimension of this space $d=2^n$. The simulation results show that the prediction error vanishes when the product of the Schmidt rank and the size of the training data $rN$ equals the dimension $d$. Additionally, for the case of a small number of measurements $m$, increasing the Schmidt rank $r$ may increase the prediction error, indicating the transition role of entangled data. These phenomenons accord with Theorem~\ref{thm:formal_finite_measurement}.

\medskip

\noindent	\textbf{Simulation results for independently sampled training states.} Supplementary Figure~\ref{appendix_fig:Numerical_Varying_mN_r} plots the prediction error with a varied number of measurements $m$, size of training data $N$, and Schmidt rank $r$. The simulation result shows that for the case of the identical magnitude of the product of the measurement times and the training data size (e.g., $(m,N) \in {(100, 10), (500, 2)}$), the tuple of a larger amount of training data and a smaller number of measurements in each training state $(m, N)=(100, 10)$ can achieve a smaller prediction error. This indicates that when the access times to the unknown target unitary is limited, the measurement outcomes should be collected from more different training states rather than from a few quantum states with a large number of measurements. These phenomenons accord with Theorem~\ref{thm:formal_finite_measurement}.

Moreover, in order to show various performance in the regime of $r<\sqrt{m/c_1n}$ and $r\ge \sqrt{m/c_1n}$, we record the prediction error of Schmidt rank taking $r$ with a varying number of measurements $m$ in Fig~\ref{fig:fig1}. In particular, when the number of measurements is small with $m=10$, the prediction error for the case of $r=2$ achieves the best performance over other highly entangled data for both cases of training data size being $N=2$ and $N=8$. On the other hand, when the number of measurements increases to $m=100$, the entangled data with $r=2$ reaches the best prediction performance in itself and keeps invariant with increasing $m$ but has the worst performance than other highly entangled data. This means $m=100$ with $r=2$ lies in the region of $r<\sqrt{m/c_1n}$ where the number of measurements is enough for extracting all information from the entangled states. In this case, increasing the entanglement degree $r$ helps to obtain more information about the target unitary and hence decreases the prediction error. Notably, these numerical results echo with our theoretical results.

\medskip
\begin{figure}
	\centering
	\includegraphics[width=176mm]{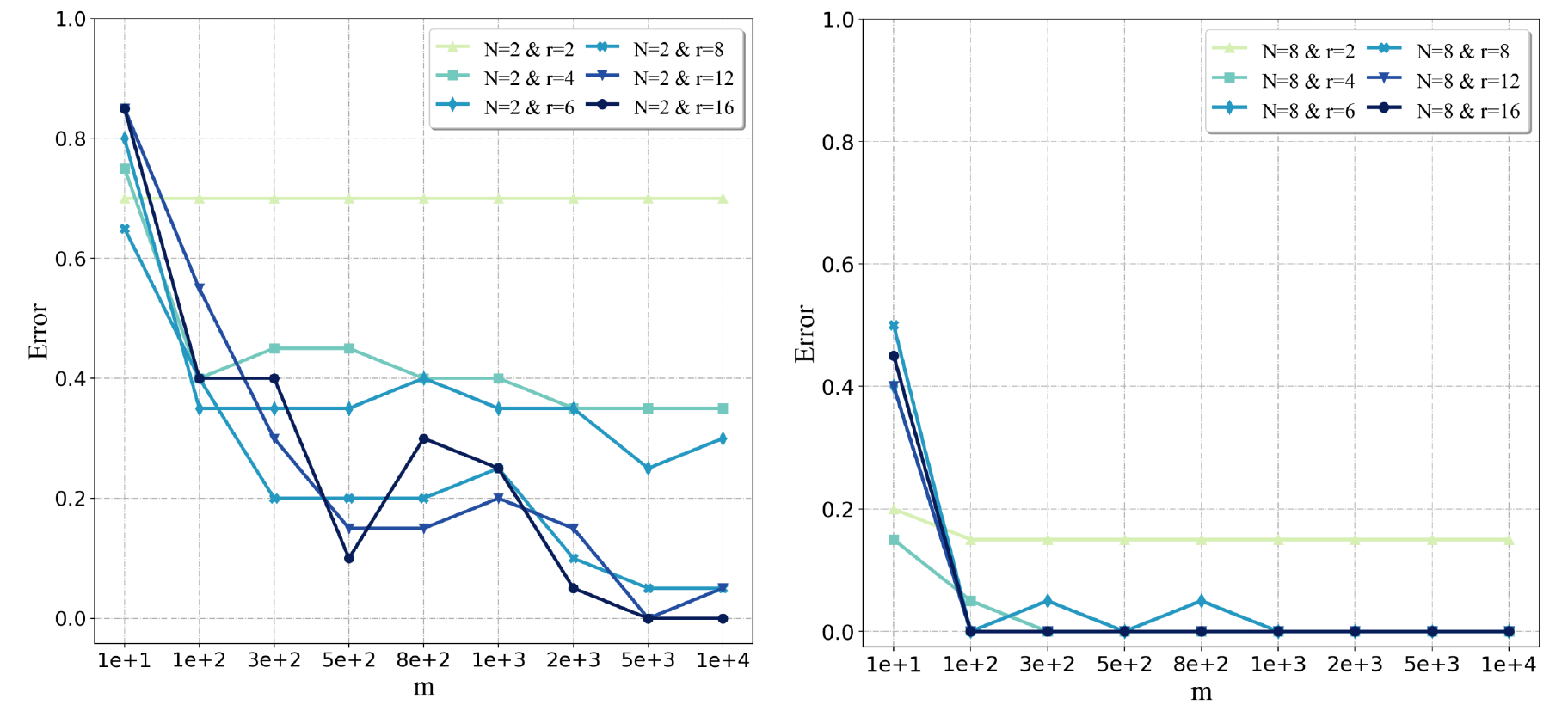}
	\caption{\small{\textbf{Prediction error with various Schmidt rank and number of measurements.} The number of Schmidt rank and the number of measurements are set as $r\in \{2,4,6,8,12,16 \}$ and $m \in \{10, 100, 300, 500, 800, 1000, 2000,$ 
			$5000, 10000\}$, respectively.  }}
	\label{fig:fig1}
\end{figure}

\begin{figure} 
	\centering
	\includegraphics[width=176mm]{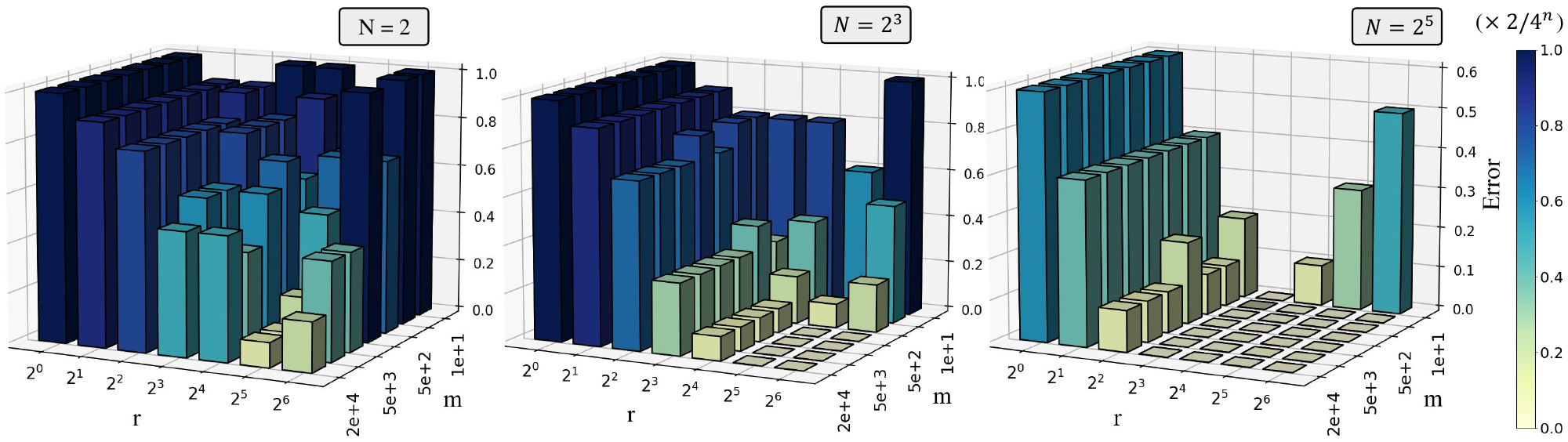}
	\caption{\small{\textbf{Simulation results of quantum NFL theorem for $6$-qubits.} The averaged prediction error with a varied number of measurements $m$ and Schmidt rank $r$ when $N=2$, $N=2^3$, and $N=2^5$. The z-axis refers to the averaged prediction error defined in Eqn.~(\ref{eq:learning_model}). The label `($\times 2/4^n$)' refers that the plotted prediction error is normalized by a multiplier factor $2/4^n$. }}
	\label{fig:6-qubits}
\end{figure}

\begin{figure} 
	\centering
	\includegraphics[width=176mm]{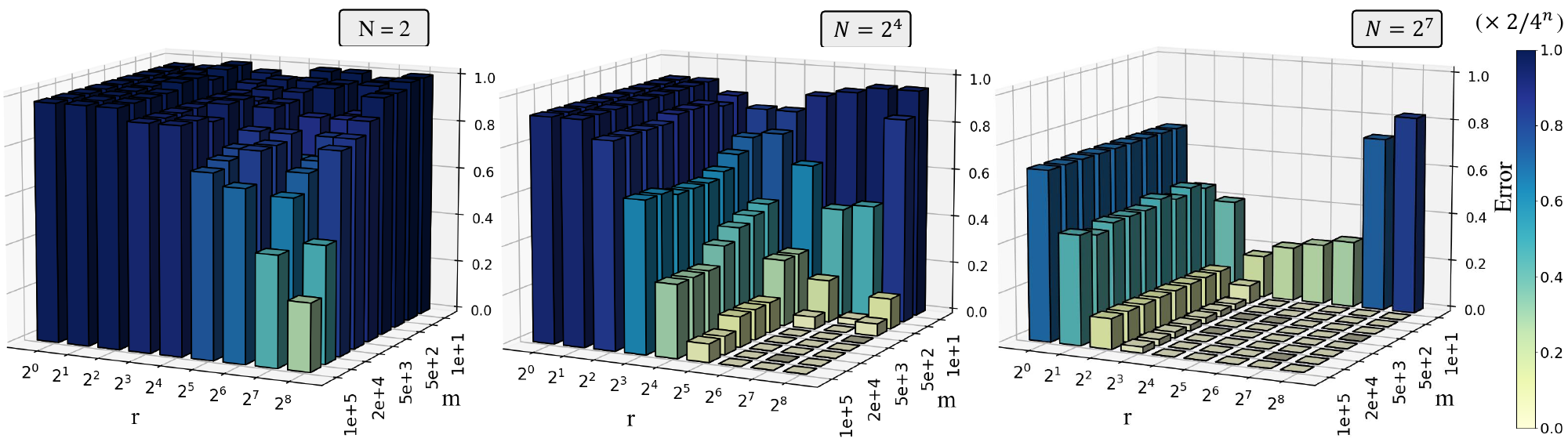}
	\caption{\small{\textbf{Simulation results of quantum NFL theorem for $8$-qubits.} The notations are identical to those in Supplementary Figure~\ref{fig:6-qubits}. }}
	\label{fig:8-qubits}
\end{figure}

\noindent  \textbf{Large scale simulation results.} To further verify the achieved results in Theorem~\ref{thm:formal_finite_measurement}, we conduct numerical simulations for the number of qubits $n=6$ and $n=8$. The setting of other hyperparameters is as follows. The Schmit rank $r$ and the training data size $N$ are set as $\{2^0, 2^1, \cdots, 2^6\}$ and $\{2^0, 2^1, \cdots, 2^8\}$ for the case of $n=6$ and $n=8$ respectively. The number of measurements for $n=6$ and $n=8$ is set as 
\begin{align}
	& m\in \{10, 100, 500, 1000, 5000, 10000, 20000\} \nonumber \\
	\quad \mbox{and} \quad & m\in \{10, 100, 500, 1000, 5000, 10000, 20000, 50000, 100000\}.
\end{align}
The simulation results for $n=6$ and $n=8$ are depicted in Supplementary Figure~\ref{fig:6-qubits} and Supplementary Figure~\ref{fig:8-qubits}, respectively.

The simulation results show that the transition role of entangled data still occurs for large quantum systems. Particularly, for a small number of measurements, increasing the Schmidt rank $r$ of entangled data could first decrease the prediction error, and then increase the prediction error once $r$ surpasses some critical point. When the number of measurements $m$ is sufficiently large, increasing the Schmidt rank $r$ constantly decreases the prediction error. For instance, we can see from the results of $8$-qubit and training data size $N=2^7$ that for a small number of measurements $m=10$, increasing the Schmidt rank $r$ will first decrease the prediction error in the regime of $r\le \sqrt{m/c_1 n}$ and then increase the prediction error once the Schmidt rank $r$ surpass a critical point such that $r\ge \sqrt{m/c_1 n}$.
These phenomenons accord with the theoretical results of Theorem~1 in the original manuscript.

\medskip

\noindent	\textbf{Simulation results for the training error.} To verify that the training error scales as $\mathcal{O}(1/2^n)$, we record the training error, given by $\sum_{j=1}^N (\bm{o}_j^{(\hat{k})}-\bm{o}_j)^2/N$ with $\hat{k}$ being the estimated index according to Eqn.~(\ref{eq:training_process}), for the varying number of qubits $n\in\{4,5,6,7,8 \}$ and varying Schmidt rank $\{2,4,8,16\}$. The training data size is set as $N=16$ and the number of measurements is set as $m\in\{10,100,1000\}$. The numerical results are presented in Supplementary Figure~\ref{fig:loss}, where the training error decreases with scaling $2^n$ as the system size $n$ increases. This indicates that the training error of scaling $\mathcal{O}(1/2^n)$ is easy to reach during the training process.

\begin{figure}[h!]
	\centering
	\includegraphics[width=176mm]{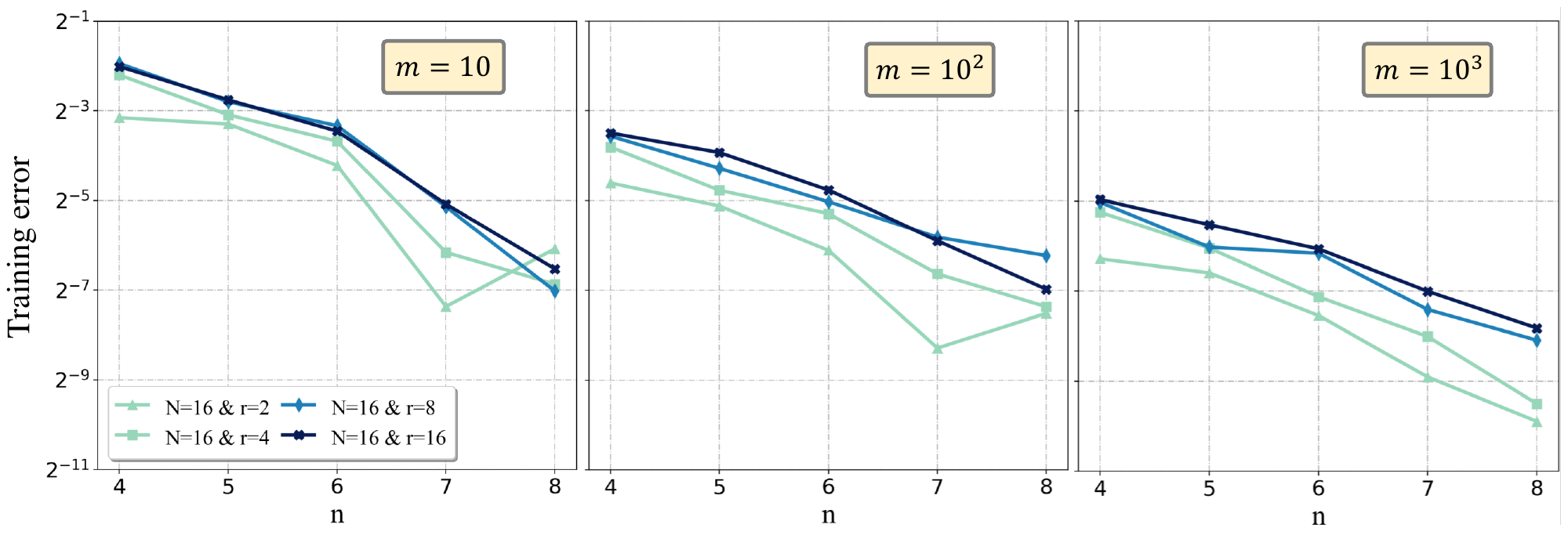}
	\caption{\small{\textbf{Training error with varying system size.} The left panel, middle panel, and right panel present the training error for the number of measurements $m=10, 100, 1000$, respectively.  }}
	\label{fig:loss}
\end{figure}

\end{document}